\newcommand{\E}{\mathbb{E}}
\newcommand{\cov}{\mathrm{Cov}}
\newcommand{\diag}{\mathrm{diag}}
\newcommand{\LR}{\bold{R}^{\frac{1}{2}}}
\newcommand{\RT}{\bold{T}^{\frac{1}{2}}}
\newcommand{\FS}{{\bold{S}}}
\newcommand{\LC}{\bold{C}^{\frac{+}{2}}}
\newcommand{\RC}{\bold{C}^{\frac{-}{2}}}
\newcommand{\FC}{{\bold{C}}}
\newcommand{\LF}{\bold{F}^{\frac{1}{2}}}
\newcommand{\MC}{\bold{C}^{\frac{1}{2}}}
\newcommand{\FF}{\bold{F}}
\newcommand{\Bs}{\bold{s}}
\newcommand{\FR}{{\bold{R}}}
\newcommand{\FT}{{\bold{T}}}
\newcommand{\BK}{{\bold{K}}}
\newcommand{\BP}{{\bold{P}}}
\newcommand{\BU}{{\bold{U}}}
\newcommand{\RU}{{\bold{U}^{\frac{1}{2}}}}
\newcommand{\BQ}{{\bold{Q}}}
\newcommand{\BH}{{\bold{H}}}
\newcommand{\BI}{{\bold{I}}}
\newcommand{\BX}{{\bold{X}}}
\newcommand{\BZ}{{\bold{Z}}}
\newcommand{\BY}{{\bold{Y}}}
\newcommand{\BPS}{{\bold{\Psi}}}
\newcommand{\BG}{{\bold{G}}}
\newcommand{\BT}{{\bold{T}}}
\newcommand{\Bh}{{\bold{h}}}
\newcommand{\Bxi}{{\bm{\xi}}}
\newcommand{\By}{{\bold{y}}}
\newcommand{\Bw}{{\bold{w}}}
\newcommand{\BA}{{\bold{A}}}
\newcommand{\BB}{{\bold{B}}}
\newcommand{\BC}{{\bold{C}}}
\newcommand{\BW}{{\bold{W}}}
\newcommand{\BM}{{\bold{M}}}
\newcommand{\BL}{{\bold{L}}}
\newcommand{\BF}{{\bold{F}}}
\newcommand{\BO}{{\mathcal{O}}}
\newcommand{\BXi}{{\bold{\Xi}}}
\DeclareMathOperator{\Tr}{Tr}
\newcommand{\RNum}[1]{\uppercase\expandafter{\romannumeral #1\relax}}
\newtheorem{remark}{Remark}
\newtheorem{theorem}{Theorem}
\newtheorem{lemma}{Lemma}
\newtheorem{proposition}{Proposition}
\newcommand\numberthis{\addtocounter{equation}{1}\tag{\theequation}}
\begin{document}

\title{Fluid Antenna Meets RIS: Random Matrix Analysis and Two-Timescale Design for Multi-User Communications}


\author{Xin~Zhang, Dongfang Xu, Jingjing~Wang,~\IEEEmembership{Senior Member,~IEEE},~Shenghui~Song,~\IEEEmembership{Senior Member,~IEEE}, 
       \\ 
       Derrick Wing Kwan Ng,~\IEEEmembership{Fellow, IEEE}, and M\'erouane Debbah,~\IEEEmembership{Fellow, IEEE}
\thanks{

X. Zhang and J. Wang are with the School of Cyber Science and Technology, Beihang University, Beijing 100191, China. (\{zhangxinn, drwangjj\}@buaa.edu.cn)

D. Xu is with the Department of Electronic and Computer Engineering, The Hong Kong University of Science and Technology, Hong Kong.
(eedxu@ust.hk)

S. H. Song is  with the Division of Integrative Systems and Design and the Department of Electronic and Computer Engineering at the Hong Kong University of Science and Technology, Hong Kong. (eeshsong@ust.hk)

D. W. K. Ng is with the School of Electrical Engineering and Telecommunications, University of New South Wales, Sydney, NSW 2052,
Australia (e-mail: w.k.ng@unsw.edu.au)

M. Debbah is with  KU 6G Research Center, Department of Computer and Information Engineering, Khalifa University, Abu Dhabi 127788, UAE (email: merouane.debbah@ku.ac.ae) and also with CentraleSupelec, University Paris-Saclay, 91192 Gif-sur-Yvette, France.
}}

\IEEEaftertitletext{\vspace{-2\baselineskip}}
\maketitle

\begin{abstract} The reconfigurability of fluid antenna systems (FASs) and reconfigurable intelligent surfaces (RISs) provides significant flexibility in optimizing channel conditions by jointly adjusting the positions of fluid antennas and the phase shifts of RISs. However, it is challenging to acquire the instantaneous channel state information (CSI) for both fluid antennas and RISs, while frequent adjustment of antenna positions and phase shifts will significantly increase the system complexity. To tackle this issue, this paper investigates the two-timescale design for FAS-RIS multi-user systems with linear precoding, where only the linear precoder design requires instantaneous CSI of the end-to-end channel, while the FAS and RIS optimization relies on statistical CSI. The main challenge comes from the complex structure of channel and inverse operations in linear precoding, such as regularized zero-forcing (RZF) and zero-forcing (ZF). Leveraging on random matrix theory (RMT), we first investigate the fundamental limits of FAS-RIS systems with RZF/ZF precoding by deriving the ergodic sum rate (ESR). This result is utilized to determine the minimum number of selected antennas to achieve a given ESR. Based on the evaluation result, we propose an algorithm to jointly optimize the antenna selection, regularization factor of RZF, and phase shifts at the RIS. Numerical results validate the accuracy of performance evaluation and demonstrate that the performance gain brought by joint FAS and RIS design is more pronounced with a larger number of users.
\end{abstract}

\begin{IEEEkeywords}
Fluid antenna systems (FAS), reconfigurable intelligent surface (RIS), multiple-input single-output (MISO), random matrix theory (RMT).
\end{IEEEkeywords}

\IEEEpeerreviewmaketitle

\section{Introduction}

Over the past two decades, multiple-input multiple-output (MIMO) technology has emerged as a key enabler for enhancing both the throughput and reliability of wireless communication systems~\cite{lu2014overview,liu10845870}. It is widely recognized that increasing the number of antennas can lead to higher spectral efficiency due to greater spatial multiplexing gains. However, the rising hardware costs and increased radio-frequency (RF) energy consumption present significant challenges in deploying a large number of RF chains. Consequently, selecting antenna positions that optimize channel conditions becomes essential. To address this, Wong~\textit{et al.} introduced fluid antenna systems (FASs) featuring position-flexible antennas that leverage intrinsic spatial diversity~\cite{wong2020fluid}. Their findings demonstrated that selecting a single port at the FAS could achieve better outage performance compared to fixed-position antenna (FPA) systems. Given their promising performance, FASs have recently emerged as a critical technology for sixth-generation (6G) wireless networks, attracting considerable research interests.


Recent advances~\cite{new2023information,new2023fluid,9715064,10309171} have illustrated the advantages of FASs in enhancing rate and reliability in point-to-point communications. Additionally, FASs have been applied in multi-user communication scenarios. For instance, Wong~\textit{et al.}~\cite{kk_FAMA} explored multi-user FASs by activating the port with the highest signal-to-interference ratio, showcasing the potential of fluid antennas (FAs) to support a large number of users within a compact space of just a few wavelengths. New~\textit{et al.}~\cite{wee2023fluid} focused on maximizing the sum rate for orthogonal multiple access (OMA) and non-orthogonal multiple access (NOMA) through optimal port selection and power allocation. Wu~\textit{et al.}\cite{wu2023movable} proposed a power minimization algorithm for multiple-input single-output (MISO) downlink systems with practical discrete FA positions, while Qin~\textit{et al.}~\cite{sqin2024antenna} examined a FAS MISO system where the base station (BS) and users are equipped with FPAs and FAs, respectively.

In parallel, reconfigurable intelligent surfaces (RISs) have gained significant attention for their ability to construct favorable propagation environments~\cite{8741198,yu2020robust}. By harnessing the reconfigurability of both FAs and RISs, one can jointly optimize the port (antenna) selection of FAs and the phase shifts of RISs to enhance communication performance. However, research on FAS-RIS systems is still in its early stages, with most existing studies focusing on single-port selection~\cite{yao2025fas,lai2024fas}. Consequently, the achievable performance of FAS-RIS systems with multiple selected ports remains unexplored. More importantly, various studies on FASs assume perfect instantaneous channel state information (CSI)~\cite{sqin2024antenna,wu2023movable}, which is challenging to obtain in practical scenarios for two main reasons. First, it is of high complexity to acquire instantaneous CSI. In particular, FASs typically consist of a large number available ports but limited number of RF chains, making it prohibitively costly to acquire instantaneous CSI of all ports for port selection. Moreover, it is challenging to obtain the instantaneous CSI for the base station-RIS and RIS-user links due to the passive nature of RISs~\cite{zappone2021intelligent}. Second, even if perfect CSI can be efficiently obtained, frequently switching the ports of FAs and tuning phase shifts to adapt to instantaneous CSI introduces significant computation and implementation complexity.

To address these challenges, we propose a two-timescale design approach that optimizes the ergodic sum rate instead of instantaneous performance~\cite{han2019large,zhi2022two}. In this framework, port selection and phase shifts are optimized based solely on statistical CSI (e.g., channel correlation and path loss), thereby reducing the frequency of updates for the port selection of FAs and phase shifts of RISs. The concerned FAS-RIS design is different from RIS-aided MIMO systems with antenna selection since existing RIS-aided systems which mainly focus on either point-to-point MIMO systems~\cite{he2021reconfigurable} or multi-user systems with instantaneous channel state information of individual links~\cite{xu2022antenna,abdullah2022low}. To the best of the authors' knowledge, the fundamental limits of FAS-RIS multi-user systems and the associated two-timescale design are not yet available in the literature.


In this paper, we investigate the fundamental limits and two-timescale design for FAS-RIS multi-user systems. Given the regularized zero-forcing (RZF) and zero-forcing (ZF) precoding have been shown to effectively manage the interference between the users in a low-complexity manner~\cite{kammoun2019asymptoticit,kammoun2019asymptotic,wagner2012large}, we will investigate the two-timescale design for FAS-RIS systems with RZF/ZF precoding. Different from existing works~\cite{yao2025fas,lai2024fas} that only considered one-port selection for point-to-point communications, multi-port selection is considered to serve multiple users. Specifically, we maximize the ergodic sum rate (ESR) by jointly optimizing the port selection, the regularization parameter of the RZF precoding, and phase shifts of the RIS. 


\subsection{Challenges}
The two-timescale design of  FASs and RISs with RZF precoding faces several significant challenges, primarily due to the lack of closed-form evaluations for the ESR. While the ESR has been extensively studied in traditional MISO systems~\cite{wagner2012large} and RIS-aided systems without direct links~\cite{zhang2023irs}, the analysis specific to multi-user FAS-RIS systems with direct link and RZF precoding is currently absent from the literature. One of the main difficulties arises from the nature of the end-to-end channel matrix, which is composed of both a random matrix and a product of random matrices. This complexity makes the evaluation of the ESR considerably more intricate than in traditional single-hop MISO systems~\cite{wagner2012large} and two-hop channels~\cite{zhang2022secrecy,zhang2023irs}. Additionally, the inverse operation involved in the RZF/ZF precoding matrix, along with its dependence on the channel vectors, adds further complexity to the analysis. Moreover, the dense deployment of ports in FASs introduces significant correlation among them, which cannot be overlooked. This correlation is crucial for effective port selection and exhibits complex and varied structures~\cite{ramirez2024new,wong2022closed}, complicating the evaluation of the fundamental limits of FAS-RIS systems. Finally, the implicit nature of the ESR concerning the port selection vector, regularization factor, and phase shifts presents additional challenges for the two-timescale design.

\subsection{Contributions}

The main contributions of this paper are summarized as follows.

\textit{1. Closed-Form Evaluation of SINR and ESR}: We derive a closed-form evaluation for the signal-to-interference-plus-noise ratio (SINR) and per-user ergodic rate of FAS-RIS MISO systems employing RZF and ZF precoding, while considering the direct link between the BS and users. Our results accommodate arbitrary antenna correlation matrices for both FASs and RISs, thereby facilitating practical FAS design. We analyze two system scenarios: uncommon correlation, where different users have distinct correlation matrices at the FAs, and common correlation, where users share a common correlation matrix at the RIS. The derived results can degenerate to ESR evaluations over single-hop and two-hop channels. Notably, the proposed approach establishes a general framework for evaluating the ESR of RIS-aided MISO broadcasting systems with direct links, a topic that has not been addressed in the existing literature.

\textit{2. Physical Insights into ESR Performance}: We provide valuable physical insights by deriving the ESR with ZF precoding over independent and identically distributed (i.i.d.) channels. We explicitly demonstrate how several key system parameters—such as the number of selected ports, users, and reflecting elements at the RIS. Our findings indicate that the FAS-RIS system requires fewer selected ports than traditional FAS systems to achieve the same ESR. Additionally, we show that the ESR with ZF precoding outperforms that achieved with maximal ratio transmission (MRT) precoding.

\textit{3. Two-Timescale Algorithm for Optimization}: Building on the derived ESR, we propose a two-timescale algorithm designed to maximize rate performance by jointly optimizing port selection, the regularization parameter of RZF precoding, and the phase shifts at the RIS. To address the discrete optimization challenge associated with port selection, we employ linear relaxation and the Frank-Wolfe (FW) method, deriving a closed-form update rule that fully exploits the structure of the port selection constraint. This approach is applicable to FAS-RIS systems with arbitrary correlation matrices.

\textit{4. Validation through Numerical Results}: We validate the accuracy of our ESR evaluation and the effectiveness of the proposed algorithm through numerical simulations. Our results demonstrate that the impact of joint FAS and RIS optimization becomes more pronounced in FAS-RIS systems with a larger number of users. Furthermore, we establish that for FAS-RIS systems with homogeneous channels, the optimal regularization factor is independent of port selection, correlation matrices, and phase shifts at the RIS.

\subsection{Paper Outline and Notations}
The rest of the paper is organized as follows. Section~\ref{sec_mod} introduces the multi-user FAS-RIS system and problem formulation. Section~\ref{sec_res} determines the ESR of the concerned system and Section~\ref{large_analysis} evaluates the performance over the i.i.d. channels and analyzes the impact of the number of reflecting elements at the RIS and number of selected ports. Section~\ref{sec_algorithm} gives the two-timescale design for ESR maximization and Section~\ref{sec_simu} shows numerical simulations. Finally, Section~\ref{sec_con} concludes the paper. 

\textit{Notations:} Vectors and matrices are denoted by bold lower case letters and bold upper case letters, respectively. The $N$-dimensional vector and $M$-by-$N$ matrix space are represented by $\mathbb{C}^{N}$ and $\mathbb{C}^{M\times N}$, respectively. The $(i,j)$-th entry, conjugate transpose, spectral norm, and trace of matrix $\bold{A}$ are represented by $[\BA]_{i,j} (A_{i,j})$, $\bold{A}^{H}$, $\|\BA \|$, and $\Tr(\BA)$, respectively. $\bold{I}_{N}$ represents the $N$-by-$N$ identity matrix. The expectation and centered form of a random variable $x$ is denoted by $\E[ x]$ and $\underline{x}=x-\E [x]$. The covariance of $x$ and $y$ is represented by $\cov(x,y)=\E[\underline{x}\underline{y}] $ and the convergence in probability is denoted by $\xrightarrow[N \rightarrow \infty]{\mathcal{P}}$. The big-O and little-o notations are represented by $\BO(\cdot)$ and $o(\cdot)$. $\mathbbm{1}$ represents the indicator function and $\jmath=\sqrt{-1}$. The complex partial derivative operators with respect to $z=x+\jmath y$  are given by $\frac{\partial}{\partial z}=\frac{1}{2}(\frac{\partial}{\partial x}-\jmath \frac{\partial}{\partial y})$ and $\frac{\partial}{\partial {z}^{*}}=\frac{1}{2}(\frac{\partial}{\partial x}+\jmath \frac{\partial}{\partial y})$, where $\frac{\partial}{\partial x}$ and $\frac{\partial}{\partial y}$ are real partial derivatives with respect to $x$ and $y$, respectively.

\section{System Model and Problem Formulation}
\label{sec_mod}
In this section, we first introduce the FAS-RIS multiuser downlink MISO systems. Then, we formulate the problem for the two-time-scale design by maximizing the ESR.

\subsection{System Model}
Consider a FAS-RIS downlink MISO system, where a BS equipped with $M$ FAs and $M_{\mathrm{tot}}$ ports serves $K$ single-antenna users with the aid of a RIS with $L$ reflecting elements. Here, a ``port'' represents a possible position of the antenna~\cite{wong2020fluid} and each FA has one RF chain. Different from traditional antenna selection systems with $M_\mathrm{tot}$ antennas that can be turned on and off and connected to $M$ RF chains, the FAS consists of only the $M$ FAs that can switch the location of its radiating element to one of the $M_{\mathrm{tot}}$ preset locations/ports~\cite{lai2023performance,9715064}. Under such circumstances, the BS could dynamically select $M$ ports from the total $M_{\mathrm{tot}}$ available ports to create favorable propagation environments. Here, we adopt a planar FAS with aperture $W_x\times W_y$ consisting of $M_{\mathrm{tot}}=N_x\times N_y$ uniformly deployed ports, where $W_x$ and $W_y$ are the length and width of the array measured in wavelength, and $N_x$ and $N_y$ are the number of antennas in each row and column, respectively. Note that there exists the direct link between the BS and users, and the RIS is deployed to enhance the communication link. It is worth noticing that there are fundamental operational differences between FAS and traditional systems with antenna selection. Traditional antenna selection exploiting multiple fixed antennas tend to have antenna spacing of at least $\frac{\lambda}{2}$ to achieve spatial diversity while the ports of a fluid antenna in the FAS should be as close as possible to ensure sufficient spatial resolution to create a favorable correlation structure to improve performance~\cite{wong2022fast}.


The port selection scheme can be represented by a selection vector $\bold{s}\in \{0,1 \}^{M_{\mathrm{tot}}}$, where $s_i=1$ represents that the $i$-th port is selected and $\bold{s}$ satisfies $\bold{s}^{T}\bold{1}_{M_{\mathrm{tot}}}=M$. Let $\bold{m}=[m_{1},m_{2},...,m_{K}]^{T}\in\mathbb{C}^{K}$ denote the transmit signal where $m_k\sim\mathcal{CN}(0,p_k)$ represents the message for the $k$-th user with transmit power $p_k$. The received signal of the $k$-th user is given by~\cite{kammoun2019asymptotic}
\begin{equation}
\begin{aligned}
y_{k}=\bold{h}_{k}^{H}(\bold{s})(\bold{g}_{k} m_{k}+\sum_{i\neq k}\bold{g}_{i}m_{i})+{n}_{k},
\end{aligned}
\end{equation}
where $\Bh_k(\Bs) \in \mathbb{C}^{M} $ and $\bold{g}_{k}\in \mathbb{C}^{M}$ represent the channel vector for the selected ports and precoding vector for the $k$-th user, respectively, and $n_k\sim \mathcal{CN}(0,\sigma^2)$ denotes the additive white Gaussian noise (AWGN). Here $\Bh_k(\Bs)$ can be obtained by sampling from $\Bh_{\mathrm{tot},k}$ according to $\Bs$, where $\Bh_{\mathrm{tot},k}\in \mathbb{C}^{M_{\mathrm{tot}}}$ represents the channel vector from all ports to the $k$-th user. In the following, we omit $(\Bs)$ for brevity. The signal to interference plus noise ratio (SINR) of the $k$-th user is given by
\begin{equation}
\label{sinr_def}
\gamma_{k}=\frac{p_k|\bold{h}_{k}^{H}\bold{g}_{k}|^2 }{\sum\limits_{i\neq k} p_i |\bold{h}_{k}^{H}\bold{g}_{i}|^2+\sigma^2},
\end{equation}
such that the ESR can be represented as
\begin{equation}
R= \sum_{k=1}^{K} \E[  \log(1+\gamma_k)].
\end{equation}
Designing the optimal transmitter for the concerned FAS-RIS system is extremely difficult due to the vacancy of closed-form evaluation for the ESR. In this paper, we will investigate the closed-form evaluation for the ESR. Given RZF and ZF precoding have been shown effective in managing the interference among the users in a low-complexity manner, we consider the design of FAS-RIS systems with RZF/ZF precoding.




\subsection{Linear Precoding}
The precoding matrix for RZF and ZF can be given by~\cite{kammoun2019asymptotic,wagner2012large} 
\begin{align}
\BG_{\mathrm{RZF}}&=\frac{1}{\sqrt{\xi_{\mathrm{RZF}}}}\left({\BH}{\BH}^{H}+  z\bold{I}_{M} \right)^{-1}{\BH}, \nonumber
\\
\BG_{\mathrm{ZF}}&=\frac{1}{\sqrt{\xi_{\mathrm{ZF}}}}{\BH}\left({\BH}^{H}{\BH} \right)^{-1},
\end{align}
respectively, where $z>0$ is the regularization parameter of RZF. The power normalization factor $\xi$ satisfies $ \Tr(\BG\BP\BG^{H})=1$ with $\BP=\diag(p_1,p_2,...,p_K)$. In particular, we have $\xi_{\mathrm{RZF}}=\frac{1}{M}\Tr\left(\left(\BH\BH^{H}+ z\bold{I}_{M} \right)^{-2}\BH\BP\BH^{H}\right)$ for RZF and $\xi_{\mathrm{ZF}}=\frac{1}{M}\Tr\left(\left(\BH^{H}\BH \right)^{-1}\BP\right)$ for ZF, respectively.

Note that the channel matrix $\BH$ is related to the port selection. The channel matrix of all available ports can be given by $\BH_{\mathrm{tot}}\in \mathbb{C}^{M_{\mathrm{tot}}\times K}=[\Bh_{\mathrm{tot},1},\Bh_{\mathrm{tot},2},...,\Bh_{\mathrm{tot},K}]$. For RZF precoding, we have $\Bh_{k}^{H}\BG=\Bh_{\mathrm{tot},k}^{H}\diag(\Bs)(z\BI_{M_{\mathrm{tot}}}
+\diag(\Bs)\BH_{\mathrm{tot}}\BH^{H}_{\mathrm{tot}}\diag(\Bs) )^{-1} \diag(\Bs)\BH_{\mathrm{tot}}$
which indicates that $\BH$ can be replaced by $\diag(\Bs)\BH_{\mathrm{tot}}$ in the ESR evaluation. The same replacement holds true for ZF precoding.

\subsection{Channel Model}
\label{cha_mol}
We consider the correlated Rayleigh fading channels, which could characterize both the random scattering and correlation among reflecting ports/elements in FAS-RIS systems~\cite{wong2020performance,papazafeiropoulos2016impact}. Furthermore, we consider both uncommon~\cite{wagner2012large} and common correlation scenarios~\cite{kammoun2019asymptoticit}. 

\subsubsection{FAS-RIS with Uncommon Correlation} 
With uncommon correlation, the BS and RIS have different correlation matrices towards different users, which happens when the angle from each user to RISs and FASs is significantly different from each other~\cite{wagner2012large,papazafeiropoulos2016impact}. The channel vector of the $k$-th user is given by
\begin{equation}
\label{chunc}
\Bh_k=\sqrt{u_k}\LF_{c,k}(\Bs)\Bw_k+ \sqrt{t_k} \LR(\Bs)\BX\MC_{L} \bold{\Phi}\MC_{R,k}\By_k,
\end{equation}
where $\BF_{c,k}(\Bs)\in \mathbb{C}^{M\times M}$ and $\FR(\Bs)\in \mathbb{C}^{M\times M}$ represent the correlation matrix of selected ports at the BS towards the $k$-th user and the RIS, respectively, and $\Bs\in \{0,1 \}^{M_{\mathrm{tot}}}$ denotes the port selection vector. Here $ \FC_{L} \in \mathbb{C}^{L\times L}$ and $\FC_{R} \in \mathbb{C}^{L\times L}$ represent the correlation matrices at the receive and transmit side of the RIS, respectively, and $\bold{\Phi}\in \mathbb{C}^{L\times L}=\diag(e^{\jmath \phi_1},e^{\jmath \phi_2},...,e^{\jmath \phi_L})$ denotes the phase shift matrix of the RIS with $\phi_l\in [0,2\pi]$, $l=1,2,...,L$.  Note that in FAS-RIS systems, the transmit correlation matrix is determined by antenna correlation and changes as $\Bs$ varies. The large-scale channel gains of the direct and cascaded link of the $k$-th user are denoted by $u_k$ and $t_k$, respectively. Matrix $\BX \in  \mathbb{C}^{M\times L} $ is a random matrix with i.i.d. entries following $ \mathcal{CN}(0,\frac{1}{M})$, which models the small-scale fading of the BS-RIS hop. The i.i.d. vectors $\Bw_k \in \mathbb{C}^{M} $ with $w_1\sim \mathcal{CN}(0,\frac{1}{M})$ and $\By_k \in \mathbb{C}^{L} $ with $y_1\sim \mathcal{CN}(0,\frac{1}{L})$ represents the small-scale fading of the direct link and RIS-user hop, respectively. For brevity, we denote $\FF_k(\Bs)=u_k\BF_{c,k}(\Bs)$ and $\FC_i=\LC_i (\LC_i )^{H}$ with $\LC_i=\sqrt{t_i}\MC_{L} \bold{\Phi}\MC_{R,i}$ and omit $(\Bs)$ such that $\FF_k(\Bs)=\FF_k$ and $\FR(\Bs)=\FR$. Given $\BF_{\mathrm{tot},k}\in \mathbb{C}^{M_{\mathrm{tot}}\times M_{\mathrm{tot}}}$ and $\FR_{\mathrm{tot}}\in \mathbb{C}^{M_{\mathrm{tot}}\times M_{\mathrm{tot}}}$ are correlation matrices of all available ports, $\BF_{k}$ and $\FR$ are the $M$-by-$M$ sub-matrices sampled by $\Bs$ from $\BF_{\mathrm{tot},k}$ and $\FR_{\mathrm{tot}}$, respectively.

\subsubsection{FAS-RIS with Common Correlation}
The common correlation scenario can be utilized to describe an environment where multiple users are heterogeneously scattered around the transmitter, e.g., users are located in a compact direction from the RIS and FAS but experience different channel gains, such that all user channels have identical transmit correlation. This assumption has been widely in the tractable analysis of MIMO and RIS systems~\cite{yang2014confidential,kammoun2020asymptotic,zhang_tifs}. Under such circumstances, the BS and RIS have common correlation matrix towards different users, but the large-scale channel gain of different users is different since the distances from users to the BS and RIS are different. Thus, the channel vector for the $k$-th user can be represented by 
\begin{equation}
\label{chcom}
\Bh_k= \sqrt{u_k}\LF\Bw_k+ \sqrt{t_k}\LR\BX\MC_{L} \bold{\Phi}\MC_{R}\By_k.
\end{equation}
It can be observed that the the common correlation scenario is a special case of the uncommon correlation scenario when $\FF_{k}=u_k\FF$ and $\FC_{R,k}= \FC_R$. This corresponds to the cases where the users are located in a narrow angular direction from the RIS~\cite{yang2014confidential}. The channel matrix $\BH=[\bold{h}_{1},...,\bold{h}_{K}]$ can be rewritten as $\BH=\LF\BW\RU+\LR\BX\MC_{L} \bold{\Phi}\MC_{R}\BY\RT,
$ with $\FT=\diag(t_1,t_2,...,t_K)$ and $\BU=\diag({u}_{1},u_2,...,{u}_{K})$. For brevity, we denote $\FC=\LC(\LC)^{H}$ with $\LC=\MC_{L} \bold{\Phi}\MC_{R}$.

\subsection{Problem Formulation}
In this paper, we aim to maximize the ESR of the FAS-RIS-aided MISO system $\E [R(\Bs,\bold{\Phi},z)]$, by jointly optimizing the port selection vector $\Bs$, regularization factor $z$ of RZF, and the phase shifts of the RIS $\bm{\Phi}$. The corresponding optimization problem is formulated as
\begin{equation}
\label{p1_exp}
\begin{aligned}
\mathcal{P}1:~& \max_{z>0,\bold{\Phi},\Bs} \E [R(\Bs,\bold{\Phi},z)],
\\
\mathrm{s.t.}~& \mathcal{C}_1:\Bs^{T}\bold{1}_{M_{\mathrm{tot}}}=M,~\Bs\in\{0,1\}^{M_{\mathrm{tot}}},\\
&\mathcal{C}_2:\bold{\Phi}=\diag(e^{\jmath \phi_1},e^{\jmath \phi_2},...,e^{\jmath \phi_L}).
\end{aligned}
\end{equation}
In the next section, we first evaluate the ESR for concerned FAS-RIS systems.


\section{Evaluation of Ergodic Sum Rate}
\label{sec_res}

In this section, we first derive a closed-form approximation for the ESR with RZF for both the uncommon and common correlation scenarios. Then, we extend the results to obtain an approximation for the ESR with ZF. Define $c_1=\frac{K}{M}$ and $c_2=\frac{K}{L}$. We take the following assumptions in this work, which have been widely adopted in the literature.   

\textbf{A.1.} $0<\liminf\limits_{K \ge 1}  c_1 \le c_1  \le \limsup\limits_{K \ge 1} c_1 <\infty$ and $0<\liminf\limits_{K \ge 1}  c_2 \le c_2  \le \limsup\limits_{K \ge 1}  c_2 <\infty$.

\textbf{A.2.} $\limsup\limits_{K\ge 1}  \{ \| \FR \|, \|\BF_k \|,\|\FC_k\|,\|\BC\|,\|\BF\|,\|\BU\|,\|\FT\|  \} <K_{\mathrm{op}}$, where $K_{\mathrm{op}}$ is a constant independent of $K$, $M$, and $L$.

\textbf{A.3.} $\inf\limits_{M\ge 1}\{ \frac{1}{M}\Tr(\FR),\frac{1}{M}\Tr(\BF_k),\frac{1}{M}\Tr(\BC_k)  \} >0  $.

 \textbf{A.1} indicates that the dimensions of the system ($M$, $K$, and $L$) grow to infinity proportionally with ratios $c_1$ and $c_2$, and the asymptotic regime is denoted by $K \xlongrightarrow{(c_1,c_2)} \infty$. \textbf{A.2} and~\textbf{A.3} guarantee that the channel rank is not extremely low, i.e., the rank of correlation matrices increases with the number of antennas increases~\cite{hachem2008new,zhang2021bias}. In the following, we will present the closed-form evaluation of the ESR with both uncommon and common cases.
 
 By utilizing RMT, the fundamental limits of the considered system can be characterized by a system of fixed-point equations parameterized by statistical CSI. Such an approach has been widely used in the ergodic capacity and outage probability analysis of large-scale MIMO systems~\cite{kk_rmt,wen2012deterministic,wagner2012large,zhang2013capacity} and RIS-aided systems~\cite{zhang2022outage,zhang2023asymptotic,zhang2022secrecy,zhang2021large}. In the following, we leverage the same method to evaluate the ESR for both uncommon and common correlation cases.

\subsection{Uncommon Correlation Case}

\subsubsection{ESR with RZF}
The key parameters $\delta(z)$, $\mu_{k}(z)$, and $\omega_k(z)$, $k=1,2,...,K$ for the closed-form ESR evaluation are determined by the following system of equations
 \begin{equation}
 \label{basic_eq1}
 \begin{aligned}
 \begin{cases}
   \delta(z)&=\frac{1}{M}\Tr(\bold{R}\BPS_R),
   \\
   \omega_k(z)&=\frac{1}{L}\Tr(\BC_k\BPS_{C}),
   \\
    \mu_k(z) &=\frac{1}{M}\Tr(\FF_k\BPS_R)+ \omega_k(z),
\end{cases}
  \end{aligned}
 \end{equation}
 where $\BPS_{R}=\left(z \bold{I}_{M}  + \sum_{k=1}^{K} \frac{\FF_k}{M(1+\mu_k(z))}+ \frac{ \omega_k(z)\FR}{M\delta(z)(1+\mu_k(z))}  \right)^{-1},
$ and $\BPS_{C}  =\left( \frac{1}{\delta(z)}\bold{I}_{L}+\sum_{k=1}^{K}\frac{\FC_k}{L(1+\mu_k(z))}   \right)^{-1}.
$. The solution of~(\ref{basic_eq1}) can be obtained by the fixed-point algorithm~\cite{zhang2021bias}. By following similar analysis in~\cite[Appendix I.C]{wagner2012large}, we can obtain that the complexity of obtaining an $\varepsilon$-approximate solution is $\BO(M\log^3(\varepsilon^{-1}))$. For ease of illustration, we will omit $z$ in $(\delta(z),\mu_1(z),\mu_{2}(z),...,\mu_K(z))$ and use $(\delta,\bm{\mu})=(\delta,\mu_1,\mu_{2},...,\mu_K)$ in the following. With these notations, we obtain the ESR evaluation shown in the following theorem. 


 \begin{theorem} (SINR and ESR with RZF)
\label{uncommon_the}
Under the assumptions~\textbf{A.1}-\textbf{A.3} and denoting by $(\delta,\bm{\mu})$ the positive solution of~(\ref{basic_eq1}), the SINR $\gamma_{\mathrm{RZF},k}$ with regularization factor $z$ satisfies
\begin{equation}
\gamma_{\mathrm{RZF},k}  \xlongrightarrow[K \xlongrightarrow{(c_1,c_2)} \infty]{\mathcal{P}} \overline{\gamma}_{\mathrm{RZF},k},
\end{equation}
 where 
\begin{equation}
\label{SINR_de}
\begin{aligned}
\overline{\gamma}_{\mathrm{RZF},k} &=\frac{p_k \mu_k^2}{ \sum\limits_{l\neq k} \frac{p_l\Psi_{k,l}}{L(1+\mu_l)^2}  + \sigma^2(1+\mu_k)^2\overline{C} },
\end{aligned}
\end{equation}
$\Psi_{k,l}$ and $\overline{C}$ are given in~(\ref{uncommon_parexp_1}) and~(\ref{uncommon_parexp_2}) at the top of the next page.
\begin{figure*}
\begin{equation}
\label{uncommon_parexp_1}
\Psi_{k,l}=[\bm{\Delta}^{-1}\Bxi_l]_k+\frac{L}{M\delta^2}[\bm{\Delta}^{-1}\Bxi_I]_k [\bm{\Pi}^{-1}\bm{\chi}(\FR)]_l   
+\frac{L}{M}[\bm{\Pi}^{-1}\bm{\chi}(\overline{\BF}_k)]_{l}+\frac{L}{M} [\bm{\Pi}^{-1}\bm{\chi}(\BF_k)]_l,
\vspace{-0.3cm}
\end{equation}
\begin{minipage}{0.59\textwidth}
\begin{equation}
\begin{aligned}
\bm{\chi}(\BK) &=[\chi(\BF_1,\BK),\chi(\BF_2,\BK),...,\chi(\BF_K,\BK),\chi(\FR,\BK) ]^{T},\\
\bm{\xi}_l &= [\Xi_{1,l},\Xi_{2,l},...,\Xi_{K,l}]^{T},~\bm{\xi}_I=[\Xi_{I,1},\Xi_{I,2},...,\Xi_{I,L}]^{T},\\
[\bm{\Pi}]_{k,l}&=
\begin{cases}
\mathbbm{1}_{l=k}-\frac{\Xi_{k,l}}{L(1+\mu_l)^2}-\frac{ (\frac{\Xi_{I,l}}{\delta^2} \chi(\BF_k,\FR)+\chi(\BF_k,\BF_l))}{M(1+\mu_l)^2},~1\le k,l \le K,
\\
-\frac{ (\frac{\Xi_{I,l}}{\delta^2} \chi(\FR,\FR)+\chi(\FR,\BF_l))}{M(1+\mu_l)^2}, ~k=K+1, l\le K,
\\
-\frac{\Xi_{I,k}}{\delta^2}-\sum\limits_{m=1}^{K}\frac{(\omega_m-\frac{\Xi_{I,m}}{\delta})}{M\delta^2(1+\mu_m)} \chi(\BF_k,\FR), ~k\le K, l=K+1,
\\
1-\sum\limits_{m=1}^{K}\frac{(\omega_m-\frac{\Xi_{I,m}}{\delta})}{M\delta^2(1+\mu_m)} \chi(\FR,\FR),~~k= K+1, l=K+1,
\end{cases}
\end{aligned}
\nonumber
\end{equation}
\end{minipage}
\begin{minipage}{0.41\textwidth}
\begin{align*}
[\bm{\Delta}]_{k,l}&= \mathbbm{1}_{l=k}-\frac{\Xi_{k,l}}{L},~\Xi_{I,k}=\frac{1}{L}\Tr(\BC_k\BPS_C^2),
\\
\Xi_{k,l} &=\frac{1}{L}\Tr(\BC_k\BPS_C\BC_l\BPS_C),
\\
\chi(\BA,\BB)& =\frac{1}{M}\Tr(\BA\BPS_R\BB\BPS_R),
\\
\overline{C}&=\sum\limits_{m=1}^{K}\frac{p_m [\bm{\Pi}^{-1}\bm{\chi}(\BI_M)]_{m} }{M(1+\mu_m)^2},\numberthis \label{uncommon_parexp_2}
\\
\overline{\BF}_k& =\sum_{l=1}^{K}[\bm{\Delta}^{-1}(\BI_K-\bm{\Delta})]_{k,l}\BF_l.
\end{align*}
\end{minipage}
\hrule
\vspace{-0.2cm}
\end{figure*}
Furthermore, it holds true
\begin{equation}
\begin{aligned}
\label{eva_uncommon}
\frac{R_{\mathrm{RZF}}}{K}   \xlongrightarrow[K \xlongrightarrow{(c_1,c_2)} \infty]{\mathcal{P}}  \frac{\overline{R}_{\mathrm{RZF}}}{K},
\frac{\E[R_{\mathrm{RZF}}]}{K}   \xlongrightarrow[]{K \xlongrightarrow{(c_1,c_2)} \infty}   \frac{\overline{R}_{\mathrm{RZF}}}{K},
\end{aligned}
\end{equation}
where $\overline{R}_{\mathrm{RZF}}=\sum_{k=1}^{K}\log(1+\overline{\gamma}_{\mathrm{RZF},k} )$.

\end{theorem}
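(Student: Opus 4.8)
\textbf{Proof proposal for Theorem~\ref{uncommon_the}.}

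The plan is to follow the standard deterministic-equivalent program from random matrix theory for RZF precoding, adapting the arguments of~\cite{wagner2012large,kammoun2019asymptotic} to the two-hop-plus-direct-link channel structure in~(\ref{chunc}). First I would establish a deterministic equivalent for the resolvent $\BQ(z)=(\BH\BH^H+z\BI_M)^{-1}$, showing that $\frac{1}{M}\Tr(\BA\BQ)-\frac{1}{M}\Tr(\BA\BPS_R)\xrightarrow{\mathcal P}0$ for matrices $\BA$ of bounded spectral norm. Here the per-user channel is a sum of two independent terms, $\sqrt{u_k}\LF_{c,k}\Bw_k$ and $\LC_{R,k}$-type cascaded term $\sqrt{t_k}\LR\BX\MC_L\bold\Phi\MC_{R,k}\By_k$, so conditioned on $\BX$ the $k$-th column is Gaussian with covariance $\frac{1}{M}\FF_k+\frac{1}{M}\LR\BX\FC_k\BX^H\LR^H$. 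The effective per-column covariance is therefore random through $\BX$; I would handle this by a nested deterministic-equivalent argument: first treat the $\By_k,\Bw_k$ as the ``fast'' randomness to get a resolvent involving $\LR\BX\FC_k\BX^H\LR$, then apply a second layer of RMT (a separable-variance / bi-correlated model in the style of the Marchenko–Pastur fixed-point system) to average over $\BX$. This is exactly what produces the coupled system~(\ref{basic_eq1}) in which $\delta$ governs the $M$-dimensional resolvent $\BPS_R$, the $\omega_k$ govern the $L$-dimensional resolvent $\BPS_C$, and $\mu_k$ combines the direct-link trace $\frac1M\Tr(\FF_k\BPS_R)$ with $\omega_k$.

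Next I would derive the deterministic equivalents of the three bilinear forms that appear in the SINR~(\ref{sinr_def}): the signal term $|\Bh_k^H\BG_{\mathrm{RZF}}\Be_k|^2$, the interference terms $|\Bh_k^H\BG_{\mathrm{RZF}}\Be_l|^2$ for $l\neq k$, and the normalization $\xi_{\mathrm{RZF}}$. Using the standard rank-one perturbation / leave-one-out identities (removing column $k$ from $\BH$), the signal amplitude $\Bh_k^H\BQ_{[k]}\Bh_k/(1+\Bh_k^H\BQ_{[k]}\Bh_k)$ concentrates around $\mu_k/(1+\mu_k)$, giving the numerator $p_k\mu_k^2$ after accounting for the $(1+\mu_k)$ factors. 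The interference and normalization terms require second-order (quadratic-form-of-resolvent-squared) deterministic equivalents: quantities like $\frac1M\Tr(\BA\BQ\BB\BQ)$, which is why the auxiliary matrices $\bm\Delta,\bm\Pi$ and the scalars $\Xi_{k,l},\Xi_{I,k},\chi(\BA,\BB)$ enter. Concretely, I would differentiate the first-order fixed-point system or, equivalently, solve the linear system that the vector of traces $(\chi(\BF_1,\BK),\dots,\chi(\FR,\BK))$ satisfies; the matrix of that linear system is $\bm\Pi$, and its right-hand side is $\bm\chi(\cdot)$, which explains the appearance of $\bm\Pi^{-1}\bm\chi(\cdot)$ throughout~(\ref{uncommon_parexp_1})–(\ref{uncommon_parexp_2}). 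The analogous $L$-dimensional second-order system has system matrix $\bm\Delta$, producing the $\bm\Delta^{-1}\Bxi_l$ and $\overline{\BF}_k$ terms. Collecting the pieces gives $\Psi_{k,l}$ and $\overline C$, hence $\overline\gamma_{\mathrm{RZF},k}$.

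Finally, the convergence of the rate itself follows by a routine argument: $\log(1+\gamma_{\mathrm{RZF},k})$ is a bounded continuous function of $\gamma_{\mathrm{RZF},k}$ on the relevant range (the SINR is bounded above under \textbf{A.1}–\textbf{A.3}), so $\gamma_{\mathrm{RZF},k}\xrightarrow{\mathcal P}\overline\gamma_{\mathrm{RZF},k}$ implies $\log(1+\gamma_{\mathrm{RZF},k})-\log(1+\overline\gamma_{\mathrm{RZF},k})\xrightarrow{\mathcal P}0$ by the continuous mapping theorem; summing over the $K$ users and dividing by $K$ preserves the convergence since each summand is $O(1)$ and the fluctuations are $o(1)$ uniformly (one needs a uniform-in-$k$ bound on the rate of convergence, which the concentration inequalities supply). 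The convergence in expectation in~(\ref{eva_uncommon}) then follows from dominated convergence, using that $\frac1K R_{\mathrm{RZF}}$ is uniformly integrable because $\log(1+\gamma_k)\le \log(1+ p_k\|\Bh_k\|^2\|\BG\|^2/\sigma^2)$ has bounded moments under the Gaussian model and \textbf{A.2}. The main obstacle I anticipate is the two-layer deterministic-equivalent analysis for the resolvent with $\BX$-dependent effective covariances: establishing that the doubly-stochastic averaging over $(\Bw_k,\By_k)$ and then over $\BX$ commutes in the limit, controlling the coupling between the $M$- and $L$-dimensional fixed-point equations, and proving existence/uniqueness of the positive solution $(\delta,\bm\mu)$ of~(\ref{basic_eq1}) together with well-posedness (invertibility) of $\bm\Delta$ and $\bm\Pi$; the bookkeeping for the second-order forms is lengthy but mechanical once the first-order picture and these structural facts are in place.
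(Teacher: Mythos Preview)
Your proposal is correct in outline and lands on the same structural decomposition as the paper: rewrite $\gamma_{\mathrm{RZF},k}$ via the leave-one-out resolvent $\BQ_k$, identify the signal term with $\mu_k$ from a first-order trace result, and recover the interference and normalization terms from second-order resolvent traces organized by the linear systems with matrices $\bm\Pi$ and $\bm\Delta$. The rate-convergence step via continuous mapping plus dominated convergence is also what the paper does.

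The one genuine methodological difference is how the first- and second-order results are obtained. You propose a \emph{nested} deterministic-equivalent argument: condition on $\BX$ so that each $\Bh_k$ is Gaussian with random covariance $\frac{1}{M}\FF_k+\frac{1}{M}\LR\BX\FC_k\BX^H\LR^H$, apply a single-hop result, and then run a second RMT layer over $\BX$. The paper instead uses the Gaussian tools of Pastur--Hachem (integration-by-parts identity and the Nash--Poincar\'e inequality) directly on the full joint distribution of $(\Bw_k,\By_k,\BX)$, never isolating $\BX$; this produces the coupled system~(\ref{basic_eq1}) and the second-order relations for $\Upsilon_k(\BK)$ and $\Lambda_{k,l}$ in one pass (see the proof of Lemma~\ref{lem_sec_resol}). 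Your route is perfectly viable, but the ``commuting limits'' obstacle you flag is real and nontrivial there, whereas the paper's Gaussian-tools approach sidesteps it entirely: the integration-by-parts formula treats all Gaussian entries symmetrically, so no iterated conditioning or exchange-of-limits argument is needed. Conversely, your approach would generalize more readily to non-Gaussian entries (where integration by parts is unavailable), at the cost of that extra layer of justification.
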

\begin{proof} The proof of Theorem~\ref{uncommon_the} is given in Appendix~\ref{uncommon_the_proof}.
\end{proof}
\begin{remark} Theorem~\ref{uncommon_the} indicates that when the number of selected ports, the number of reconfigurable elements at the RIS, and number of users go to infinity proportionally, the SINR and the rate of the $k$-th user converges to a deterministic value. This provides a good approximation for the SINR and ESR. It is worth noticing that Theorem~\ref{uncommon_the} is applicable for general correlation structure and is also valid when replacing $\BF_k$ and $\FR$ with $\FF^{\frac{1}{2}}_{\mathrm{tot},k}\diag(\Bs)\FF^{\frac{1}{2}}_{\mathrm{tot},k}$ and $\FR^{\frac{1}{2}}_{\mathrm{tot}} \diag(\Bs)\FR^{\frac{1}{2}}_{\mathrm{tot}}$, respectively, for optimization purposes. The evaluation result is also applicable for the case with mutual coupling (MC), where the impedance matrix is multiplied with the correlation matrices at the BS~\cite{sun2011capacity}. We do not discuss MC effect here since the MC model for FASs is not available now.
\end{remark}

In Theorem~\ref{uncommon_the}, we consider the RIS channel with the direct link, whose channel matrix is represented by the sum of a random matrix and product of random matrices. Thus, the ESR evaluation results could degenerate to those with the single-hop channels~\cite{wagner2012large} and two-hop channels~\cite{zhang2023irs} as shown in Remarks~\ref{rem_deg_sig} and~\ref{rem_deg_two}, respectively.

\begin{remark}\label{rem_deg_sig} (\textbf{Degeneration to single-hop channels}) The SINR evaluation for correlated single-hop channels was investigated in~\cite{wagner2012large}. By letting $\FR=\bm{0}_{M\times M}$, we can obtain $\omega_k=0$ and $\delta=0$ such that  the system of equations in~(\ref{basic_eq1}) degenerates to that in~\cite[Eq. (11)]{wagner2012large}. Thus, the associated SINR evaluation in (\ref{SINR_de}) can degenerate to that in~\cite[Eq. (19) with $\tau_k=0$]{wagner2012large}.
\end{remark}

\begin{remark}\label{rem_deg_two} (\textbf{Degeneration to two-hop channels}) The SINR evaluation for correlated two-hop channels was investigated in~\cite{zhang2023irs,zhang_tifs}. By letting $\FF_k=\bm{0}_{M\times M}$, $k=1,2,...,K$, we can obtain $\mu_k=\omega_k$ such that the system of equation in~(\ref{basic_eq1}) degenerates to that in~\cite[Eq. (11)]{zhang2023irs}. Thus, the associated SINR evaluation in (\ref{SINR_de}) can degenerate to that in~\cite[Eq. (23)]{zhang2023irs} and~\cite[Eq. (12)]{zhang_tifs}.
\end{remark}

\subsubsection{ESR with ZF} 
The ESR with ZF can be given by the following theorem. 
\begin{theorem}\label{the_ZF} (SINR and ESR with ZF) If it holds true that $ \lambda_{\mathrm{min}} (\frac{1}{M} \BH^{H}\BH) >\varepsilon $ with probability $1$ for an $\varepsilon>0$, and $\lim\limits_{z\rightarrow 0}z\delta(z)>0$, $\lim\limits_{z\rightarrow 0}z\omega_i(z)>0$, and $\lim\limits_{z\rightarrow 0}z\mu_i(z)>0$ for $i=1,2,...,K$ exist, the SINR with ZF can be approximated by
\begin{equation}
{\gamma}_{\mathrm{ZF},k}  \xlongrightarrow[K \xlongrightarrow{(c_1,c_2)} \infty]{\mathcal{P}} \overline{\gamma}_{\mathrm{ZF},k},  
\end{equation}
where $ \overline{\gamma}_{\mathrm{ZF},k}= p_k(\sigma^2\sum\limits_{l=1}^{K}\frac{p_l}{M\underline{\mu}_l})^{-1} $ and $\underline{\mu}_k$ is the solution of the following system of equations
\begin{equation}
 \label{basic_zf1}
 \begin{aligned}
 \begin{cases}
   &\underline{\delta}=\frac{1}{M}\Tr(\bold{R}\BK_R),
   \\
 &\underline{\mu}_k=\underline{\omega}_k+\frac{1}{M}\Tr(\BF_k\BK_R),
 \\
 &\underline{\omega}_k=\frac{1}{L}\Tr( \FC_k\BK_{C}),
\end{cases}
  \end{aligned}
 \end{equation}
$k=1,2,...,K$, where $\BK_{R}=\left( \bold{I}_{M} + \sum_{k=1}^{K}\frac{\underline{\omega}_k\FR}{M\underline{\delta}\underline{\mu}_k}+ \frac{\BF_k}{M\underline{\mu}_k}  \right)^{-1}$ and $\BK_{C}=\left( \frac{1}{\underline{\delta}}\bold{I}_{L}+\sum_{k=1}^{K}\frac{\FC_k}{L\underline{\mu}_k}   \right)^{-1}$.
Furthermore, it holds true
\begin{equation}
\begin{aligned}
\label{eva_uncommon}
\frac{R_{\mathrm{ZF}}}{K}   \xlongrightarrow[K \xlongrightarrow{(c_1,c_2)} \infty]{\mathcal{P}}  \frac{\overline{R}_{\mathrm{ZF}}}{K},
\frac{R_{\mathrm{ZF}}}{K}   \xlongrightarrow[]{K \xlongrightarrow{(c_1,c_2)} \infty}   \frac{\overline{R}_{\mathrm{ZF}}}{K},
\end{aligned}
\end{equation}
where $\overline{R}_{\mathrm{ZF}}=\sum_{k=1}^{K} \log(1+ \overline{\gamma}_{\mathrm{ZF},k})$.
\end{theorem}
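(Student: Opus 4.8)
The plan is to obtain Theorem~\ref{the_ZF} from the RZF result in Theorem~\ref{uncommon_the} by letting $z\to0^{+}$, after first exploiting the exact interference-nulling of ZF. On the a.s.\ event $\mathcal{E}=\{\lambda_{\mathrm{min}}(\tfrac{1}{M}\BH^{H}\BH)>\varepsilon\}$ the matrix $\BH^{H}\BH$ is invertible and, since $\Bh_{k}^{H}\BH=\Be_{k}^{T}(\BH^{H}\BH)$,
\[
\Bh_{k}^{H}\BG_{\mathrm{ZF}}=\frac{1}{\sqrt{\xi_{\mathrm{ZF}}}}\Bh_{k}^{H}\BH(\BH^{H}\BH)^{-1}=\frac{1}{\sqrt{\xi_{\mathrm{ZF}}}}\Be_{k}^{T}.
\]
Hence $\Bh_{k}^{H}\Bg_{\mathrm{ZF},i}=0$ for $i\neq k$ and $|\Bh_{k}^{H}\Bg_{\mathrm{ZF},k}|^{2}=\xi_{\mathrm{ZF}}^{-1}$, so by~(\ref{sinr_def}) the interference term is identically zero and $\gamma_{\mathrm{ZF},k}=p_{k}/(\sigma^{2}\xi_{\mathrm{ZF}})$ exactly, where $\xi_{\mathrm{ZF}}=\frac{1}{M}\Tr((\BH^{H}\BH)^{-1}\BP)=\frac{1}{M}\sum_{l=1}^{K}p_{l}[(\BH^{H}\BH)^{-1}]_{l,l}$. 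The problem thus reduces to finding a deterministic equivalent of the diagonal entries $[(\BH^{H}\BH)^{-1}]_{l,l}$.

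Let $\BH_{[l]}$ denote $\BH$ with its $l$-th column deleted, $\mathbf{J}_{l}=\BI_{M}-\BH_{[l]}(\BH_{[l]}^{H}\BH_{[l]})^{-1}\BH_{[l]}^{H}$ the orthogonal projector onto $\range(\BH_{[l]})^{\perp}$, and $\mathbf{J}_{l}(z)=z(\BH_{[l]}\BH_{[l]}^{H}+z\BI_{M})^{-1}$. The Schur-complement identity gives $[(\BH^{H}\BH)^{-1}]_{l,l}=(\Bh_{l}^{H}\mathbf{J}_{l}\Bh_{l})^{-1}$, and the push-through identity gives $\mathbf{J}_{l}(z)=\BI_{M}-\BH_{[l]}(\BH_{[l]}^{H}\BH_{[l]}+z\BI_{M})^{-1}\BH_{[l]}^{H}\to\mathbf{J}_{l}$ as $z\to0^{+}$, with operator-norm rate $\BO(z/\varepsilon)$ on $\mathcal{E}$ (by Cauchy interlacing $\lambda_{\mathrm{min}}(\BH_{[l]}^{H}\BH_{[l]})\ge\lambda_{\mathrm{min}}(\BH^{H}\BH)>\varepsilon M$). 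For each fixed $z>0$, the rank-one perturbation and trace lemmas that underlie the proof of Theorem~\ref{uncommon_the} give $\Bh_{l}^{H}(\BH_{[l]}\BH_{[l]}^{H}+z\BI_{M})^{-1}\Bh_{l}\xrightarrow{\mathcal{P}}\mu_{l}(z)$, so $\Bh_{l}^{H}\mathbf{J}_{l}(z)\Bh_{l}\xrightarrow{\mathcal{P}}z\mu_{l}(z)$.

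Next I would take $z\to0^{+}$ in the fixed-point system. Multiplying the three equations of~(\ref{basic_eq1}) by $z$ and using the assumed positive limits $z\delta(z)\to\underline{\delta}$, $z\omega_{l}(z)\to\underline{\omega}_{l}$, $z\mu_{l}(z)\to\underline{\mu}_{l}$, one checks $z\BPS_{R}(z)\to\BK_{R}$ and $z\BPS_{C}(z)\to\BK_{C}$: after extracting a factor $1/z$ from each inverted matrix, every summand converges because $z(1+\mu_{l}(z))\to\underline{\mu}_{l}$ and $\omega_{l}(z)/\delta(z)\to\underline{\omega}_{l}/\underline{\delta}$. Hence $(\underline{\delta},\underline{\bm{\mu}},\underline{\bm{\omega}})$ solves~(\ref{basic_zf1}), whose positive solution is unique by the same monotonicity (standard interference-function) argument used for~(\ref{basic_eq1}). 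Combining the two steps through
\[
\Bh_{l}^{H}\mathbf{J}_{l}\Bh_{l}-\underline{\mu}_{l}=\big(\Bh_{l}^{H}\mathbf{J}_{l}(z)\Bh_{l}-z\mu_{l}(z)\big)+\big(z\mu_{l}(z)-\underline{\mu}_{l}\big)+\Bh_{l}^{H}\big(\mathbf{J}_{l}-\mathbf{J}_{l}(z)\big)\Bh_{l},
\]
where, for any $\eta>0$, the middle term is $<\eta$ once $z$ is fixed small, the last term is then $\BO(z\|\Bh_{l}\|^{2}/\varepsilon)<\eta$ uniformly on $\mathcal{E}$ (using $\|\Bh_{l}\|^{2}=\BO(1)$ a.s.), and the first vanishes in probability as $K\xlongrightarrow{(c_1,c_2)}\infty$ for that fixed $z$, gives $[(\BH^{H}\BH)^{-1}]_{l,l}\xrightarrow{\mathcal{P}}1/\underline{\mu}_{l}$. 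Therefore $\xi_{\mathrm{ZF}}\xrightarrow{\mathcal{P}}\frac{1}{M}\sum_{l=1}^{K}p_{l}/\underline{\mu}_{l}$ and $\gamma_{\mathrm{ZF},k}\xrightarrow{\mathcal{P}}\overline{\gamma}_{\mathrm{ZF},k}$. For the rate, $\log(1+\gamma_{\mathrm{ZF},k})$ converges in probability by the continuous mapping theorem; moreover $\gamma_{\mathrm{ZF},k}=p_{k}/(\sigma^{2}\xi_{\mathrm{ZF}})$ is a.s.\ uniformly bounded since $\xi_{\mathrm{ZF}}\ge p_{\mathrm{min}}c_{1}/\|\BH\|^{2}$ and $\|\BH\|=\BO(1)$ a.s., so $R_{\mathrm{ZF}}/K$ and $\E[R_{\mathrm{ZF}}]/K$ converge by dominated convergence.

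The step I expect to be the main obstacle is this interchange of the limits $z\to0^{+}$ and $K\xlongrightarrow{(c_1,c_2)}\infty$: one must ensure that the concentration of $\Bh_{l}^{H}(\BH_{[l]}\BH_{[l]}^{H}+z\BI_{M})^{-1}\Bh_{l}$ around $\mu_{l}(z)$ remains effective as $z\downarrow0$ and that $z\mu_{l}(z)$ has a strictly positive limit, i.e.\ that the rescaled deterministic equivalent is continuous at $z=0$ and the random fluctuations do not blow up there. This is exactly why the theorem assumes $\lambda_{\mathrm{min}}(\tfrac{1}{M}\BH^{H}\BH)>\varepsilon$ a.s.---which, via the push-through identity, keeps $\mathbf{J}_{l}(z)$ uniformly bounded with a uniform $z\to0$ limit---together with the existence of $\lim_{z\to0}z\delta(z)$, $\lim_{z\to0}z\omega_{i}(z)$ and $\lim_{z\to0}z\mu_{i}(z)$. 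An essentially equivalent but more computational route is to take $z\to0$ directly in $\overline{\gamma}_{\mathrm{RZF},k}$ of~(\ref{SINR_de}): one verifies that the residual interference term is $\BO(z^{4})$, that $(1+\mu_{k})^{2}/\mu_{k}^{2}\to1$, and that $\overline{C}\to\sum_{l}p_{l}/(M\underline{\mu}_{l})$, whence $\overline{\gamma}_{\mathrm{RZF},k}\to\overline{\gamma}_{\mathrm{ZF},k}$; but the limit of $\overline{C}$ rests on a nontrivial identity between the limiting matrices $\bm{\Pi}$ and $\bm{\chi}(\BI_{M})$ that the projector argument above sidesteps.
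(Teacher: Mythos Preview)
Your proposal is correct and takes a genuinely different route from the paper. The paper works entirely on the deterministic side: it takes $z\to0$ directly in $\overline{\gamma}_{\mathrm{RZF},k}$ from~(\ref{SINR_de}), drops the interference term by the exact ZF property, and evaluates $\lim_{z\to0}\overline{C}$ via the identity $[\bm{\Pi}^{-1}\bm{\chi}(\BI_{M})]_{k}=-\,\mathrm{d}\mu_{k}/\mathrm{d}z$ (which holds because $\bm{\Pi}$ is the Jacobian of~(\ref{basic_eq1}) with respect to $(\bm{\mu},\delta)$) together with the observation $\lim_{z\to0}z^{2}(-\,\mathrm{d}\mu_{k}/\mathrm{d}z)=\underline{\mu}_{k}$; the final limit interchange $\gamma_{\mathrm{ZF},k}=\lim_{z\to0}\gamma_{\mathrm{RZF},k}\xrightarrow{\mathcal{P}}\lim_{z\to0}\overline{\gamma}_{\mathrm{RZF},k}$ is then asserted in one line. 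This is precisely the ``computational route'' you sketch at the end and correctly flag as resting on a nontrivial identity. Your primary argument instead stays on the random side: it uses the Schur-complement formula $[(\BH^{H}\BH)^{-1}]_{l,l}=(\Bh_{l}^{H}\mathbf{J}_{l}\Bh_{l})^{-1}$ and the projector approximation $\mathbf{J}_{l}(z)\to\mathbf{J}_{l}$, carrying out the limit interchange via an explicit three-term decomposition. Your approach is more elementary---it bypasses $\bm{\Pi}$, $\overline{C}$, and the derivative identity entirely---and makes the role of the $\lambda_{\mathrm{min}}$ hypothesis fully transparent (it is what keeps $\|\mathbf{J}_{l}-\mathbf{J}_{l}(z)\|$ uniformly $\BO(z/\varepsilon)$). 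The paper's approach is shorter because it simply reuses the RZF machinery already built in Theorem~\ref{uncommon_the}, but the derivative step it relies on is left implicit.
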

\begin{proof} The proof of Theorem~\ref{the_ZF} is given in Appendix~\ref{proof_the_ZF}.
\end{proof}
\begin{remark} The assumption on $\lambda_{\mathrm{min}} (\frac{1}{M} \BH^{H}\BH)$ is to guarantee the feasibility of ZF precoding. In practical systems, when the number of selected ports is larger or equal to the number of users, i.e., $M \ge K$, ZF precoding is feasible. Compared with ZF, RZF does not have the limitation and is applicable for any $M$ and $K$. 
\end{remark}


\subsection{Common Correlation Case}
 \subsubsection{ESR with RZF}
For common correlation scenario, the ESR with RZF can be characterized by five key parameters $\delta$, $\kappa$, $\overline{\kappa}$, $\omega$, and $\overline{\omega}$, which are determined by the following system of equations.
 \begin{equation}
 \label{basic_eq2}
 \begin{aligned}
 \begin{cases}
   &\delta=\frac{1}{M}\Tr(\bold{R}\BPS_R),
   \\
   & \kappa=\frac{1}{M}\Tr(\FF\BPS_R),
   \\
 &\omega=\frac{1}{L}\Tr( \FC\BPS_{C}),
 \\
  &  \overline{\kappa}=\frac{1}{L}\Tr(\BU\BPS_{T}),
\\
 & \overline{\omega}=\frac{1}{L}\Tr(\FT\BPS_{T}),
\end{cases}
  \end{aligned}
 \end{equation}
 where $\BPS_{R}=\left(z \bold{I}_{M}+\frac{L \omega\overline{\omega}}{M\delta}\bold{R} + \frac{L\overline{\kappa}}{M}\BF  \right)^{-1}$, $ \BPS_{C}=\left(\frac{1}{\delta}\bold{I}_{L}+\overline{\omega}\BC\right)^{-1}$, and $\BPS_{T} =\left(\bold{I}_{K}+\omega\bold{T}+\kappa\BU\right)^{-1}$. The system of equation can also be solved by the fixed-point algorithm and the complexity of obtaining an $\varepsilon$-approximate solution is $\BO(\log^3(\varepsilon^{-1}))$.


 \begin{table*}[!htbp]
\centering
\caption{Symbol Table for FAS-RIS with Common Correlation.}
\label{var_list}
\begin{tabular}{|cc|cc|cc|cc|}
\toprule
Symbol& Expression &  Symbol & Expression &Symbol& Expression \\
\midrule
$\chi(\BA,\BB)$ & $\frac{1}{M}\Tr(\BA\BPS_{R}\BB\BPS_R)$
&
$\Xi$ & $\frac{1}{L}\Tr(\BC^2\BPS_{C}^{2})$
&
 $\Upsilon(\BA)$& $\frac{L\omega}{M\delta}\chi (\FR,\BA)\eta(\FT,\BU)+\frac{L}{M}\chi(\BF,\BA)  \eta(\BU,\BU)$
\\
$\eta(\BA,\BB)$& $\frac{1}{L}\Tr(\BA\BPS_{T}\BB\BPS_T)$
&
$\Xi_I$ & $\frac{1}{L}\Tr(\BC\BPS_{C}^{2})$
 &
  $\Lambda(\BA)$& $\frac{L}{M}\chi(\BF,\BA)\eta(\BU,\BT)-\frac{L}{M\delta} \chi(\FR,\BA)( \overline{\omega}-\omega\eta(\BT,\BT)  )$
  \\
  $\Delta$ & $1-\Xi\eta(\FT,\FT)$ 
   & $\bm{\chi}(\BM)$  & $[ \chi(\FR,\BM),\chi(\FF,\BM),0 ]^{T} $
  & &
\\
\bottomrule
\end{tabular}
\vspace{-0.3cm}
\end{table*}
The SINR and ESR with RZF for the common correlation scenario is given in the following proposition, which can be obtained by setting $\FC_k=t_k\FC$ in Theorem~\ref{uncommon_the} .
\begin{proposition} (SINR and ESR with RZF)
\label{the_ESR_common}
Under the assumptions~\textbf{A.1}-\textbf{A.3} and denoting by $(\delta,\kappa, \overline{\kappa}, \omega, \overline{\omega})$ the positive solution of~(\ref{basic_eq1}), the 
SINR can be approximated by
\begin{equation}
\label{SINR_com_exp}
\gamma_{\mathrm{RZF},k}   \xlongrightarrow[N \xlongrightarrow{(c_1,c_2)} \infty]{\mathcal{P}} \overline{\gamma}_{\mathrm{RZF},k},
\end{equation}
where 
\begin{equation}
\begin{aligned}
\overline{\gamma}_{\mathrm{RZF},k} =\frac{p_k (t_k\omega+u_k\kappa)^2}{\sum\limits_{l \neq k }\frac{p_l  \Psi_{k,l}}{L(1+t_l\omega+u_l\kappa)^2} + (1+t_k\omega+u_k\kappa )^2 \overline{C}_{\mathrm{com}}},
\end{aligned}
\end{equation}
with $\Psi_{k,l}$ and $\overline{C}_{\mathrm{com}}$ given in~(\ref{Psi_exp}) and Table~\ref{var_list} at the top of the next page. Furthermore, it holds true
\begin{figure*}
\begin{align*}
\Psi_{k,l} &= \frac{t_l t_k}{\Delta}(\Xi+\frac{L\Xi_I }{M\delta^2} [\bm{\Pi}_{\mathrm{com}}^{-1}\bm{\chi}(\FR)]_{3} )+(\frac{ Lt_l t_k\Xi\eta(\FT,\BU)}{M\Delta}+\frac{L (t_l u_k+t_k u_l)}{M}) [\bm{\Pi}_{\mathrm{com}}^{-1}\bm{\chi}(\BF)]_{3}+\frac{u_l u_k L}{M} [\bm{\Pi}_{\mathrm{com}}^{-1}\bm{\chi}(\BF)]_{2} ), \numberthis \label{Psi_exp}
\\
\overline{C}_{\mathrm{com}}&=\frac{L}{M}[\eta(\BP,\FT)[\bm{\Pi}_{\mathrm{com}}^{-1}\bm{\chi}(\BI_M)]_{3}+\eta(\BP,\BU)[\bm{\Pi}_{\mathrm{com}}^{-1}\bm{\chi}(\BI_M)]_{2} ],~\bm{\Pi}_{\mathrm{com}}=
\begin{bmatrix}
1-  \frac{L\omega\overline{\omega}}{M\delta^2} \chi(\FR,\FR)  &-\Upsilon(\FR) &-\Lambda(\FR)
\\
-  \frac{L\omega\overline{\omega}}{M\delta^2} \chi(\FR,\BF)  &1- \Upsilon(\FF) &-\Lambda(\FF)
\\
-\frac{\Xi_I}{\delta^2}  &  -\Xi\eta(\FT,\BU)   &   1-\Xi\eta(\FT,\FT) 
\end{bmatrix}.
\end{align*}
\vspace{-0.1cm}
\hrule
\end{figure*}
\begin{equation}
\begin{aligned}
\frac{R_{\mathrm{RZF}}}{K}   \xlongrightarrow[K \xlongrightarrow{(c_1,c_2)} \infty]{\mathcal{P}}  \frac{\overline{R}_{\mathrm{RZF}}}{K},
\frac{\E[R_{\mathrm{RZF}}]}{K}   \xlongrightarrow[]{K \xlongrightarrow{(c_1,c_2)} \infty}   \frac{\overline{R}_{\mathrm{RZF}}}{K},
\end{aligned}
\end{equation}
where $\overline{R}_{\mathrm{RZF}}=\sum_{k=1}^{K}   \log(1+\overline{\gamma}_{\mathrm{RZF},k})$.
\end{proposition}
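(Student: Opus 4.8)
The plan is to derive Proposition~\ref{the_ESR_common} as the specialization of Theorem~\ref{uncommon_the} to the common-correlation channel~(\ref{chcom}), i.e.\ to the case $\FF_k=u_k\FF$ and $\FC_k=t_k\FC$ for all $k$ (equivalently $\FC_{R,k}=\FC_R$ and a user-independent BS correlation up to the path gain). I would first check that~\textbf{A.1}--\textbf{A.3} remain in force, so that Theorem~\ref{uncommon_the} applies. Next I would show that the $K$-dimensional fixed-point system~(\ref{basic_eq1}) collapses: setting $\kappa=\tfrac1M\Tr(\FF\BPS_R)$ and $\omega=\tfrac1L\Tr(\FC\BPS_C)$ gives $\tfrac1M\Tr(\FF_k\BPS_R)=u_k\kappa$ and $\omega_k=t_k\omega$, whence $\mu_k=u_k\kappa+t_k\omega$. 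Therefore $(1+\mu_k)^{-1}$ is precisely the $k$-th diagonal entry of $\BPS_T=(\BI_K+\kappa\BU+\omega\FT)^{-1}$, and the user-sums in Theorem~\ref{uncommon_the} become traces against $\BPS_T$: $\sum_k\frac{u_k}{M(1+\mu_k)}=\frac{L\overline{\kappa}}{M}$ and $\sum_k\frac{t_k}{M(1+\mu_k)}=\frac{L\overline{\omega}}{M}$ with $\overline{\kappa}=\tfrac1L\Tr(\BU\BPS_T)$, $\overline{\omega}=\tfrac1L\Tr(\FT\BPS_T)$. Substituting these into the definitions of $\BPS_R$ and $\BPS_C$ reproduces the reduced resolvents $\BPS_R=(z\BI_M+\tfrac{L\omega\overline{\omega}}{M\delta}\FR+\tfrac{L\overline{\kappa}}{M}\FF)^{-1}$ and $\BPS_C=(\tfrac1\delta\BI_L+\overline{\omega}\FC)^{-1}$, so~(\ref{basic_eq1}) becomes exactly the five-equation system~(\ref{basic_eq2}) in $(\delta,\kappa,\overline{\kappa},\omega,\overline{\omega})$; existence and uniqueness of the positive solution are inherited from Theorem~\ref{uncommon_the}, and can alternatively be argued by the standard monotone fixed-point argument of~\cite{wagner2012large,zhang2021bias}.

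The second and main step is the reduction of the interference and normalization terms $\Psi_{k,l}$ and $\overline{C}$ of~(\ref{uncommon_parexp_1})--(\ref{uncommon_parexp_2}). Because $\BPS_R$ is now user-independent and $\BPS_C$ commutes with $\FC$, every trace functional appearing in Theorem~\ref{uncommon_the} factors into a scalar depending on $(u_k,t_k)$ times a fixed quantity: $\Xi_{k,l}=t_kt_l\Xi$, $\Xi_{I,k}=t_k\Xi_I$, $\chi(\FF_k,\FF_l)=u_ku_l\chi(\FF,\FF)$, $\chi(\FF_k,\FR)=u_k\chi(\FF,\FR)$, $\chi(\FF_k,\BI_M)=u_k\chi(\FF,\BI_M)$, etc., with $\Xi=\tfrac1L\Tr(\FC^2\BPS_C^2)$ and $\Xi_I=\tfrac1L\Tr(\FC\BPS_C^2)$. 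Consequently $\bm{\Delta}$ is a rank-one perturbation of $\BI_K$ (after absorbing the $(1+\mu_l)^{-2}$ weights), so Sherman--Morrison gives $\bm{\Delta}^{-1}$ in closed form with the scalar $\Delta=1-\Xi\,\eta(\FT,\FT)$ as its Schur complement; and $\bm{\Pi}$ equals $\BI_{K+1}$ minus a perturbation whose column space, together with the images of the data vectors $\bm{\chi}(\cdot)$, lies in the fixed three-dimensional subspace spanned by $(u_1,\dots,u_K,0)^{T}$, $(t_1,\dots,t_K,0)^{T}$, and $\Be_{K+1}$. Applying the Woodbury identity on this subspace collapses $\bm{\Pi}^{-1}$ to the inverse of a $3\times3$ matrix, which I would then match, entry by entry, to $\bm{\Pi}_{\mathrm{com}}$ via the $\Upsilon(\cdot)$, $\Lambda(\cdot)$, $\eta(\cdot,\cdot)$ bookkeeping of Table~\ref{var_list} (the sums over users inside $\bm{\Pi}$, such as $\sum_m\frac{\omega_m-\Xi_{I,m}/\delta}{M\delta^2(1+\mu_m)}$, again become $\BPS_T$-traces). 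Collecting the four contributions in~(\ref{uncommon_parexp_1}) and the definition of $\overline{C}$ then yields~(\ref{Psi_exp}) and $\overline{C}_{\mathrm{com}}$, and inserting $\mu_k=t_k\omega+u_k\kappa$ into~(\ref{SINR_de}) gives the claimed $\overline{\gamma}_{\mathrm{RZF},k}$.

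Finally, the convergence statements require no new work: since~(\ref{chcom}) is a sub-case of~(\ref{chunc}), the in-probability convergence $\gamma_{\mathrm{RZF},k}\to\overline{\gamma}_{\mathrm{RZF},k}$ and the per-user rate convergence follow from Theorem~\ref{uncommon_the} with the reduced parameters, and $\E[R_{\mathrm{RZF}}]/K\to\overline{R}_{\mathrm{RZF}}/K$ follows from the same uniform-integrability argument (boundedness of the moments of $\log(1+\gamma_{\mathrm{RZF},k})$ together with dominated convergence and the continuous mapping theorem). I expect the bookkeeping in the second step to be the real obstacle: one must verify that the low-rank collapse of $\bm{\Delta}$ and $\bm{\Pi}$ is exact, that no components of the data vectors outside the three-dimensional subspace survive, and that the resulting $3\times3$ coefficient matrix is precisely $\bm{\Pi}_{\mathrm{com}}$; along the way one should record the genericity conditions $\Delta\neq0$ and $\det\bm{\Pi}_{\mathrm{com}}\neq0$ under which the closed form holds, mirroring the invertibility of $\bm{\Delta}$ and $\bm{\Pi}$ used in Theorem~\ref{uncommon_the}.
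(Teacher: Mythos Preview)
Your plan is sound and will work, but it is not the route the paper actually takes in Appendix~\ref{proof_sinr_com}. Although the text preceding Proposition~\ref{the_ESR_common} suggests ``setting $\FC_k=t_k\FC$ in Theorem~\ref{uncommon_the}'', the appendix does \emph{not} specialize the formulas~(\ref{uncommon_parexp_1})--(\ref{uncommon_parexp_2}) algebraically. Instead it reruns the Gaussian-tools analysis of Appendix~\ref{uncommon_the_proof} directly in the common-correlation setting: a separate Lemma~\ref{lem_sec_resol_com} records the second-order resolvent limits $\Lambda$, $\Gamma(\BK)$, $\Theta(\BK)$ already expressed through the $3\times3$ matrix $\bm{\Pi}_{\mathrm{com}}$, and these are substituted into the decomposition $A_k,B_k,\xi_{\mathrm{RZF}}$ of~(\ref{simple_SINR})--(\ref{first_SINR}) to read off~(\ref{SINR_com_exp}).

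The trade-off is exactly the one you anticipate. Your reduction is a genuine corollary argument: once $\FF_k=u_k\FF$, $\FC_k=t_k\FC$ force the factorizations $\Xi_{k,l}=t_kt_l\Xi$, $\chi(\FF_k,\cdot)=u_k\chi(\FF,\cdot)$, etc., the $(K{+}1)$-dimensional system collapses onto $\mathrm{span}\{(\Bf u,0),(\Bf t,0),\Be_{K+1}\}$ and Woodbury yields $\bm{\Pi}_{\mathrm{com}}$; this is elegant but demands the careful rank-reduction bookkeeping you flag (and your parenthetical about absorbing the $(1+\mu_l)^{-2}$ weights is in fact needed---the system~(\ref{lambda_eq}) carries those weights, so the Sherman--Morrison denominator is precisely $1-\Xi\eta(\FT,\FT)=\Delta$). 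The paper's route sidesteps all of that linear algebra at the cost of re-proving the resolvent identities, which in the common case are short because only three scalar traces ($\delta,\kappa,\omega$) appear and the integration-by-parts recursions close on a $3\times3$ system from the start. Either path is acceptable; yours makes the logical dependence on Theorem~\ref{uncommon_the} explicit, while the paper's makes the $3\times3$ structure appear naturally rather than via a Woodbury compression.
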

\begin{proof} 
The proof of Proposition~\ref{the_ESR_common} is given in Appendix~\ref{proof_sinr_com}.
\end{proof}

 \subsubsection{ESR with ZF}
The SINR and ESR with ZF for the common correlation scenario is given in the following proposition, which can be obtained by setting $\FC_k=t_k\FC$ in Theorem~\ref{the_ZF}.

\begin{proposition} \label{the_ZF_common} (SINR and ESR with ZF) If it holds true that $ \lambda_{\mathrm{min}} (\frac{1}{M} \BH^{H}\BH) >\varepsilon $ with probability $1$ for an $\varepsilon>0$ and $\lim_{z\rightarrow 0}z\delta(z)>0$, $\lim_{z\rightarrow 0}z\omega(z)>0$ exist, the SINR can be approximated by
\begin{equation}
\label{SINR_ZF_com}
{\gamma}_{\mathrm{ZF},k} \xlongrightarrow[K \xlongrightarrow{(c_1,c_2)} \infty]{\mathcal{P}} \overline{{\gamma}}_{\mathrm{ZF},k},  
\end{equation}
where $ \overline{{\gamma}}_{\mathrm{ZF},k}=p_k (\sigma^2\sum_{l=1}^{K}\frac{p_l}{M (u_l\underline{\kappa}+t_l\underline{\omega}) })^{-1} $, and $\underline{\kappa}$ and $\underline{\omega}$ are the solution of 
 \begin{equation}
 \label{basic_zfeq2}
 \begin{aligned}
 \begin{cases}
   &\underline{\delta}=\frac{1}{M}\Tr(\bold{R}\underline{\BPS}_R),
   \\
   & \underline{\kappa}=\frac{1}{M}\Tr(\FF\underline{\BPS}_R),
   \\
 &\underline{\omega}=\frac{1}{L}\Tr( \FC\underline{\BPS}_{C}),
 \\
  &  \underline{\overline{\kappa}}=\frac{1}{L}\Tr(\BU\underline{\BPS}_{T}),
\\
 & \underline{\overline{\omega}}=\frac{1}{L}\Tr(\FT\underline{\BPS}_{T}),
\end{cases}
  \end{aligned}
 \end{equation}
  with $\underline{\BPS}_{R}=\left( \bold{I}_{M}+ \frac{L\underline{\overline{\kappa}}}{M}\BF+\frac{L \underline{\omega}\underline{\overline{\omega}}}{M\underline{\delta}}\bold{R}   \right)^{-1}$, $ \underline{\BPS}_{C}=\left(\frac{1}{\underline{\delta}}\bold{I}_{L}+\underline{\overline{\omega}}\BC\right)^{-1}$, and $\underline{\BPS}_{T} =\left(\underline{\kappa}\BU+\underline{\omega}\bold{T}\right)^{-1}$. Moreover, we have
\begin{align}
\label{eva_ZF_common}
 \frac{R_{\mathrm{ZF}}}{K}  \xlongrightarrow[K \xlongrightarrow{(c_1,c_2)} \infty]{\mathcal{P}}\frac{\overline{R}_{\mathrm{ZF}}}{K},
\frac{\E[R_{\mathrm{ZF}}]}{K}  \xlongrightarrow[]{K \xlongrightarrow{(c_1,c_2)} \infty} \frac{\overline{R}_{\mathrm{ZF}}}{K},
\end{align}
where $\overline{R}_{\mathrm{ZF}}=\sum_{k=1}^{K}  \log(1+\overline{\gamma}_{\mathrm{ZF},k})$.
\end{proposition}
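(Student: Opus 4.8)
The plan is to obtain Proposition~\ref{the_ZF_common} from Theorem~\ref{the_ZF} by the same specialization that takes Theorem~\ref{uncommon_the} to Proposition~\ref{the_ESR_common}, namely $\FF_k=u_k\FF$ and $\FC_{R,k}=\FC_R$ (equivalently $\FC_k=t_k\FC$), and then to collapse the resulting $(1+2K)$ scalar equations into the scalar system~(\ref{basic_zfeq2}). First I would substitute these identities into~(\ref{basic_zf1}) and verify that it admits a solution of the reduced form $\underline{\omega}_k=t_k\underline{\omega}$ and $\underline{\mu}_k=u_k\underline{\kappa}+t_k\underline{\omega}$ for two scalars $\underline{\kappa},\underline{\omega}>0$. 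With this ansatz the matrix sums defining $\BK_R$ and $\BK_C$ become rank-structured: $\sum_k\frac{\FF_k}{M\underline{\mu}_k}=\big(\frac1M\sum_k\frac{u_k}{u_k\underline{\kappa}+t_k\underline{\omega}}\big)\FF=\frac{L\underline{\overline{\kappa}}}{M}\FF$ with $\underline{\overline{\kappa}}=\frac1L\Tr(\BU\underline{\BPS}_T)$ and $\underline{\BPS}_T=(\underline{\kappa}\BU+\underline{\omega}\FT)^{-1}$; likewise $\sum_k\frac{\underline{\omega}_k\FR}{M\underline{\delta}\underline{\mu}_k}=\frac{L\underline{\omega}\underline{\overline{\omega}}}{M\underline{\delta}}\FR$ and $\sum_k\frac{\FC_k}{L\underline{\mu}_k}=\underline{\overline{\omega}}\FC$ with $\underline{\overline{\omega}}=\frac1L\Tr(\FT\underline{\BPS}_T)$. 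Hence $\BK_R=\underline{\BPS}_R$, $\BK_C=\underline{\BPS}_C$, and the remaining equations $\underline{\delta}=\frac1M\Tr(\FR\underline{\BPS}_R)$, $\underline{\kappa}=\frac1M\Tr(\FF\underline{\BPS}_R)$, $\underline{\omega}=\frac1L\Tr(\FC\underline{\BPS}_C)$, together with the above definitions of $\underline{\overline{\kappa}},\underline{\overline{\omega}}$, are exactly~(\ref{basic_zfeq2}). Invoking uniqueness of the positive solution of~(\ref{basic_zf1}) under \textbf{A.1}--\textbf{A.3} (the standard interference-function argument~\cite{wagner2012large}), this reduced solution is \emph{the} solution, so $\underline{\mu}_l=u_l\underline{\kappa}+t_l\underline{\omega}$.

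The second step is then immediate: inserting $\underline{\mu}_l=u_l\underline{\kappa}+t_l\underline{\omega}$ into the SINR limit of Theorem~\ref{the_ZF}, $\overline{\gamma}_{\mathrm{ZF},k}=p_k\big(\sigma^2\sum_l\frac{p_l}{M\underline{\mu}_l}\big)^{-1}$, yields the claimed $\overline{\gamma}_{\mathrm{ZF},k}=p_k\big(\sigma^2\sum_l\frac{p_l}{M(u_l\underline{\kappa}+t_l\underline{\omega})}\big)^{-1}$; the convergence $\gamma_{\mathrm{ZF},k}\to\overline{\gamma}_{\mathrm{ZF},k}$ in probability is inherited verbatim from Theorem~\ref{the_ZF}, whose hypotheses ($\lambda_{\mathrm{min}}(\frac1M\BH^H\BH)>\varepsilon$ a.s. and the existence of positive $\lim_{z\to0}z\delta(z)$, $\lim_{z\to0}z\omega(z)$) are precisely the ones assumed here. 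Recall that with ZF there is no residual interference at all, since $\Bh_k^H\Bg_i=\frac1{\sqrt{\xi_{\mathrm{ZF}}}}[\BH^H\BH(\BH^H\BH)^{-1}]_{k,i}=0$ for $i\neq k$, so $\gamma_{\mathrm{ZF},k}=p_k/(\sigma^2\xi_{\mathrm{ZF}})$ and the only stochastic object whose deterministic equivalent is needed is $\xi_{\mathrm{ZF}}=\frac1M\Tr((\BH^H\BH)^{-1}\BP)$, equivalently $[(\BH^H\BH)^{-1}]_{l,l}\to1/\underline{\mu}_l$. Finally, the rate statements~(\ref{eva_ZF_common}) follow from the per-user SINR convergence by the continuous mapping theorem applied to $x\mapsto\log(1+x)$ for the in-probability part, and by a dominated-convergence / uniform-integrability argument for the mean, bounding $\log(1+\gamma_{\mathrm{ZF},k})\le\log(1+p_k|\Bh_k^H\Bg_k|^2/\sigma^2)$ and using standard moment bounds on Gaussian quadratic forms, as in~\cite{wagner2012large,kammoun2019asymptotic}.

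I expect the main obstacle to be Step~1, i.e. making the collapse rigorous: one must show that, under the common-correlation substitution, the fixed-point system~(\ref{basic_zf1}) has a unique positive solution \emph{and} that this solution carries the product structure $\underline{\mu}_k=u_k\underline{\kappa}+t_k\underline{\omega}$, so that exhibiting the reduced solution is conclusive rather than merely producing one admissible candidate. This rests on the monotonicity/standard-interference structure of the fixed-point map under \textbf{A.1}--\textbf{A.3} together with the assumed existence of the $z\to0$ limits (which also guarantee that~(\ref{basic_zf1}) is the well-defined $z\to0$ limit of~(\ref{basic_eq1})). The remaining work — collapsing the $K$-vector equations to the scalar ones and simplifying the SINR — is routine algebra. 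An alternative route is to take $z\to0$ directly in Proposition~\ref{the_ESR_common}, using the scaling $\BPS_R,\BPS_C$ like $z^{-1}$ and $\BPS_T$ like $z$ forced by the assumed limits; there the extra burden is to show that the interference term of $\overline{\gamma}_{\mathrm{RZF},k}$ vanishes as $z\to0$, which is exactly the algebraic zero-forcing identity $\Bh_k^H\Bg_i=0$ used above.
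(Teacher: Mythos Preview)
Your proposal is correct, but it follows a different route from the paper. The paper's proof (Appendix~\ref{proof_the_ZF_common}) takes the $z\to 0$ limit directly in Proposition~\ref{the_ESR_common}: it sets $\widetilde{\delta}=\lim_{z\to0}z\delta(z)$, $\widetilde{\kappa}=\lim_{z\to0}z\kappa(z)$, $\widetilde{\omega}=\lim_{z\to0}z\omega(z)$ and $\widetilde{\overline{\kappa}}=\lim_{z\to0}z^{-1}\overline{\kappa}$, $\widetilde{\overline{\omega}}=\lim_{z\to0}z^{-1}\overline{\omega}$, verifies these five scalars satisfy~(\ref{basic_zfeq2}), and then evaluates $\lim_{z\to0}\overline{C}_{\mathrm{com}}$ via $\lim_{z\to0}z^2\frac{-\mathrm{d}(t_k\omega+u_k\kappa)}{\mathrm{d}z}=t_k\underline{\omega}+u_k\underline{\kappa}$ to obtain the SINR. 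In other words, the paper uses precisely what you describe as your ``alternative route.'' Your primary approach instead specializes Theorem~\ref{the_ZF} by substituting $\FF_k=u_k\FF$, $\FC_k=t_k\FC$ into~(\ref{basic_zf1}) and showing the $(2K{+}1)$-equation system collapses to the five scalar equations~(\ref{basic_zfeq2}) under the ansatz $\underline{\mu}_k=u_k\underline{\kappa}+t_k\underline{\omega}$---a reduction the paper mentions in the remark following Proposition~\ref{the_ZF_common} but does not spell out. Your route is cleaner in that the $z\to0$ analysis has already been absorbed into Theorem~\ref{the_ZF}, so only the algebraic collapse plus uniqueness of the fixed point remain; the paper's route avoids invoking uniqueness of~(\ref{basic_zf1}) but repeats the limiting argument already done for Theorem~\ref{the_ZF}, this time starting from the simpler five-parameter system.
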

\begin{proof} The proof of Proposition~\ref{the_ZF_common} is given in Appendix~\ref{proof_the_ZF_common}.
\end{proof}
\begin{remark} Propositions~\ref{the_ESR_common} and~\ref{the_ZF_common} can be obtained from Theorems~\ref{uncommon_the} and~\ref{the_ZF}, respectively, by letting $\BF_k=u_k\BF$ and $\FC_k=t_k \FC$. In this case, the $2K+1$ equations in~(\ref{basic_eq1}) degenerates to the five equations in~(\ref{basic_eq2}).
\end{remark}

Despite the above Theorems~\ref{uncommon_the} and~\ref{the_ZF}, and Propositions~\ref{the_ESR_common} and~\ref{the_ZF_common} show the evaluation of ESR for FAS-RIS systems, the impact of port selection of FAs and phase shifts of RISs on the ESR is implicit. In particular, the port selection schemes will affect correlation matrices at the BS, e.g., $\FR$ and $\BF$ in Proposition~\ref{the_ESR_common}, and the phase shifts will affect the correlation matrix at the RIS, e.g., $\BC$ in Proposition~\ref{the_ESR_common}. The ESR evaluation relies on the solution of a system equations parameterized by the correlation matrices, which makes the optimization problem challenging even if we replace $\E[R]$ with the derived ESR evaluation. Moreover, it is worth noticing that here we adopt the multi-port selection constraint in~\cite{lai2023performance}, where the potential ports of the $M$ transmit antennas can not overlap. The proposed ESR evaluation and two-timescale design can be extended to FAS-RIS systems with other multi-port selection constraints. Furthermore, the derived ESR is applicable for FAS-RIS systems with general correlation matrices and could tackle the FASs with different correlation structures.

\section{Large System Analysis Over I.I.D. Channels}
\label{large_analysis}
It has been shown that the ESR with ZF approaches that with RZF in the high signal-to-noise ratio (SNR) regime, and thus the ESR with ZF can serve as a good approximation and lower bound for the ESR with RZF. In this section, to investigate the impact of number of antennas, we ignore the correlation between FAs/reflecting elements and assume i.i.d. channels. This scenario typically occurs when antennas are deployed to be separated enough and such a methodology analyzing large-scale MIMO systems~\cite{hoydis2013massive,zhang2023asymptotic}. It is worth noticing that although the analysis with the i.i.d. setting represents an idealized case, it can be regarded as an upper bound for the practical systems. In particular. we derive the ESR with ZF over i.i.d. channels with ZF to show the impact of number of FAs and number of reflecting elements.


With i.i.d. channels, the correlation matrices are identity matrices and the large-scale channel gain for different users is the same. Thus, the i.i.d. channel can be given by $\Bh_i=\sqrt{u}\Bw_i+\sqrt{t}\BX\By_i$, where $u$ and $t$ represent the large-scale channel gain of the direct and cascaded link, respectively. An approximated expression for the ESR over i.i.d. channels is given by the following proposition.

\begin{proposition} (ESR with ZF over i.i.d. channels)
\label{iid_pro}
Under assumption~\textbf{A.1}, the ESR can be approximated by 
\begin{equation}
\label{iid_fas_esr}
\overline{R}_{\mathrm{iid,ZF}}(t,u,c_1,c_2,\sigma^2)=K\log\Bigl(1+\frac{(1-c_1)\beta(u,t,c_2)}{c_1\sigma^2}\Bigr),
\end{equation}
where $c_1=\frac{K}{M}$, $c_2=\frac{K}{L}$, and 
\begin{equation}
\begin{aligned}
\beta(u,t,c_2)&=\frac{(u+t-tc_2+\sqrt{\alpha(u,t,c_2)})}{2},
\\
\alpha(u,t,c_2)&= c_2^2t^2-2c_2t^2+t^2+2tuc_2+2tu+u^2 .
\end{aligned}
\end{equation}
\end{proposition}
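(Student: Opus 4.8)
The plan is to specialize Proposition~\ref{the_ZF_common} (ZF, common correlation) to i.i.d.\ channels and solve the resulting scalar fixed-point system in closed form. For $\Bh_i=\sqrt{u}\Bw_i+\sqrt{t}\BX\By_i$ every correlation matrix collapses to an identity, $\BR=\BF=\BI_M$ and $\FC=\BC=\BI_L$ (the unitary $\bold{\Phi}$ is absorbed into $\By_i$ without changing its law), while $\BU=u\BI_K$ and $\FT=t\BI_K$. Since identity matrices trivially satisfy~\textbf{A.2}--\textbf{A.3}, only~\textbf{A.1} is needed, together with ZF feasibility $c_1<1$; I would also check that the limits $\lim_{z\to0}z\underline{\delta}(z)$ and $\lim_{z\to0}z\underline{\omega}(z)$ appearing in Proposition~\ref{the_ZF_common} exist, which is immediate since the degenerate system below has a unique positive solution.

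Under these substitutions $\underline{\BPS}_R,\underline{\BPS}_C,\underline{\BPS}_T$ in~(\ref{basic_zfeq2}) are scalar multiples of the identity, so the five equations reduce to a scalar system. Writing $a:=1/\underline{\delta}$, $S:=u\underline{\kappa}+t\underline{\omega}$ and using $L/M=c_1/c_2$, one gets $\underline{\kappa}=\underline{\delta}$, $\underline{\overline{\kappa}}=uc_2/S$, $\underline{\overline{\omega}}=tc_2/S$, $\underline{\omega}=\underline{\delta}/(1+\underline{\delta}\,\underline{\overline{\omega}})$ and $a=1+c_1u/S+(c_1t/c_2)\,\underline{\omega}\,\underline{\overline{\omega}}/\underline{\delta}$. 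Eliminating $\underline{\omega},\underline{\overline{\kappa}},\underline{\overline{\omega}}$ and setting $P:=aS$, the identity $S=u\underline{\kappa}+t\underline{\omega}$ turns into $P=u+tP/(P+tc_2)$, i.e.\ the quadratic $P^2+(tc_2-u-t)P-utc_2=0$ whose discriminant is exactly $\alpha(u,t,c_2)$, so its positive root is $P=\beta(u,t,c_2)$. Feeding $P=aS$ and $tP/(P+tc_2)=P-u$ into the equation for $a$ then collapses it to $a=1+c_1a$, hence $a=1/(1-c_1)$, i.e.\ $\underline{\delta}=1-c_1$; therefore $S=u\underline{\kappa}+t\underline{\omega}=\beta/a=(1-c_1)\beta(u,t,c_2)$.

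Finally I would substitute into the ZF SINR of Proposition~\ref{the_ZF_common}: with $u_l=u$, $t_l=t$ and equal powers $p_l=p$ (the homogeneous convention, under which $\overline{R}_{\mathrm{iid,ZF}}$ carries no $p_k$), $\overline{\gamma}_{\mathrm{ZF},k}=p_k\big(\sigma^2\sum_{l}p_l/(M(u_l\underline{\kappa}+t_l\underline{\omega}))\big)^{-1}=MS/(K\sigma^2)=(1-c_1)\beta(u,t,c_2)/(c_1\sigma^2)$, independent of $k$; summing $\log(1+\overline{\gamma}_{\mathrm{ZF},k})$ over the $K$ users gives the claimed $\overline{R}_{\mathrm{iid,ZF}}$.

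The obstacles are bookkeeping rather than conceptual: tracking $L/M=c_1/c_2$ consistently, confirming the fixed-point iteration picks the positive root $P=\beta$ (so that $\underline{\delta}=1-c_1>0$ is consistent with $c_1<1$), and stating explicitly that the i.i.d.\ regime uses equal power allocation. As a sanity check, the same formula can instead be obtained by computing $\lim\frac{1}{M}\Tr((\BH^{H}\BH)^{-1})$ for $\BH=\sqrt{u}\BW+\sqrt{t}\BX\BY$ from the fixed-point characterization of its limiting spectral distribution, which yields the identical quadratic.
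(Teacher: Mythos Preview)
Your proposal is correct and follows exactly the paper's route: specialize Proposition~\ref{the_ZF_common} to $\BR=\BF=\BI_M$, $\BC=\BI_L$, $\BU=u\BI_K$, $\FT=t\BI_K$, reduce~(\ref{basic_zfeq2}) to a scalar system, and solve it to get $u\underline{\kappa}+t\underline{\omega}=(1-c_1)\beta(u,t,c_2)$ (the paper writes this as $\underline{\mu}=t\underline{\omega}+u\underline{\delta}$ in~(\ref{basic_iid2}) and simply asserts the closed-form solution, whereas you carry out the algebra via the substitution $P=aS$). Your observation that equal power allocation is implicitly assumed, and your verification that the positive root is selected, are both correct refinements that the paper leaves unstated.
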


\begin{proof} The proof of Proposition~\ref{iid_pro} is given in Appendix~\ref{proof_iid_pro}.
\end{proof}

Proposition~\ref{iid_pro} gives the closed-form ESR for i.i.d. RIS channels with direct link, which is equivalent to the sum of a Rayleigh channel with large-scale channel gain $u$ and a Rayleigh-product channel with large-scale channel gain $t$. Typically, we have $u>t$ due to the two-hop path loss of the cascaded link. When $t=0$, the cascaded link vanishes (no RIS) and the channel degenerates to an i.i.d. Rayleigh channel, such that~(\ref{iid_fas_esr}) degenerates to
\begin{equation}
\label{rate_ZF}
\overline{R}_{\mathrm{iid,ZF}}(0,u,c_1,c_2,\sigma^2)=K\log\left(1+\frac{(1-c_1){ u}}{c_1\sigma^2}\right),
\end{equation}
which is equivalent to~\cite[Eq. (44)]{wagner2012large}. Next, we show the gain due to the utilization of the RIS.
\begin{remark} (\textbf{How many selected ports do we need for FAS-RIS systems?})
We can also use Proposition~\ref{iid_pro} to evaluate the number of selected ports required to achieve a given target rate $R_{\mathrm{target}}$ with given $L$ and $K$. By~(\ref{iid_fas_esr}), we can obtain that the minimum number of selected ports $M^{*}$ as
\begin{equation}
M^{*}=K\Bigl(\frac{\sigma^2}{\beta(u,t,c_2)}( 2^{\frac{R_{\mathrm{target}}}{K}}-1)+1\Bigr).
\end{equation}
Note that $M^{*}$ is derived based on i.i.d. channel assumption and assumption~\textbf{A.1}, where the channel correlation is not taken into account. Although the i.i.d. may lead to in an optimistically small required number of ports, it provides a lower bound for the number of selected ports $M$. Compare with FASs without RIS, FAS-RIS systems require less selected ports.
\end{remark}
\begin{remark}(\textbf{Impact of number of reflecting elements at the RIS})
Under assumption~\textbf{A.1}, the SINR gain due to the cascaded link is $\frac{(1-c_1)(t-tc_2-u+\sqrt{\alpha(u,t,c_2)})}{2c_1\sigma^2}$,
which increases with the number of reflecting elements. When $L \gg K$ ($c_2\rightarrow 0$), i.e., the number of elements at the RIS is much larger than the number of users and selected ports, the gain approaches $\frac{(1-c_1)t}{c_1\sigma^2}$ such that the limiting sum rate is  
\begin{equation}
\label{rate_ZF}
\overline{R}_{\mathrm{iid,ZF}}(t,u,c_1,0,\sigma^2)=K\log\left(1+\frac{(1-c_1) (u+t)}{c_1\sigma^2}\right),
\end{equation}
which can be regarded as the ESR over an i.i.d. Rayleigh channel with gain $u+t$. This is because as the number of reflecting elements goes to infinity, the statistical attribute of the cascaded link (Rayleigh-product channel) approaches that of a single-hop Rayleigh channel~\cite{zhang2023asymptotic}.
\end{remark}




\begin{remark} 
\label{rem_comp_mrt}
(\textbf{Comparison with MRT precoding}) Given assumption~\textbf{A.1}, the ESR with MRT precoding can be approximated by
\begin{align*}
&\overline{R}_{\mathrm{MRT}} \approx  \numberthis \label{iid_eva_MRT}
\\
& K\log\Bigl( 1+\frac{ ( t +u  )^2}{\frac{(K-1)t(u+t)}{M}+\frac{(K-1)t(\frac{t L }{M^2} + \frac{uL}{M} )}{L}+\frac{K\sigma^2(t+u)}{M}} \Bigr).
\end{align*}
We can conclude from~(\ref{iid_eva_MRT}) that when the SNR increases, the ESR with MRT precoding will saturate and not increase with the SNR while the ESR of ZF increases with the SNR. This shows that ZF and RZF outperforms MRT in the high SNR regime.
\end{remark}

\section{ESR Optimization}
\label{sec_algorithm}
In this section, we tackle the ESR maximization problem $\mathcal{P}1$ in~(\ref{p1_exp}) by a two-timescale approach. In particular, the port selection, regularization factor $z$, and phase shifts $\bold{\Phi}$ are optimized based on the statistical CSI, i.e., correlation matrices at the BS and RIS and large-scale channel gain, while the instantaneous CSI for the end-to-end link is utilized for RZF/ZF precoding. In each iteration, we first determine the port selection vector $\Bs$ with given $z$ and $\bm{\Phi}$ as shown in Section~\ref{sec_port_sel}. Then we jointly optimize $z$ and $\bm{\Phi}$ with given $\Bs$ in Section~\ref{sec_RZF_opt} and give the overall algorithm in Section~\ref{sec_overall_alg}. Section~\ref{sec_homo_z} gives the optimal $z$ for homogeneous FAS-RIS systems.
\subsection{Port Selection}
\label{sec_port_sel}
Assuming $M\ge K$, the port selection problem with given phase shifts and regularization factor can be formulated as the following optimization problem
\begin{equation}
\begin{aligned}
\mathcal{P}2:~& \max_{\bm{s}} \overline{R}_{\mathrm{RZF}}(\Bs),
\\
\mathrm{s.t.}~& \mathcal{C}_1:\Bs^{T}\bold{1}_{M_{\mathrm{tot}}}=M,~\Bs\in\{0,1\}^{M_{\mathrm{tot}}}.\\
\end{aligned}
\end{equation}
Note that $\mathcal{P}2$ is non-convex and challenging to solve due to the port selection constraint, whose optimal solution requires the exhaustive search. Even if we adopt the linear relaxation~\cite{dua2006receive} to replace the binary constraint with linear constraint on continuous $\Bs$, the problem is still non-convex due to the non-convexity of the objective function $\overline{R}_{\mathrm{RZF}}(\Bs)$. To this end, we resort to optimize a lower bound $\overline{R}_{\mathrm{ZF}}(\Bs)$. Moreover, it can be proved that  $\overline{R}_{\mathrm{ZF}}(\Bs)$ is convex with respect to $\Bs \in [0,1]^{M_{\mathrm{tot}}}$. This substitution facilitates linear relaxation of the port selection vector, allowing the FW algorithm to be applicable for the port selection. Consequently, this approach motivates us to find a suitable port selection scheme based on ZF precoding and further optimize the phase shifts and regularization factor $\overline{R}_{\mathrm{RZF}}(\Bs)$ with given $\Bs$. The port selection problem with ZF can be formulated as follows.

\begin{equation}
\begin{aligned}
\mathcal{P}3:~& \max_{\bm{s}} \overline{R}_{\mathrm{ZF}}(\Bs),
\\
\mathrm{s.t.}~& \overline{\mathcal{C}}_1:\Bs^{T}\bold{1}_{M_{\mathrm{tot}}}\le M,~\Bs \in [0,1]^{M_{\mathrm{tot}}}.\\
\end{aligned}
\end{equation}
The asymptotic convexity of $\overline{R}_{\mathrm{ZF}}(\Bs)$ follows from
\begin{align*}
&\frac{1}{M}\Tr((\BP^{\frac{1}{2}}\BH^{H}_{\mathrm{tot}}\diag(\Bs)\BH_{\mathrm{tot}}\BP^{\frac{1}{2}})^{-1})
\\
&\xlongrightarrow[K \xlongrightarrow{(c_1,c_2)} \infty]{\mathcal{P}} \sum_{l=1}^{K}\frac{p_l}{M (u_l\underline{\kappa}+t_l\underline{\omega}) },\numberthis\label{obj_convex}
\end{align*}
with $\BH_{\mathrm{tot}}=\LF_{\mathrm{tot}}\BW'\RU+\LR_{\mathrm{tot}}\BX'\MC_{L} \bold{\Phi}\MC_{R}\BY\RT$, and $\BW'\in\mathbb{C}^{M_{\mathrm{tot}}\times K}$ and $\BX'\in \mathbb{C}^{M_{\mathrm{tot}}\times L}$ are i.i.d. Gaussian random matrices. The left hand side of~(\ref{obj_convex}) is convex with respect to $\Bs$, which indicates the asymptotic convexity of $\sum_{l=1}^{K}\frac{p_k}{M (u_l\underline{\kappa}+t_l\underline{\omega}) }$ and thus $\overline{R}_{\mathrm{ZF}}(\Bs)$ is convex. The proof of the asymptotic convexity is similar to~\cite[Theorem 1]{elkhalil2018measurement} and is omitted here. To solve $\mathcal{P}3$, we use the FW method~\cite{bertsekas1997nonlinear} and exploit the special structure of the constraint on $\Bs$ to obtain the closed-form update rule for $\Bs$.  Specifically, given the current iteration $\Bs^{(t)}$, in the next iteration of the FW method, it requires to solve the following problem to determine the update rule
\begin{equation}
\label{update_step_pro}
\overline{\Bs}^{(t+1)}=\arg \max_{\Bs}~(\Bs-{\Bs}^{(t)})^{T}\bm{\nabla}_{\Bs}\overline{R}_{\mathrm{ZF}}(\Bs),~\mathrm{s.t.}~\overline{\mathcal{C}}_1,
\end{equation}
where the gradient $\bm{\nabla}_{\Bs}\overline{R}_{\mathrm{ZF}}(\Bs)$ can be computed by the same approach as shown in Section~\ref{sec_RZF_opt}. With the special structure of the problem in~(\ref{update_step_pro}), we can derive a lightweight algorithm with closed-form optimal solution to the subproblem in~(\ref{update_step_pro}), which can be obtained by setting $s_i=1$ if $[\bm{\nabla}_{\Bs}\overline{R}_{\mathrm{ZF}}(\Bs)]_i$ is among the largest $M$ values of $[\bm{\nabla}_{\Bs}\overline{R}_{\mathrm{ZF}}(\Bs)]_l$, $l=1,2,...,M_{\mathrm{tot}}$, and $s_i=0$ otherwise. The port selection algorithm is presented in \textbf{Algorithm~\ref{PORT_alg}}.
\begin{algorithm}[t!]
\caption{Port Section Algorithm}
\label{PORT_alg} 
\begin{algorithmic}[1] 
\REQUIRE  Set $\Bs^{(0)}=\frac{M}{M_{\mathrm{tot}}}\bm{1}_{M_{\mathrm{tot}}}$ and $t=0$.
\REPEAT
\STATE Compute $\overline{\bm{s}}^{(t+1)}$ by solving~(\ref{update_step_pro}).
\STATE ${\bm{s}}^{(t+1)}={\bm{s}}^{(t)}+\frac{2}{t+2}(\overline{\bm{s}}^{(t+1)}-{\bm{s}}^{(t)})$.
\STATE $t \leftarrow  t+1$.
\UNTIL $\Bigl| \frac{\overline{R}_{\mathrm{ZF}}(\Bs^{(t)})-\overline{R}_{\mathrm{ZF}}(\Bs^{(t-1)})}{\overline{R}_{\mathrm{ZF}}(\Bs^{(t-1)})} \Bigr|<\varepsilon$.
\STATE Find the index set $\mathcal{M}$ of $M$ largest values in $\Bs^{(t)}$. 
\STATE Let $\Bs^{*}$ be $[\Bs^{*}]_{m}=1$ if $m\in \mathcal{M}$ and $[\Bs^{*}]_{m}=0$ if $m\notin \mathcal{M}$.
\ENSURE  $\Bs^{*}$.
\end{algorithmic}
\end{algorithm}

\subsection{RZF and Phase Shifts Optimization With Given Port Selection}
\label{sec_RZF_opt}
Next, we investigate the design for $z$ and $\bold{\Phi}$ with fixed $\Bs$ and and exploit the alternating optimization (AO) approach to solve $\mathcal{P}1$. The optimization problems for $z$ and $\bold{\Phi}$ can be given by following $\mathcal{P}4$ and $\mathcal{P}5$, respectively,
\begin{align}
\mathcal{P}4:~& \max_{z>0} \overline{R}_{\mathrm{RZF}}(z),\label{p2_exp}
\\
\mathcal{P}5:~& \max_{\bold{\Phi}} \overline{R}_{\mathrm{RZF}}(\bm{\Phi}),~~\nonumber
\\
&
\mathrm{s.t.}~\mathcal{C}2:~\bold{\Phi}=\diag(e^{\jmath \phi_1},e^{\jmath \phi_2},...,e^{\jmath \phi_L}).\label{p3_exp}
\end{align}
In each iteration of the AO algorithm, we iteratively solve $\mathcal{P}4$ and $\mathcal{P}5$, by employing the $1$-dimensional search and gradient ascent algorithm, respectively. The gradient algorithm for solving $\mathcal{P}5$ and AO algorithm is given in~\textbf{Algorithm~\ref{gra_alg}} and~\textbf{Algorithm~\ref{AO_alg}}, respectively. In each iteration of~\textbf{Algorithm~\ref{gra_alg}}, we update $\phi_l$ by searching in the gradient direction. The search is terminated when a stationary point for the objective function is obtained. If ZF is adopted to replace RZF, we can omit the optimization of $z$ and only need to optimize the phase shifts by gradient methods.

\begin{algorithm}[t!]
\caption{Gradient Ascent Algorithm by Optimizing $\bold{\Phi}$}
\label{gra_alg} 
\begin{algorithmic}[1] 
\REQUIRE  $\bm{\Phi}^{\left(0 \right)}$, initial stepsize $\alpha_{0}$, scaling factor $0<c<1$, and control parameter $0<\beta<1$. 
Set $t=0$.
\REPEAT
\STATE Compute the gradient 
$$\bm{\nabla}_{\bm{\phi}}\overline{R}(\boldsymbol{\Phi})=  (\frac{\partial \overline{R}(\boldsymbol{\Phi})}{\partial \phi_{1}}, \frac{\partial \overline{R}(\boldsymbol{\Phi})}{\partial \phi_{2}},..., 
\frac{\partial \overline{R}(\boldsymbol{\Phi})}{\partial \phi_{L}})^{T},$$ 
using~(\ref{derivative_Rate}) and its direction $\bold{g}^{(t)}=\frac{\bm{\nabla}_{\boldsymbol{\Phi}}\overline{R}(\boldsymbol{\Phi})}{\|\bm{\nabla}_{\boldsymbol{\Phi}}\overline{R}(\boldsymbol{\Phi}) \|}$.
\STATE $\alpha\leftarrow\alpha_{0}$.
\WHILE{$\overline{R}(\diag[\exp(\jmath\bm{\phi}^{(t)}+\alpha \jmath \bold{g}^{(t)})])-\overline{R}(\boldsymbol{\Phi}^{(t)})<\alpha\beta \| \bm{\nabla}_{\bm{\Phi}}\overline{R}(\bm{\Phi}^{(t)})\|$} 
\STATE  $\alpha \leftarrow c\alpha$.
\ENDWHILE 
	\STATE $\boldsymbol{\phi}^{(t+1)} \leftarrow \bm{\bold{\phi}}^{(t)}-\alpha \bold{g}^{(t)}$.
	\STATE $\bold{\Phi}^{(t+1)} \leftarrow \diag[\exp(\jmath\bm{\phi}^{(t+1)})]$.
\STATE $t \leftarrow  t+1$.
\UNTIL $\Bigl|\frac{\overline{R}^{(t)}(\bold{\Phi})-\overline{R}^{(t-1)}(\bold{\Phi})}{\overline{R}^{(t-1)}(\bold{\Phi})}\Bigr|<\varepsilon$.
\ENSURE  $ \bold{\Phi}^{(t)}$.
\end{algorithmic}
\end{algorithm}

\begin{algorithm}[t!]
\caption{RZF and Phase Shifts Optimization with Given Selection Vector}
\label{AO_alg} 
\begin{algorithmic}[1] 
\REQUIRE  $\bm{\Phi}^{\left(0 \right)}$, $z^{(0)}$. 
Set $t=0$.
\REPEAT
\STATE Fix $\boldsymbol{\phi}^{(t-1)}$ and solve $\mathcal{P}4$ to obtain $z^{(t)}$ by $1$-dimensional search. 
\STATE Fix $z^{(t)}$ and solve $\mathcal{P}5$ by Algorithm~\ref{gra_alg}.
\STATE $t \leftarrow  t+1$.
\UNTIL Convergence.
\ENSURE  $\bm{\Phi}^{(t)}, z^{(t)}$.
\end{algorithmic}
\end{algorithm}

Since the phase shifts optimization requires the closed-form expression for the partial derivatives, we compute the partial derivatives of the ESR with respect to phase shifts $\phi_{l}$, $l=1,2,...,L$. In the following, for simplicity, we adopt the notation $(\cdot)^{(l)}=\frac{\partial (\cdot)}{\partial \phi_{l}}$. The approximation for the ESR is computed based on the parameters $\delta$, $\omega$, and $\overline{\omega}$ and the partial derivative of $R^{(l)}(\bm{\Phi})$ can be computed by the chain rule. We first evaluate $\delta^{(l)},\omega^{(l)},\overline{\omega}^{(l)}$ and then compute $R^{(l)}_{\mathrm{RZF}}(\bm{\Phi})$.

\subsubsection{Computation of $\delta^{(l)},{\kappa}^{(l)},\omega^{(l)}$}
Note that according to~(\ref{basic_eq2}), the parameters $\delta$, $\kappa$, $\omega$, $\overline{\kappa}$, $\overline{\omega}$ are implicit functions of $\phi_l$. The derivatives $\delta^{(l)}$, $\kappa^{(l)}$, $\omega^{(l)}$ can be computed by solving a system of equations. In fact, since $(\delta,\kappa, \omega)$ is the positive solution of~(\ref{basic_eq2}), we can compute $\bold{v}_{l}=(\delta^{(l)}, {\kappa}^{(l)},\omega^{(l)})^{T}$ by taking the derivative with respect to $\phi_l$ on both sides of~(\ref{basic_eq2}) to obtain
\begin{equation}
\begin{aligned}
\bm{\Pi}_{\mathrm{com}}\bold{v}_{l}
=\bold{u}_{l},
\end{aligned}
\end{equation}
 with $
\bold{u}_{l}
=
\begin{bmatrix}
0 &0 & U(\BA_l)
\end{bmatrix}^{T}
$, $U(\BA_l)=\frac{\Tr(\BA_l\BPS_C)}{L}-\frac{\overline{\omega}}{L}\Tr(\BA_l\BPS_C\FC\BPS_C)$, $\BA_l=\LC_L(\bold{G}_{l}\otimes\FC_R)\RC_L$, and 
\begin{equation}
\label{gg_pq}
\left[\BG_l\right]_{p,q}=\left\{
\begin{aligned}
& \jmath e^{\jmath (\phi_{l}-\phi_{q})} ,&p = l, \\
& -\jmath e^{\jmath (\phi_{p}-\phi_{l})} ,&q = l, \\
&0,  &\mathrm{otherwise}.
\end{aligned}
\right.
\end{equation}
Therefore, we have $\bold{u}_{l}=\bm{\Pi}_{\mathrm{com}}^{-1}\bold{v}_{l},  ~ l=1,2,...,L.$

\subsubsection{Computation of Gradients} The derivative $\overline{R}^{(l)}_{\mathrm{RZF}}(\boldsymbol{\Phi})$ can be computed based on $\bold{u}_{l}$ by the chain rule and the expression is given by
\begin{equation}
\label{derivative_Rate}
\overline{R}^{(l)}_{\mathrm{RZF}}(\boldsymbol{\Phi})=\sum_{k=1}^{K}  \frac{\overline{\gamma}_k^{(l)}}{1+\overline{\gamma}_k},
\end{equation}
where $\overline{\gamma}_k^{(l)}$ can be computed according to the chain rule and the closed-form expression is given in~(\ref{gamma_der}) at the top of the next page, 
\begin{figure*}[t!]
\vspace{-0.5cm}
\begin{align*}
\BPS_R^{(l)}&=-\BPS_R\Bigl(z\BI_M+\frac{L\FR}{M}[\frac{\omega^{(l)}\overline{\omega}}{\delta} +\frac{\omega}{\delta} (\omega^{(l)}\eta(\FT,\FT)+\kappa^{(l)}\eta(\FT,\BU))-\frac{\omega^{(l)}\overline{\omega}\delta^{(l)}}{\delta^2}]+ \frac{L\FF}{M}( \omega^{(l)}\eta(\FT,\BU)+\kappa^{(l)}\eta(\BU,\BU)) \Bigr) \BPS_R,
\\
\BPS_C^{(l)}&=-\BPS_C\Bigl( -\frac{\delta^{(l)}}{\delta^2}\BI_L +(\omega^{(l)}\eta(\FT,\FT)+\kappa^{(l)}\eta(\BU,\FT))\FC+\overline{\omega}_l\BA_l \Bigr) \BPS_C,
~\BPS_T^{(l)}=-\BPS_T( \BI_K +\omega^{(l)}\FT+\kappa^{(l)}\BU ) \BPS_T,
\\
\chi^{(l)}(\BA,\BB)&=\frac{1}{M}(\Tr(\BA\BPS_R^{(l)}\BB\BPS_R)+\Tr(\BA\BPS_R\BB\BPS_R^{(l)}) ),~\eta^{(l)}(\BA,\BB)=\frac{1}{L}(\Tr(\BA\BPS_T^{(l)}\BB\BPS_{T})+\Tr(\BA\BPS_T\BB\BPS_{T}^{(l)})),
\\
\Xi^{(l)}&=\frac{2}{L}(\Tr(\BA_l\BPS_C\BC\BPS_C)-\Tr(\BC\BPS_C\BC\BPS_C\BPS_C^{(l)} \BPS_C)),~\Xi_I^{(l)}=\frac{1}{L}(\Tr(\BA_l\BPS_C^2)-2\Tr(\BC\BPS_C^2\BPS_C^{(l)} \BPS_C)),
\\
\Delta^{(l)} &=-\Xi^{(l)}\eta(\FT,\FT)-\Xi\eta^{(l)}(\FT,\FT),~[\bm{\Pi}_{\mathrm{com}}^{-1}\bm{\chi}(\BA)]^{(l)}=-\bm{\Pi}_{\mathrm{com}}^{-1}\bm{\Pi}_{\mathrm{com}}^{(l)} \bm{\Pi}_{\mathrm{com}}^{-1}\bm{\chi}(\BA)+\bm{\Pi}_{\mathrm{com}}^{-1}\bm{\chi}^{(l)}(\BA),
\\
\Psi_{k,m}^{(l)}&=-\frac{t_m t_k \Delta^{(l)}}{\Delta^2}(\Xi+\frac{L\Xi_I }{M\delta^2} [\bm{\Pi}_{\mathrm{com}}^{-1}\bm{\chi}(\FR)]_{3})
+\frac{t_m t_k}{\Delta}(\Xi^{(l)}+\frac{L\Xi_I }{M\delta^2} [\bm{\Pi}_{\mathrm{com}}^{-1}\bm{\chi}(\FR)]_{3}^{(l)}+ [\bm{\Pi}_{\mathrm{com}}^{-1}\bm{\chi}(\FR)]_{3}( \frac{L\Xi_I^{(l)} }{M\delta^2})-\frac{2L\Xi_I\delta^{(l)} }{M\delta^3})
\\
&+\frac{Lt_m t_k}{M\Delta}( \Xi^{(l)}\eta(\FT,\BU)+\Xi\eta^{(l)}(\FT,\BU)-\Xi\eta(\FT,\BU)\frac{\Delta^{(l)}}{\Delta} )[\bm{\Pi}_{\mathrm{com}}^{-1}\bm{\chi}(\BF)]_{3}
\\
&
+(\frac{ Lt_l t_k\Xi\eta(\FT,\BU)}{M\Delta}+\frac{L (t_m u_k+t_k u_m)}{M}) [\bm{\Pi}_{\mathrm{com}}^{-1}\bm{\chi}(\BF)]_{3}^{(l)}+\frac{u_m u_k L}{M} [\bm{\Pi}_{\mathrm{com}}^{-1}\bm{\chi}(\BF)]_{2}^{(l)} ,
\\
\overline{C}_{\mathrm{com}}&=\frac{L}{M}[\eta^{(l)}(\BP,\FT)[\bm{\Pi}_{\mathrm{com}}^{-1}\bm{\chi}(\BI_M)]_{3}+ \eta(\BP,\FT)[\bm{\Pi}_{\mathrm{com}}^{-1}\bm{\chi}(\BI_M)]^{(l)}_{3}+\eta^{(l)}(\BP,\BU)[\bm{\Pi}_{\mathrm{com}}^{-1}\bm{\chi}(\BI_M)]_{2}+\eta^{(l)}(\BP,\BU)[\bm{\Pi}_{\mathrm{com}}^{-1}\bm{\chi}(\BI_M)]_{2}^{(l)}  ],
\\
\overline{\gamma}_k^{(l)} &=\frac{2\overline{\gamma}_k(t_k\omega^{(l)}+ u_k\kappa^{(l)})}{t_k\omega+u_k\kappa}
- \frac{\overline{\gamma}_k^2}{p_k(t_k\omega+u_k\kappa)}\Bigl[ \sum_{k\neq m }  p_m \Bigl(\frac{\Psi_{k,m}^{(l)}}{L(1+t_m\omega+u_m\kappa)^2}-\frac{2\Psi_{k,m}(t_m\omega^{(l)}+u_m\kappa^{(l)})}{L(1+t_m\omega+u_m\kappa)^3}\Bigr) 
\\
&
+2(1+t_k\omega+u_k\kappa)(t_k\omega^{(l)}+u_k\kappa^{(l)})\overline{C}_{\mathrm{com}}+(1+t_k\omega+u_k\kappa)^2\overline{C}_{\mathrm{com}}^{(l)} \Bigr].
\numberthis \label{gamma_der}
\end{align*}
\hrulefill
\vspace{-0.4cm}
\end{figure*}
where $\chi(\BA,\BB)=\frac{1}{M}\Tr(\BA\BPS_R\BB\BPS_{R})$ and $\eta(\BA,\BB)=\frac{1}{L}\Tr(\BA\BPS_{T}\BB\BPS_{T})$ with arbitrary $\BA$ and $\BB$.

\subsubsection{Extension to Uncommon Correlation Case}
We can generalize the phase shift optimization method of the common correlation case to the uncommon correlation case by computing the derivative for the SINR of uncommon correlation case. The derivative of the key parameters with respect to $\phi_l$, i.e., $\mu_k^{(l)}=\frac{\partial \mu_k }{\partial \phi_{l}}$ and $\delta^{(l)}=\frac{\partial \delta }{\partial \phi_{l}}$, can be obtained by taking derivative on both sides of~(\ref{basic_eq1}) and explicitly expressed by the solution $\bold{w}_{l}=\begin{bmatrix}\mu_1^{(l)},\mu_2^{(l)},...,\mu_K^{(l)}, \delta^{(l)}   \end{bmatrix}^{T}$ for the following system of equations
\begin{equation}
\begin{aligned}
\bm{\Pi}\bold{w}_{l}
=\bold{n}_{l},
\end{aligned}
\end{equation}
 with $
\bold{n}_{l}
=
\begin{bmatrix}
E_1(\BA_1^{(l)}),E_2(\BA_2^{(l)}),...,E_K(\BA_K^{(l)}),0
\end{bmatrix}^{T}
$, $E_k(\BA_l^{(l)})=\frac{\Tr(\BA_l^{(l)}\BPS_C)}{L}-\sum_{k=1}^{K}\frac{1}{L^2(1+\mu_k)}\Tr(\BA_k^{(l)}\BPS_C\FC\BPS_C)$, $\BA_k^{(l)}=\LC_{L,k}(\bold{G}_{l}\otimes\FC_{R,k})\RC_{L,k}$, and $\BG_l$ is given in~(\ref{gg_pq}). The closed-form evaluation for the derivative of SINR can be obtained by the chain rule by following similar computations of common correlation case.

\subsection{Joint Optimization Algorithm}
\label{sec_overall_alg}
The overall algorithm is given in \textbf{Algorithm~\ref{overall_alg}}.
\begin{algorithm}[t!]
\caption{Joint Optimization of Port Selection, Regularization Factor, and Phase Shifts}
\label{overall_alg} 
\begin{algorithmic}[1] 
\REQUIRE  $\FR_{\mathrm{tot}}$, $\BF_{\mathrm{tot}}$, $\BC_L$, $\BC_R$, $\FT$, $\BU$, $\bm{\Phi}^{(0)}$, $z^{(0)}$, $T_{\mathrm{iter}}$, $\mathcal{R}=\{ \}$,  and $t=1$.
\REPEAT
\STATE Compute $\Bs^{(t)}$ using Algorithm~\ref{PORT_alg} with $\bm{\Phi}^{(t-1)}$, $z^{(t-1)}$.
\STATE Compute $\bm{\Phi}^{(t)}$, $z^{(t)}$, and $R^{(t)}=\overline{R}_{\mathrm{RZF}}(\Bs^{(t)},\bm{\Phi}^{(t)},z^{(t)})$ using Algorithm~\ref{AO_alg} with $\Bs^{(t)}$.
\STATE $\mathcal{R}=\mathcal{R}\cup \{R^{(t)} \}$
\STATE $t \leftarrow  t+1$.
\UNTIL  $t >T_{\mathrm{iter}}$.
\STATE Find the index $i$ for the largest element in $\mathcal{R}$.
\ENSURE  $\Bs^{(i)}$, $z^{(i)}$, and $\bm{\Phi}^{(i)}$.
\end{algorithmic}
\end{algorithm}


\subsection{Do We Need to Optimize Regularization Factor for Homogeneous FAS-RIS Systems?}
\label{sec_homo_z}
In general cases, we resort to the search approach to find the optimal $z$. However, it can be proved that when users share common statistical CSI, i.e., the channel is homogeneous, the optimal $z$ is given by the following proposition.

\begin{proposition}\label{pro_opt_z} When the statistical CSI and signal power of users are same, i.e., common $\BF$, $\FS$, and $\FT=t\BI_K$, $\BU=u\BI_K$, $\BP=\BI_K$, the optimal $z$ that maximizes the ESR is $\frac{K\sigma^2}{M}$.
\end{proposition}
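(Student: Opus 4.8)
The plan is to turn the problem into a one-dimensional optimization of the common per-user SINR, specialize the deterministic equivalents of Proposition~\ref{the_ESR_common} to the homogeneous regime until the dependence on $z$ is explicit, and then pin down the unique maximizer. First I would use symmetry: when $\BP=\BI_K$, $\FT=t\BI_K$, $\BU=u\BI_K$, and the correlation matrices $\BF$, $\FR$, $\FC$ are common to all users, the fixed-point system~(\ref{basic_eq2}) is invariant under permutations of the user index, hence it admits the symmetric solution in which every user has the same $t_k\omega+u_k\kappa = t\omega+u\kappa =: g(z)$. By Proposition~\ref{the_ESR_common} this gives $\overline{\gamma}_{\mathrm{RZF},k}(z)\equiv\overline{\gamma}_{\mathrm{RZF}}(z)$ and therefore $\overline{R}_{\mathrm{RZF}}(z)=K\log(1+\overline{\gamma}_{\mathrm{RZF}}(z))$. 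Since $\log(1+\cdot)$ is strictly increasing, $\mathcal{P}4$ in~(\ref{p2_exp}) is equivalent to $\max_{z>0}\overline{\gamma}_{\mathrm{RZF}}(z)$, so the sum over users and the logarithm disappear from the analysis.

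Second, I would make the $z$-dependence explicit. With $\FT=t\BI_K$, $\BU=u\BI_K$, $\BP=\BI_K$ one has $\BPS_T=(1+g)^{-1}\BI_K$, so $\overline{\omega}=tK/(L(1+g))$, $\overline{\kappa}=uK/(L(1+g))$, and every quantity in Table~\ref{var_list} built from $\BPS_T$ collapses to an elementary function of $g$ (for instance $\eta(\BA,\BB)=\Tr(\BA\BB)/(L(1+g)^2)$, so $\eta(\FT,\FT)=t^2K/(L(1+g)^2)$, and so on). Substituting these into $\bm{\Pi}_{\mathrm{com}}$, $\bm{\chi}(\cdot)$, $\Psi_{k,l}$ and $\overline{C}_{\mathrm{com}}$, and repeatedly using the defining relations in~(\ref{basic_eq2}) among $\delta,\kappa,\omega$ (which involve $\BF$, $\FR$, $\FC$ only through the resolvents $\BPS_R$ and $\BPS_C$), the numerator of $\overline{\gamma}_{\mathrm{RZF}}(z)$ becomes $g(z)^2$ while the denominator factors as $(1+g(z))^2$ times an interference-plus-noise term. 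The goal of this step is to bring that term to a form $A(z)+\sigma^2 B(z)$ in which the correlation-dependent quantities enter only through a common multiplicative factor.

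Third --- and this is where the \emph{main obstacle} lies --- I would compute $\overline{\gamma}_{\mathrm{RZF}}'(z)$ and show it vanishes exactly at $z=K\sigma^2/M$. The derivatives $\delta'(z),\kappa'(z),\omega'(z)$ are obtained by implicitly differentiating~(\ref{basic_eq2}) in $z$, i.e.\ by solving the same $3\times3$ linear system in $\bm{\Pi}_{\mathrm{com}}$ used for the phase-shift derivatives in Section~\ref{sec_RZF_opt}, the only change being that the right-hand side now comes from $\partial_z(z\BI_M)=\BI_M$ instead of from the $\BA_l$ terms. The heavy part is the cancellation: one has to verify that after inserting these derivatives, and after using the algebraic identities in the homogeneous case that link the interference coefficient $\Psi$ to the noise coefficient $\overline{C}_{\mathrm{com}}$, the stationarity condition collapses to the single scalar equation $Mz=K\sigma^2$, with all the quantities depending on $\BF$, $\FR$, $\FC$ cancelling out. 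The conceptual reason this must simplify so drastically --- and a useful sanity check on the bookkeeping --- is uplink--downlink duality: the RZF precoder with parameter $z$ is, up to the scalar $1/\sqrt{\xi_{\mathrm{RZF}}}$, the linear MMSE receive filter of the dual uplink channel, and an MMSE filter is the max-SINR filter precisely when its regularization is matched to the true noise-to-power ratio; under the present power normalization this matching condition reads $z=K\sigma^2/M$, which is the FAS--RIS counterpart of the classical regularization result for correlated MISO broadcast channels in~\cite{wagner2012large} and is exactly the place where the common-correlation structure makes the optimizer independent of $\BF$, $\FR$, $\FC$, hence of $\Bs$ and $\bm{\Phi}$. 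Finally, to conclude that this stationary point is the maximizer I would verify it is the unique critical point on $(0,\infty)$ together with a second-order condition --- most conveniently by checking strict convexity in $z$ of the rescaled denominator $A(z)+\sigma^2 B(z)$ --- which yields $z^\star=K\sigma^2/M$ as the global optimum.
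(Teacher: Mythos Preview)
Your overall strategy---reduce by symmetry to maximizing a common per-user SINR, then differentiate in $z$---coincides with the paper's, but you miss the shortcut that drives the paper's argument. You plan to expand $\Psi$ and $\overline{C}_{\mathrm{com}}$ through $\bm{\Pi}_{\mathrm{com}}$ and Table~\ref{var_list}, solve a $3\times3$ system for $\delta',\kappa',\omega'$, and then hope the correlation-dependent pieces cancel; the paper never expands these quantities. Instead it invokes two structural identities that follow from $\BH_k\BH_k^{H}\BQ_k=\BI-z\BQ_k$ and $\partial_z\BQ=-\BQ^{2}$ at the deterministic-equivalent level: with $\mu:=t\omega+u\kappa$, the interference term $A:=\tfrac{(K-1)\Psi}{L(1+\mu)^2}$ obeys $\mu=A-z\mu'$, and the noise term obeys $\sigma^2(1+\mu)^2\overline{C}_{\mathrm{com}}=-\tfrac{K\sigma^2}{M}\mu'$. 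Hence $\overline{\gamma}_{\mathrm{RZF}}=\mu^{2}\big/\!\bigl(A-\tfrac{K\sigma^2}{M}\mu'\bigr)$ subject to $\mu=A-z\mu'$; differentiating the constraint gives $2\mu'=A'-z\mu''$, and the quotient-rule derivative of the SINR then factors as a nonvanishing quantity times $(z-\tfrac{K\sigma^2}{M})(\mu'A'-A\mu'')$. Your route would in principle recover this after heavy bookkeeping, but the paper's two identities make $\BF$, $\FR$, $\FC$ disappear in one stroke rather than by term-by-term cancellation. Your uplink--downlink duality remark is sound intuition but plays no role in the paper's proof; conversely, your plan to verify a second-order condition is more careful than the paper, which only asserts $\mu'A'-A\mu''\neq 0$ and does not argue that the stationary point is a maximum.
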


\begin{proof} The proof of Proposition~\ref{pro_opt_z} is given in Appendix~\ref{proof_pro_opt_z}.
\end{proof}

Proposition~\ref{pro_opt_z} indicates that for a FAS-RIS system with homogeneous channels, the optimal $z$ that maximizes the ESR does not depend on the correlation matrices and channel gains, and can be given directly without optimization. This decouples the AO of $z$ and $\bm{\Phi}$.

\section{Simulations}
In this section, we will validate the accuracy of ESR evaluation in Section~\ref{sec_res} and illustrate the performance of the proposed port selection algorithm in Section~\ref{sec_algorithm} by numerical simulations. In the figures, the theoretical values and Monte Carlo simulations are represented by curves and markers, respectively.
\label{sec_simu}
\subsection{Simulation Settings}

\textit{Correlation matrix of RIS:} Here we consider a RIS consisting of a linear array with uniformly distributed angle spreads and the correlation matrix of RIS is generated is given by $\BC\in \mathbb{C}^{L\times L}$~\cite{kammoun2019asymptotic},
$ [\BC(d_c, \alpha, \beta, L)]_{m,n}
 = \int_{-180}^{180} \frac{1}{\sqrt{2\pi\beta^2}} 
 e^{\jmath \frac{2\pi}{\lambda} d_c(m-n)\sin(\frac{\pi\phi}{180})-\frac{(\phi-\alpha)^2}{2\beta^2} } \mathrm{d}\phi $,~m,n =1,2,...,L,
where $d_c=0.5$ represents the antenna spacing measured in wavelength, and $\alpha$, and $\beta^2$ denote the mean angle and the mean-square angle spreads, measured in degrees, respectively.
  
\textit{Path loss model:} 
The path loss of BS-RIS link, the RIS-user link, and direct link (BS-user) are given by~\cite{danufane2021path}
$P_{\mathrm{BS-RIS}}=\frac{C_{\mathrm{BS-user}}}{d_{\mathrm{BS-RIS}}^{\alpha_{\mathrm{BS-RIS}}}},~P_{\mathrm{RIS-user}}=\frac{C_{\mathrm{BS-user}}}{d_{\mathrm{RIS-user}}^{\alpha_{\mathrm{RIS-user}}}}, 
P_{\mathrm{BS-user}}=\frac{C_{\mathrm{BS-user}}}{d_{\mathrm{BS-user}}^{\alpha_{\mathrm{BS-user}}}}$,
where $d_{\mathrm{BS-user}}$, $d_{\mathrm{BS-RIS}}$, and $d_{\mathrm{RIS-user}}$ represent the distances, and $\alpha_{\mathrm{BS-RIS}}=\alpha_{\mathrm{RIS-user}}=2.1$, and $\alpha_{\mathrm{BS-user}}=3.2$ denote the path loss exponents of associated links. Here $C_{\mathrm{BS-user}}=C_{\mathrm{RIS-user}}=C_{\mathrm{BS-RIS}}=-20$ dB represent the reference path-loss at $1$ meter.

\subsection{Accuracy of ESR Evaluation}

\textit{ESR with RZF of Uncommon Correlation Case}: Fig.~\ref{uncommon_eva} depicts the ESR with $M=\{16,20, 24\}$, $L=32$, $K=12$, $z=\frac{K\sigma^2}{M}$, $d_{\mathrm{BS-RIS}}=5$ m, and $d_{\mathrm{RIS-}i}= 20+\lfloor\frac{i-1}{2} \rfloor$ m. The angle between BS-RIS and RIS-user links is $120^{\circ}$ such that the distance between the BS and the $k$-th user $d_{\mathrm{BS-RIS}}$ can be computed by the cosine formula $d_{\mathrm{BS-k}}=\sqrt{ d_{\mathrm{BS-RIS}}^2+d_{\mathrm{RIS-}k}^2-2\cos(150^{\circ})d_{\mathrm{BS-RIS}}d_{\mathrm{RIS-}k}  }$.
The correlation matrices are set as $\BF_i=\BC(0.5,10+2(i-1),30,M)$, $\FC_{R,i}=\BC(0.5,5+10(i-1),30,32)$, $i=1,2,...,K$, $\FC_{L}=\BC(0.5,5,30,32)$, and $\FR=\BC(0.5,10,5,M)$. Simulation values are obtained by $10000$ Monte-Carlo realizations. It can be observed that the proposed ESR evaluation matches the simulation values well for both RZF and ZF precoding, which validates the accuracy of~(\ref{eva_uncommon}) in Theorem~\ref{uncommon_the}. With the same setting and $\sigma^{-2}=80$ dB, Fig.~\ref{fig_uncommon_size} presents the ESR with different system sizes, where $(M, K,L)$ proportionally scales up from $(8,6,16)$ and $(12,6,16)$ in Cases 1 and 2, respectively. It can be observed that the proposed ESR evaluation is accurate even when the systems size is small.


\begin{figure*}
\begin{minipage}[t]{0.33\textwidth}
\centering\includegraphics[width=1\textwidth]{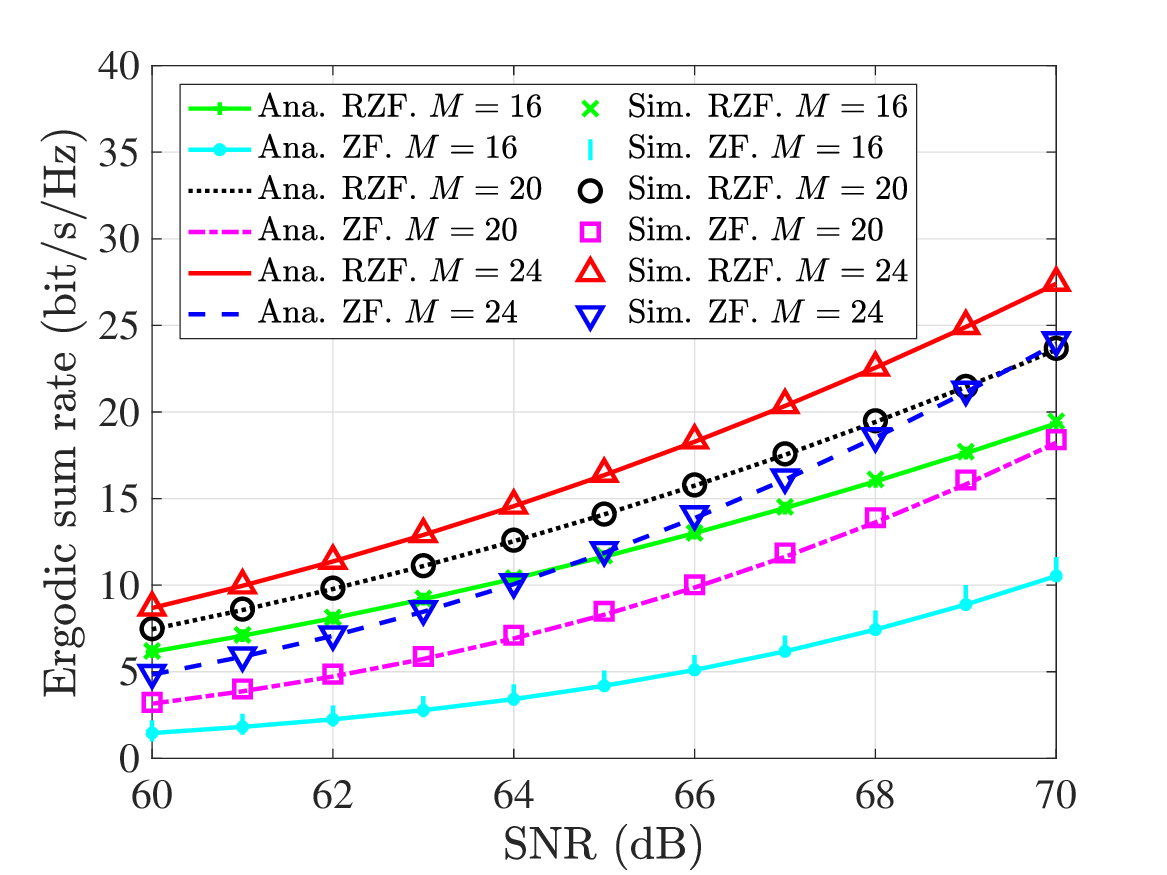}
\captionof{figure}{ESR with uncommon scenario.}
\label{uncommon_eva}
\vspace{-0.4cm}
\end{minipage}
\begin{minipage}[t]{0.33\textwidth}
\centering\includegraphics[width=1\textwidth]{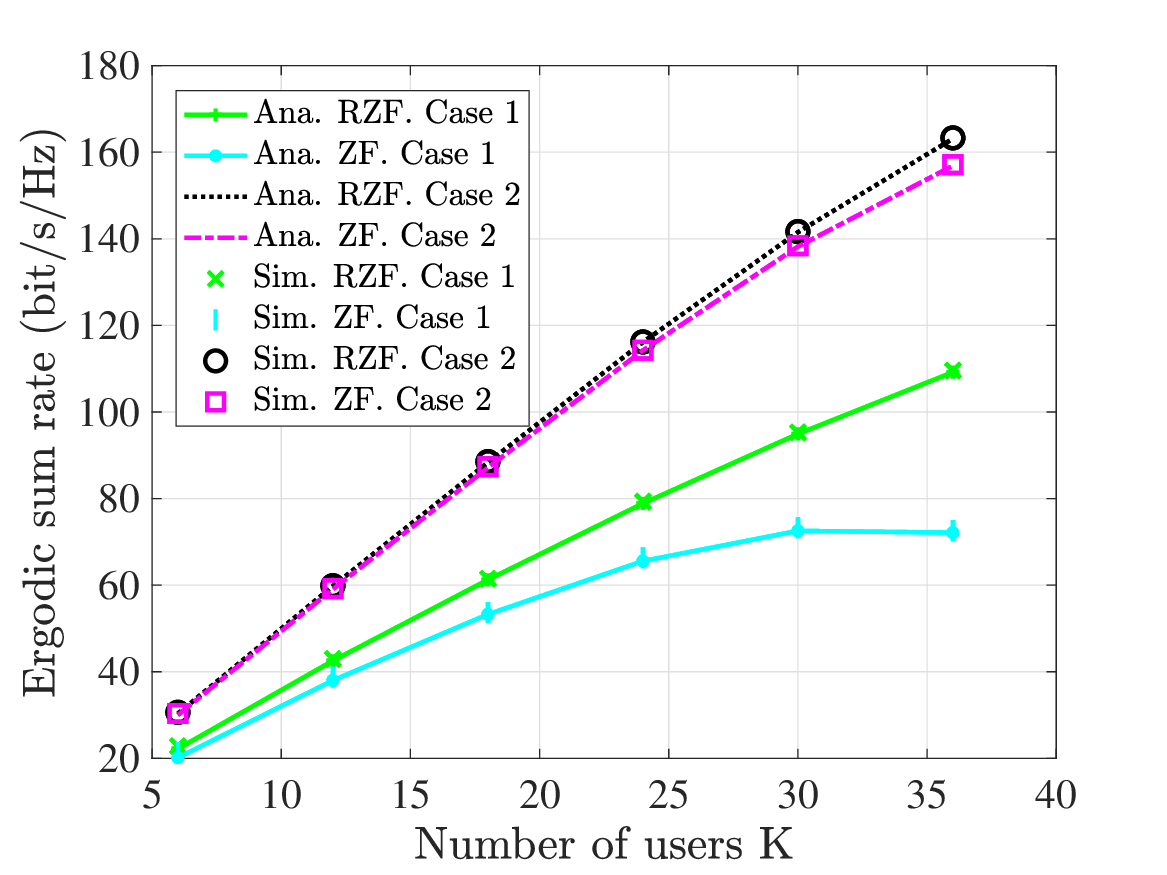}
\captionof{figure}{ESR vs system size.}
\label{fig_uncommon_size}
\vspace{-0.4cm}
\end{minipage}
\begin{minipage}[t]{0.33\textwidth}
\centering\includegraphics[width=1\textwidth]{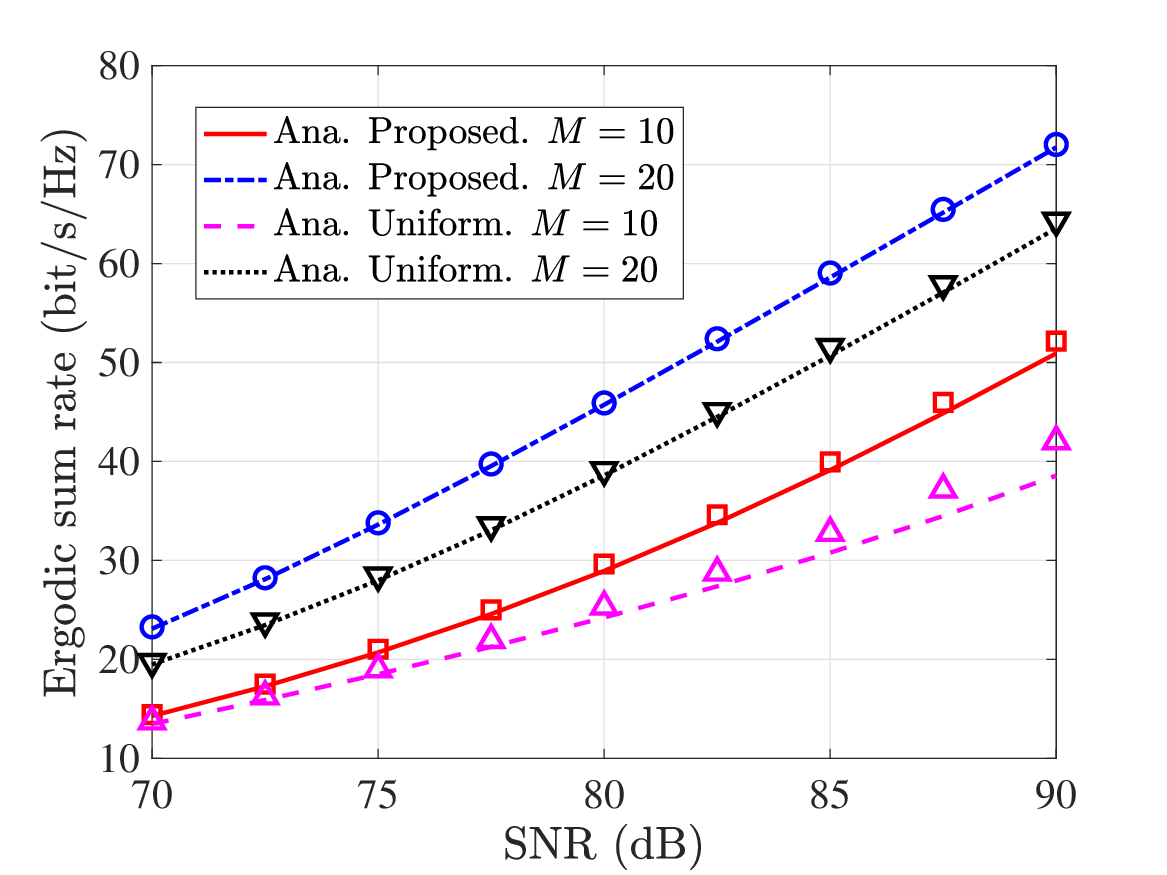}
\captionof{figure}{Optimization performance of proposed scheme.}
\label{fig_opt}
\vspace{-0.4cm}
\end{minipage}

\end{figure*}

\begin{figure*}
\begin{minipage}[t]{0.33\textwidth}
\centering\includegraphics[width=1\textwidth]{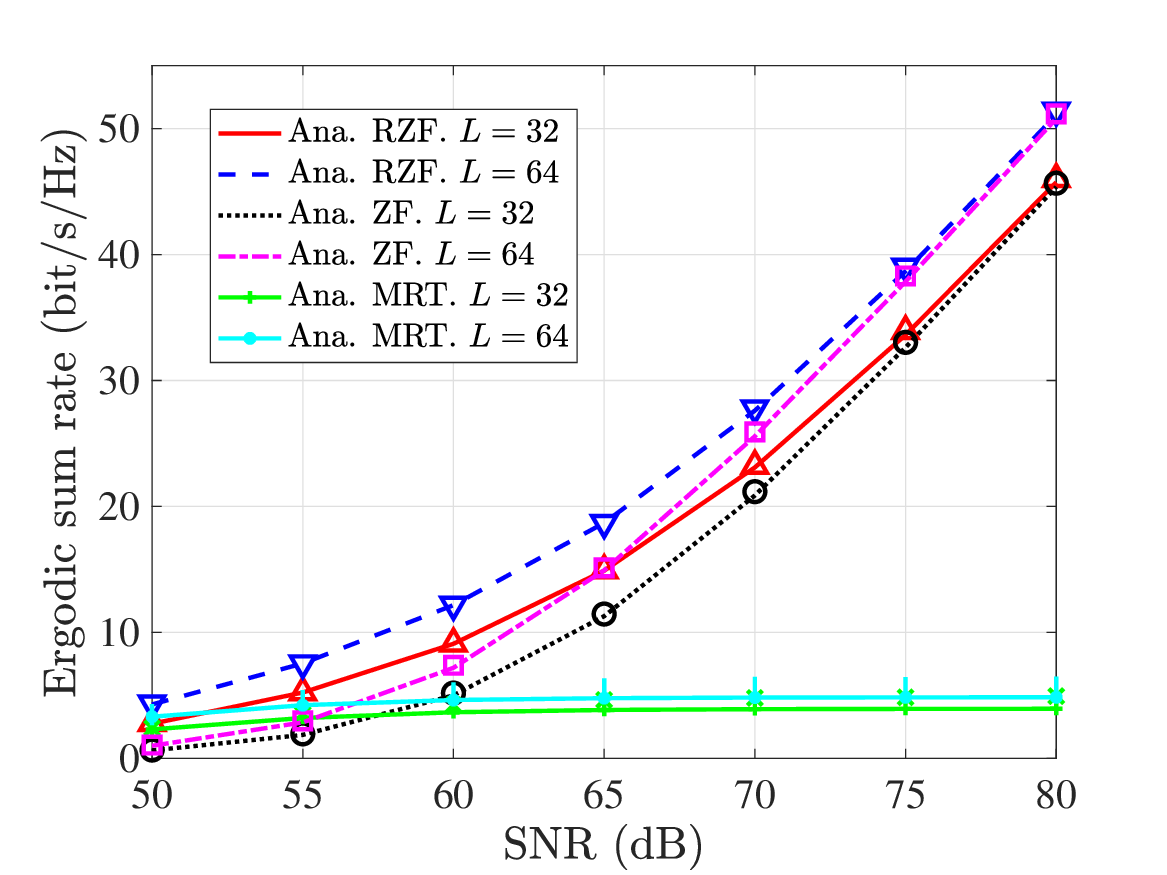}
\captionof{figure}{Comparison with ZF and MRT.}
\label{ZF_MRT_comp}
\vspace{-0.4cm}
\end{minipage}
\begin{minipage}[t]{0.33\textwidth}
\centering\includegraphics[width=1\textwidth]{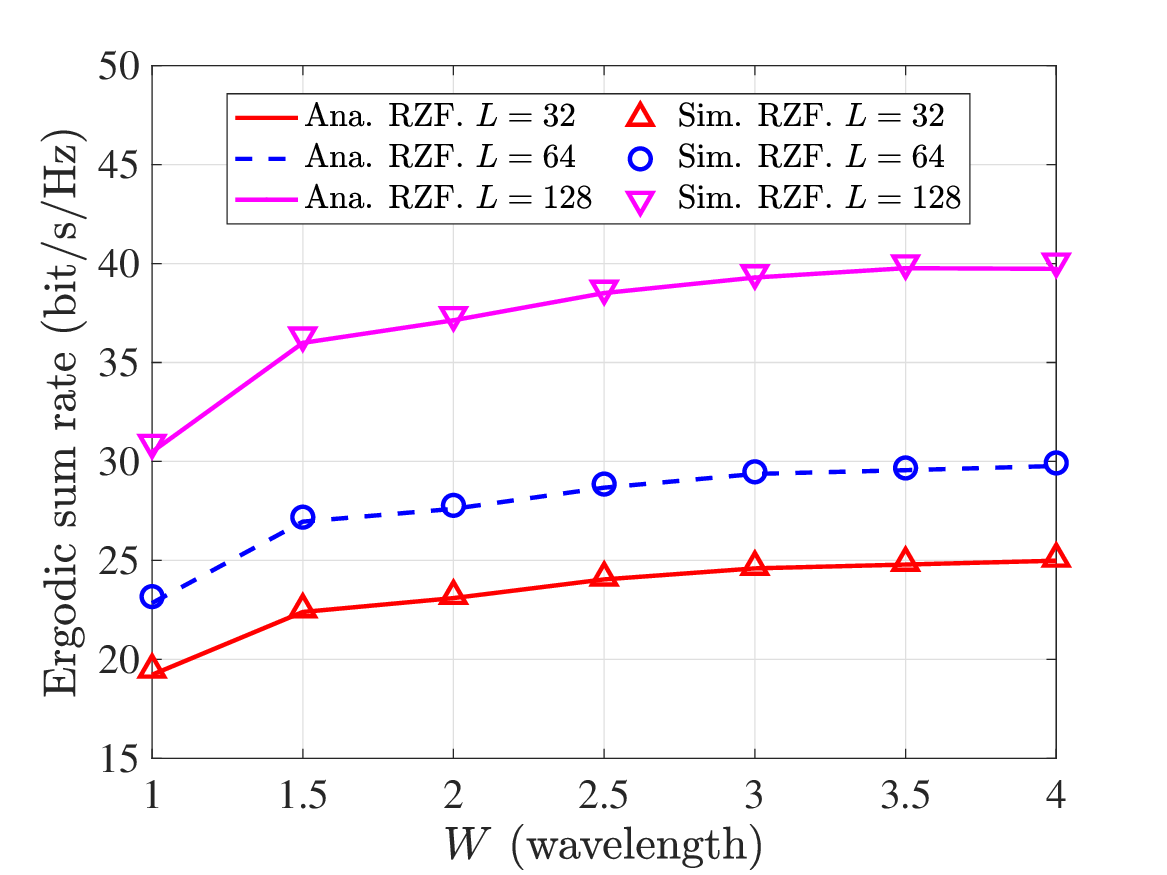}
\captionof{figure}{Impact of array size.}
\label{fig_impact_W}
\vspace{-0.4cm}
\end{minipage}
\begin{minipage}[t]{0.33\textwidth}
\centering\includegraphics[width=1\textwidth]{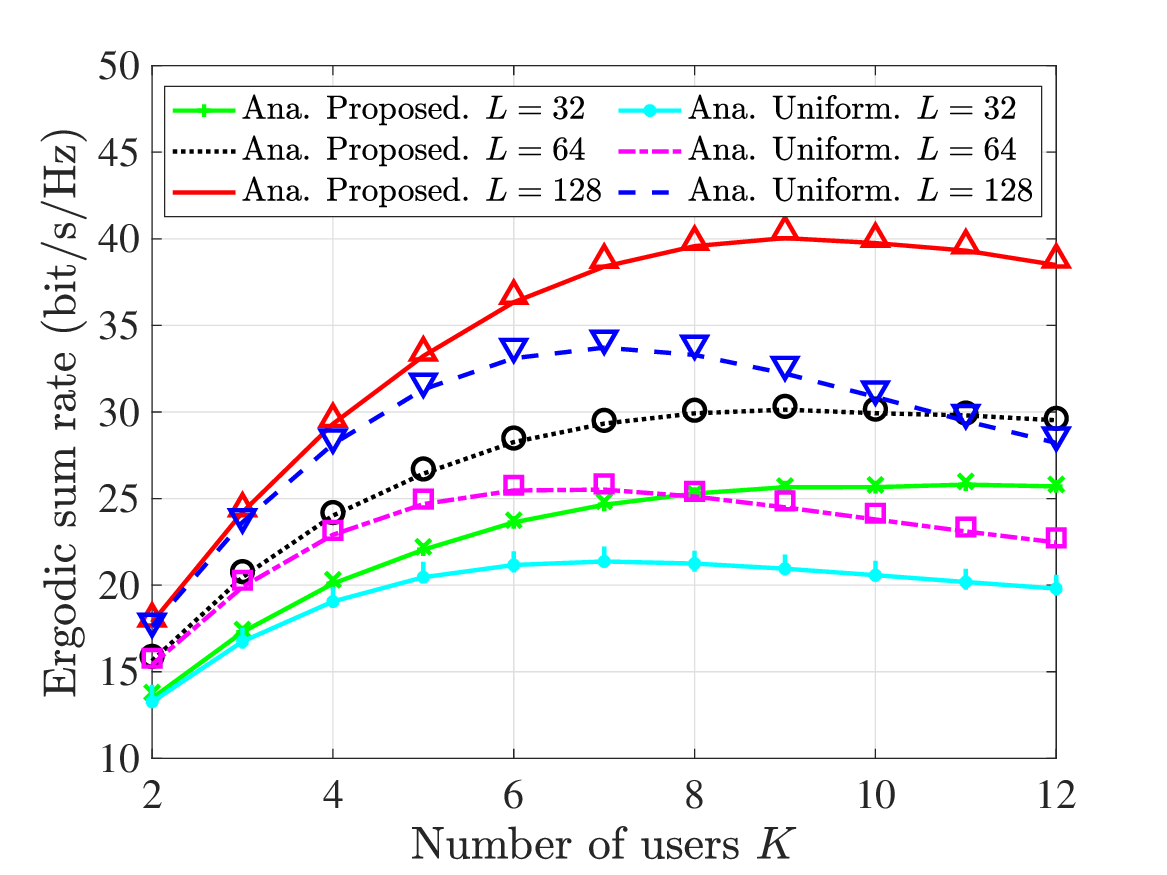}
\captionof{figure}{Impact of number of users.}
\label{fig_impact_users}
\vspace{-0.4cm}
\end{minipage}

\end{figure*}

\begin{figure}
\centering\includegraphics[width=0.33\textwidth]{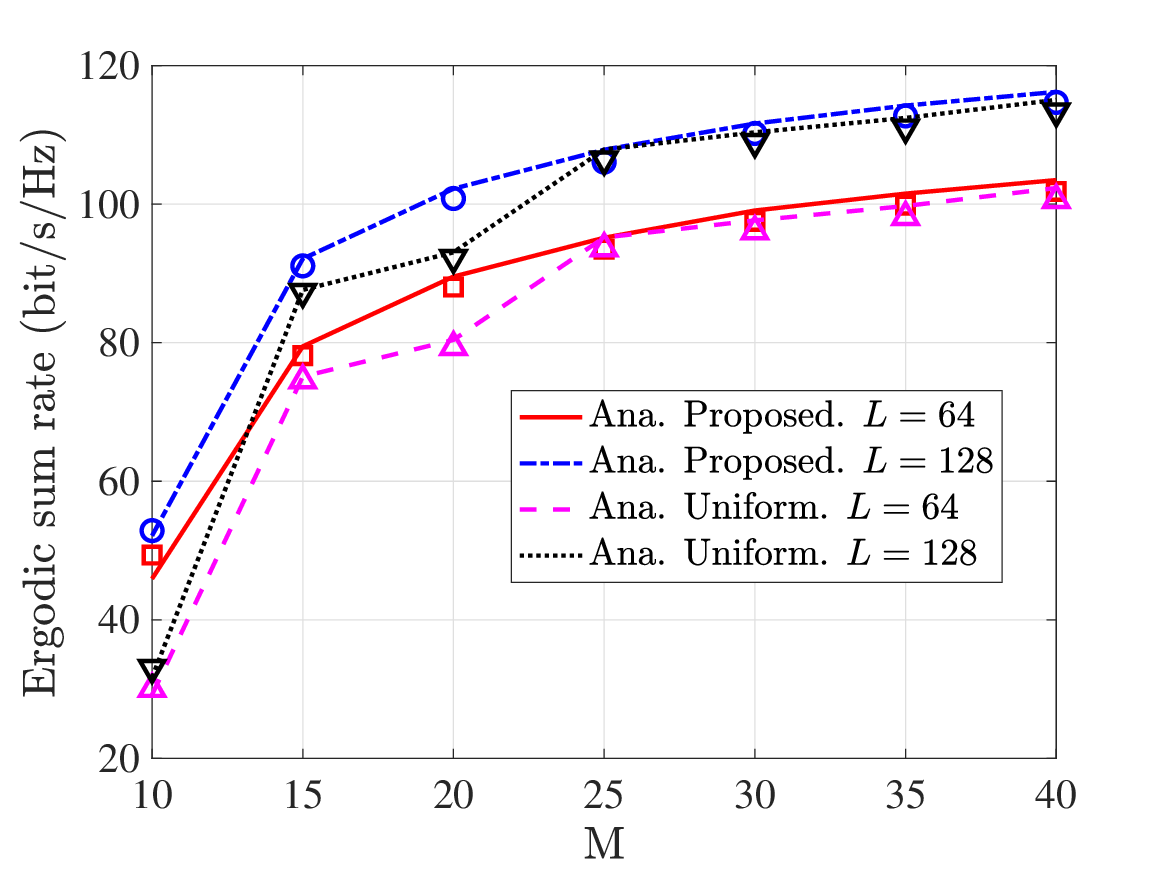}
\captionof{figure}{Impact of $M$.}
\label{rev_impact_port}
\vspace{-0.4cm}
\end{figure}
\begin{figure}
\centering\includegraphics[width=0.33\textwidth]{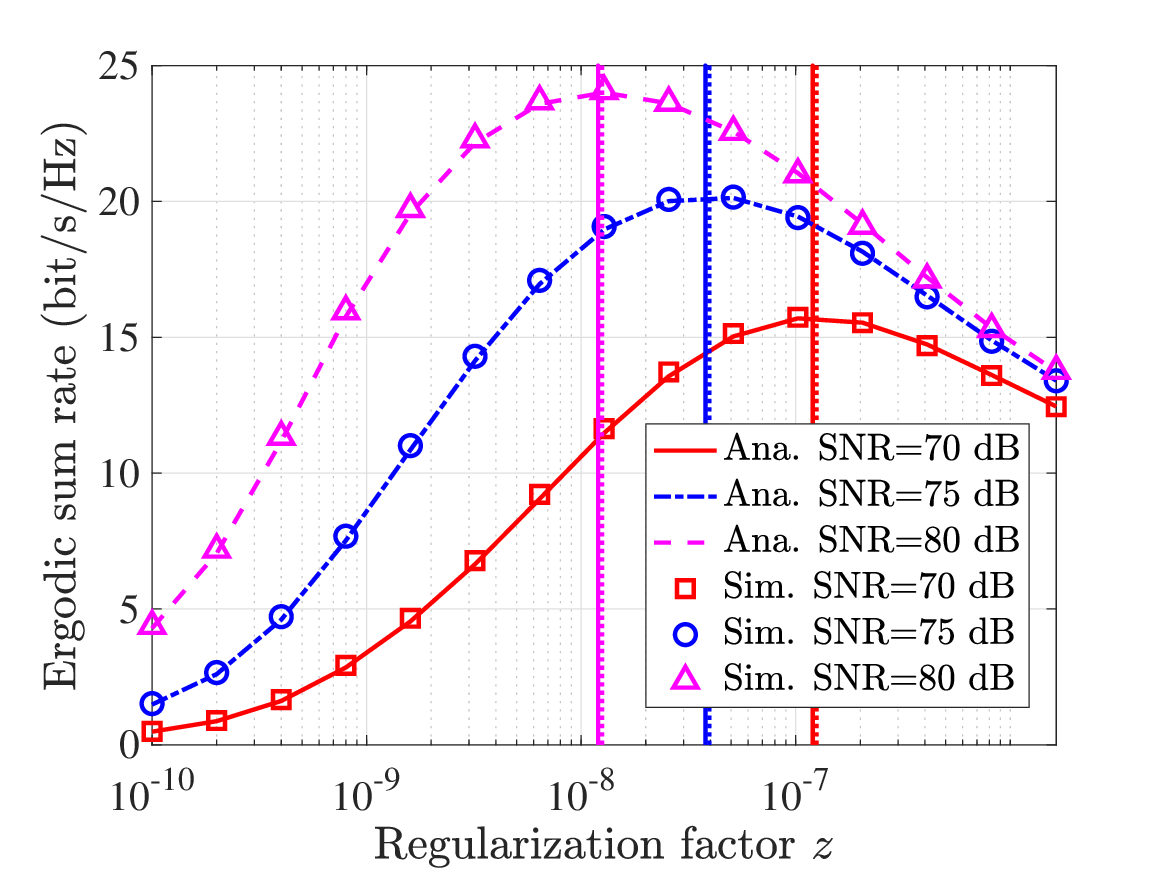}
\captionof{figure}{Optimal $z$ of RZF with homogeneous channels.}
\label{fig_opt_z}
\vspace{-0.4cm}
\end{figure}

\subsection{Performance of the Optimization Algorithm}
\label{simu_AO}

The performance of proposed Algorithm~\ref{AO_alg} is validated for the common scenario. The parameters are set to $M=20$, $K=8$, $L=32$, $d_{\mathrm{BS-RIS}}=5$ m, $d_{\mathrm{RIS-}k}=20+ \lfloor \frac{k-1}{4} \rfloor $ m, $k=1,2,...,K$. The angle between the BS-RIS and RIS-user link is $150^{\circ}$. Considering 3D environment under rich scattering, the correlation matrix is generated according to~\cite[Eq. (1)]{new2023information} such that the correlation coefficient between the $i$-th and $j$-th ports is given by $[\FR_{\mathrm{tot}}]_{i,j}=J_0\Bigl(2 \pi\sqrt{(\frac{|x_i -x_j|W_x}{N_x-1})^2+(\frac{|y_i -y_j|W_y}{N_y-1})^2 }\Bigl)$,
where $J_0(\cdot)$ is the spherical Bessel function of the first kind, and $(x_i,y_i)$ and $(x_j,y_j)$ are the coordinates of the $i$-th and $j$-th ports, respectively. Here we set $\BF_{\mathrm{tot}}=\FR_{\mathrm{tot}}$, $W_x=W_y=2\lambda$ and $N_x=N_y=10$ such that $M_{\mathrm{tot}}=100$. The other parameters are set as  $L=32$, $\FC_1=\BC(0.5,60,5,32)$, $\FC_2=\BC(0.5,30,5,32)$, $\FT=\diag(P_{\mathrm{RIS-}1},P_{\mathrm{RIS-}2},...,P_{\mathrm{RIS-}K})$, $\BU=\diag(P_{\mathrm{BS-}1},P_{\mathrm{BS-}2},...,P_{\mathrm{BS-}K})$, and $p_k=\lfloor \frac{k-1}{2} \rfloor +1$, $k=1,2,...,K$. The ports of the planar FAS are indexed increasingly from top to bottom, and left to right. Here we compare the proposed approach with uniform selection (sample uniformly according to indices, i.e.,  select the $1+(m-1)\lfloor \frac{M_{\mathrm{tot}}}{M-1} \rfloor$-th antenna with $m=1,2,...,M$). 

It can be observed from Fig.~\ref{fig_opt} that the proposed algorithm outperforms the uniform scheme, which demonstrates the gain obtained by the port selection and validates the effectiveness of Algorithm~\ref{overall_alg}. Fig.~\ref{ZF_MRT_comp} compares the ESR with RZF, ZF, and MRT. It can be observed that the RZF achieves the best performance while MRT can hardly tackle the interference among users. Moreover, the performance gap between RZF and ZF decreases as the SNR increases but the performance of MRT does not increase with the SNR. With the same setting, Fig.~\ref{fig_impact_W} depicts the ESR with increased aperture of FASs. It can be observed that when $W=W_x=W_y$ increases with fixed $M_{\mathrm{tot}}$, i.e., the spacing among ports increases and leads to low correlation among ports, the ESR saturates. The limiting performance is determined by the number of selected ports.

Fig.~\ref{fig_impact_users} depicts the ESR with different numbers of users, where $\FT=P_{\mathrm{RIS}-1}\BI_{K}$ and $\BU=P_{\mathrm{BS}-1}\BI_{K}$ with $d_{\mathrm{RIS}-1}=20$ m, $d_{\mathrm{BS}-1}=22.9$ m, and $\bold{p}=\bm{1}$. It can be observed that when the number of users is small, the ESR with uniform selection is close to that with the port selection obtained by Algorithm~\ref{overall_alg}. This indicates that when number of users is small, the performance is dominated by the precoding instead of the port selection, but when the number of users increases, the performance gain brought by port selection is more significant.

Fig.~\ref{rev_impact_port} depicts the ESR with different numbers of FAs, where $\sigma^{-2}=90$ dB. It can be observed that as $M$ increases, the ESR performance of uniform selection approaches that of proposed scheme, highlighting that the the performance gain of the proposed scheme over the uniform scheme with a smaller number of FAs is more significant than that with a larger number of FAs. This indicates that uniform selection is a good scheme when the number of FAs is large.

\subsection{Optimal Regularization Factor}

Fig.~\ref{fig_opt_z} shows the ESR with different values of $z$ over homogeneous channels, where $\FT=P_{\mathrm{RIS-}1}\BI_{K}$ and $\BU=P_{\mathrm{BS-}1}\BI_{K}$ with $d_{\mathrm{RIS}-1}=20, d_{\mathrm{BS}-1}=22.9$ m. The correlation matrices are same as that in Section~\ref{simu_AO} and $M=20$, $L=32$, $K=24$.  The theoretical optimal $z$ given in Proposition~\ref{pro_opt_z}, i.e., $z=\frac{K\sigma^2}{M}$, is plotted by the vertical line and $z$ obtained by $1$-dimension search is plotted by the vertical dotted line. It can be observed from Fig.~\ref{fig_opt_z} that the optimal $z$ given in Proposition~\ref{pro_opt_z} is accurate for estimating the optimal $z$ and thus the optimization for $z$ is unnecessary for homogeneous channels.

\section{Conclusion}
\label{sec_con}
In this paper, we investigated the two-timescale design for FAS-RIS MISO systems with RZF/ZF precoding. For that purpose, we first derived a closed-form deterministic equivalent for the SINR and per-user communication rate by leveraging RMT, for both the uncommon and common correlation cases. Based on the closed-form evaluation, we proposed a two-timescale design to maximize the ESR by jointly optimizing port selection, regularization factor of RZF, and phase shifts at the RIS. The results in this work can be utilized to determine the optimal regularization factor for RZF over homogeneous channels and the number of selected ports required to achieve a given ESR. Numerical results validate the accuracy of performance evaluation and show that the performance gain brought by port selection is more significant in FAS-RIS systems with large number than that with small number of users. Extending the two-timescale design to fit the imperfect CSI case is a promising future direction.

\appendices

\section{Proof of Theorem~\ref{uncommon_the}} 
\label{uncommon_the_proof}
 \begin{proof}
Before we start the proof, we first introduce the resolvent matrix of $\BH\BH^{H}$ and resolvent identity
 as
  \begin{equation}
  \label{def_res}
  \BQ=\left(z\bold{I}_{M}+\BH\BH^{H} \right)^{-1},~ \BQ\Bh_k=\frac{\BQ_k\Bh_k}{1+\Bh_k^{H}\BQ_k\Bh_k},
\end{equation}
 where $\BQ_k=\left(z\bold{I}_{M}+\BH_k\BH_k^{H} \right)^{-1}$ and $\BH_k$ can be obtained by removing the $k$-th column from $\BH$. Next, we will prove Theorem~\ref{uncommon_the}. By~(\ref{def_res}), we have
  \begin{align*}
&  \gamma_{\mathrm{RZF},k} =\frac{p_k  (\Bh^{H}_k\BQ\Bh_k)^2 }{  \Bh^{H}_k\BQ\BH_k\BP_k\BH^{H}_k \BQ\Bh_k + \sigma^2 \xi_{\mathrm{RZF}}}
  \\
  &
  =\frac{p_k  (\Bh^{H}_k\BQ_k\Bh_k)^2 }{  \Bh^{H}_k\BQ_k\BH_k\BP_k\BH^{H}_k \BQ_k\Bh_k + \sigma^2 \xi_{\mathrm{RZF}}(1+\Bh^{H}_k\BQ_k\Bh_k)^2}
  \\
  &
  =\frac{p_k A_k^2}{ B_k+\sigma^2 \xi_{\mathrm{RZF}}(1+\Bh^{H}_k\BQ_k\Bh_k)^2},    \numberthis   \label{simple_SINR}
  \end{align*}
which indicates that the evaluation of $\gamma_{\mathrm{RZF},k}$ can resort to that of $A_k$, $B_k$, and $\xi_{\mathrm{RZF}}$. For the simplicity of notations, we denote $K \xlongrightarrow{(c_1,c_2)} \infty$ by $(K)_{\infty}$. By the independence between $\Bh_k$ and $\BQ_k$, , we can obtain
\begin{align*}
&   A_k  \xlongrightarrow[(K)_{\infty}]{\mathcal{P}}\frac{1}{M}\E[\Tr(\FF_k\BQ_k)]+ \frac{1}{L}\E[\Tr(\BZ_k\BZ^{H}_k\BQ_k)],
\\
&
 B_k \xlongrightarrow[(K)_{\infty}]{\mathcal{P}}\frac{1}{M}\E[\Tr(\FF_k\BQ_k\BH_k\BP_k\BH_k^{H}\BQ_k)]
 \\
 &
 + \frac{1}{L}\E[\Tr(\BZ_k\BZ^{H}_k\BQ_k\BH_k\BP_k\BH_k^{H}\BQ_k)] \xlongrightarrow[(K)_{\infty}]{\mathcal{P}}
 \\
 &  \sum_{l \neq k} \frac{p_l}{(1+\E[\Bxi^{H}_l\BQ_{k,l}\Bxi_l])^2} (\frac{1}{M^2}\E[\Tr(\FF_k\BQ_{k,l}\FF_l\BQ_{k,l})]
 \\
 &
 + \frac{1}{ML}(\E[\Tr(\BZ_k\BZ^{H}_k\BQ_{k,l}\FF_l \BQ_{k,l})]
 +\E[\Tr(\FF_k\BQ_{k,l}\BZ_l\BZ_l^{H}\BQ_{k,l})])
 \\
 &
 + \frac{1}{L^2}\E[\Tr(\BZ_k\BZ^{H}_{k}\BQ_{k,l}\BZ_l\BZ_l^{H} \BQ_{k,l})]
 )
 \\
&
\xi_{\mathrm{RZF}} \xlongrightarrow[(K)_{\infty}]{\mathcal{P}} \frac{1}{M}\E[\Tr(\BQ^2\BH\BP\BH^{H})]\xlongrightarrow[(K)_{\infty}]{\mathcal{P}}
\\
&
\sum_{l =1}^{K} \frac{p_l(\frac{1}{M}\E[\Tr(\FF_l\BQ_l^2)]+ \frac{1}{L}\E[\Tr(\BZ_l\BZ^{H}_l\BQ_l^2)])}{M(1+\E[\Bh^{H}_l\BQ_{l}\Bh_l])^2},
 \numberthis \label{first_SINR}
\end{align*}
Next, we first investigate the convergence for the trace of the resolvent and then obtain the deterministic equivalent for SINR and per-user rate using continuous mapping theorem. The evaluation of the $A_k$, $B_k$, and $\xi_{\mathrm{RZF}}$ can be obtained by the following two lemmas.
\begin{lemma}
\label{trace_appro} (First-order resolvent results) Given assumptions~\textbf{A.1}-\textbf{A.3}, for the random matrix $\BH=[\Bh_1,...,\Bh_K]$ with $\Bh_k$ defined in~(\ref{chunc}), there holds true that
\begin{align*}
\label{trace_exp_con}
& \frac{1}{M}\E[\Tr(\FR\BQ)] \xrightarrow[]{(K)_{\infty}\rightarrow \infty}  \delta,~\frac{1}{L}\E[\Tr(\BZ_k\BZ_k^{H}\BQ)]\xrightarrow[]{(K)_{\infty}}  
\omega_k,
\\
&  \frac{1}{M}\E[\Tr(\BF_k\BQ)]+\frac{1}{L}\E[\Tr(\BZ_k\BZ_k^{H}\BQ)] \xrightarrow[]{(K)_{\infty}}\mu_k,
 \numberthis
\end{align*}
where $(\delta,\bm{\mu})$ is the unique positive solution for the system of equations in~(\ref{basic_eq1}). 
\end{lemma}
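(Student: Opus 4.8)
The plan is to prove the three deterministic equivalents in~(\ref{trace_exp_con}) by combining the leave-one-out (rank-one perturbation) technique with Gaussian calculus; the extra ingredient relative to the single-hop analysis of~\cite{wagner2012large} is a two-step treatment of the doubly-random cascaded term $\BZ_k\By_k$, whose factor $\BZ_k=\LR\BX\LC_k$ is bilinear in the Gaussian matrix $\BX$ (so that $\BZ_k\BZ_k^{H}$ is quadratic in $\BX$). First I would record the standard deterministic bounds ($\|\BQ\|\le z^{-1}$, the rank-one perturbation inequalities $\|\BQ-\BQ_k\|\le z^{-1}$ and $\|\BQ\BH\|=\BO(1)$, and, under \textbf{A.1}--\textbf{A.3}, uniform boundedness of all trace-normalized quantities appearing below); together with the resolvent identities in~(\ref{def_res}) these reduce everything to evaluating $\frac{1}{M}\E[\Tr(\BA\BQ)]$ for $\BA\in\{\FR,\BF_k,\BZ_k\BZ_k^{H}\}$ up to $o(1)$.

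The core step is this evaluation. Conditioning on $\BX$, I would first integrate out the Gaussian vectors $\Bw_k$ and $\By_k$ via Stein's identity (integration by parts), which produces the self-normalization factors $1+\mu_k$ (with $\mu_k=\frac{1}{M}\E[\Tr(\BF_k\BQ)]+\frac{1}{L}\E[\Tr(\BZ_k\BZ_k^{H}\BQ)]$) in the denominators, and leaves a residual still depending on $\BX$ through $\BZ_k\BZ_k^{H}$. A second integration by parts, now with respect to the entries of $\BX$, brings in the auxiliary $L\times L$ matrix $\BPS_C=\bigl(\delta^{-1}\BI_L+\sum_k\FC_k/(L(1+\mu_k))\bigr)^{-1}$ as the deterministic equivalent of the RIS-side resolvent, with $\delta=\frac{1}{M}\E[\Tr(\FR\BQ)]$. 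Matching the two computations yields the coupled approximate identities $\delta\approx\frac{1}{M}\Tr(\FR\BPS_R)$, $\omega_k\approx\frac{1}{L}\Tr(\FC_k\BPS_C)$, and $\mu_k\approx\frac{1}{M}\Tr(\FF_k\BPS_R)+\omega_k$, with $\BPS_R$ as in~(\ref{basic_eq1}); all error terms are controlled by variance estimates (Nash--Poincar\'e / trace-lemma inequalities) that are $o(1)$ under \textbf{A.1}--\textbf{A.3}. The delicate point here is to show that the cross-terms generated between the two integration-by-parts layers, and the fluctuations of the quadratic forms $\Bh_k^{H}\BQ_k\Bh_k$ (whose two summands are of a different nature), are genuinely lower order and uniform in $k$.

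Once the random triple $(\delta,\bm{\mu},\bm{\omega})$ is shown to satisfy~(\ref{basic_eq1}) up to $o(1)$, I would establish existence and uniqueness of the positive solution of~(\ref{basic_eq1}): existence via a Brouwer / monotone-iteration argument on the fixed-point map restricted to a suitable compact box, and uniqueness by checking that the map is a standard interference function in the sense of Yates (positivity, monotonicity, scalability) --- equivalently, a contraction for the Hilbert projective metric --- as in~\cite{wagner2012large,zhang2023irs}. A final stability argument, namely that the Jacobian of the deterministic system is non-singular so that its solution depends Lipschitz-continuously on perturbations of the defining relations, upgrades ``approximately solves the fixed point'' to ``converges to the unique solution'', which is precisely~(\ref{trace_exp_con}).

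The main obstacle will be precisely the product (doubly-random) structure of the cascaded link: integrating out $\By_k$ does not close the computation as it does in the single-hop case, but leaves a term quadratic in $\BX$, forcing the second Gaussian-calculus step, after which the two layers interact and a large number of cross-terms must each be shown to be $o(1)$, uniformly in $k$ --- this is where the rank and norm conditions \textbf{A.2}--\textbf{A.3} on the correlation matrices do the real work. The limiting cases $\FR=\bm{0}$ and $\FF_k=\bm{0}$ of Remarks~\ref{rem_deg_sig} and~\ref{rem_deg_two} offer convenient consistency checks on the reduced fixed-point system.
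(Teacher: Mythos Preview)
Your proposal is correct and follows essentially the same route the paper indicates: the paper's own proof of Lemma~\ref{trace_appro} is in fact omitted, with only the one-line remark that it ``can be proved by Gaussian tools''---meaning exactly the integration-by-parts formula and Nash--Poincar\'e inequality you invoke (cf.\ the paper's proof of Lemma~\ref{lem_sec_resol} in Appendix~\ref{pro_lem_sec_resol}, which carries out the analogous second-order computation in detail). Your two-layer integration-by-parts treatment of the cascaded term, the variance control, and the existence/uniqueness/stability argument for the fixed point are the standard ingredients and are precisely what ``Gaussian tools'' refers to here.
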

\begin{proof} Lemma~\ref{trace_appro} can be proved by Gaussian tools, which is omitted here.  
\end{proof}
 
\begin{lemma}\label{lem_sec_resol} (Second-order resolvent results) Given assumptions~\textbf{A.1}-\textbf{A.3} and $\| \BK\|<\infty$, there holds true
\begin{align*}
&\Upsilon_k(\BK)=\frac{1}{L}\E[\Tr(\BZ_k\BZ^{H}_{k}\BQ\BK \BQ)]+\frac{1}{M}\E[\Tr(\BF_k\BQ\BK \BQ)]
\\
&
\xrightarrow[]{(K)_{\infty}}[\bm{\Pi}^{-1} \bm{\chi}(\BK)]_{k},
\\
&\Lambda_{k,l}=\frac{1}{L}\E[\Tr(\BZ_k\BZ^{H}_{k}\BQ\BZ_l\BZ_l^{H} \BQ)]+\frac{1}{M}\E[\Tr(\BZ_k\BZ^{H}_{k}\BQ\BF_l \BQ)]
\\
&
\xrightarrow[]{(K)_{\infty}}[\bm{\Delta}^{-1}\Bxi_l]_k+\frac{L}{M\delta^2}[\bm{\Delta}^{-1}\Bxi_I]_k [\bm{\Pi}^{-1}\bm{\chi}(\FR)]_l   
\\
&
+\frac{L}{M}[\bm{\Pi}^{-1}\bm{\chi}(\overline{\BF}_k)]_{l}.
\numberthis \label{sec_resol}
\end{align*}
\end{lemma}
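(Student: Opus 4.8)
The plan is to obtain both families of deterministic equivalents from the first--order results of Lemma~\ref{trace_appro} by the standard Gaussian calculus --- Stein's integration--by--parts identity applied to the Gaussian matrices $\BX$, $\BW=[\Bw_1,\dots,\Bw_K]$ and $\BY=[\By_1,\dots,\By_K]$, together with the Poincar\'e--Nash inequality for variance control --- supplemented by a perturbation argument. Two preliminary reductions lighten the bookkeeping. First, replacing $\BQ$ by $\BQ_k$ or $\BQ_{k,l}$ changes each normalized trace in~(\ref{sec_resol}) only by $\BO(1/M)$ (rank--one / rank--two resolvent identities and \textbf{A.2}), so we work with $\BQ=(z\BI_M+\BH\BH^H)^{-1}$. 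Second, each quantity of the form $\frac1M\Tr(\BA\BQ\BB\BQ)$ concentrates around its expectation with variance $\BO(1/M^2)$ under \textbf{A.1}--\textbf{A.3}, and products of such normalized traces factorize up to $o(1)$; hence it suffices to identify the limits of the expectations, after which convergence in probability follows from the continuous mapping theorem exactly as in the proof of Theorem~\ref{uncommon_the}.

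For $\Upsilon_k(\BK)$ with $\BK$ deterministic and $\|\BK\|<\infty$ --- by linearity of $\BK\mapsto\Upsilon_k(\BK)$ we may take $\BK$ Hermitian --- the key point is that $\Upsilon_k(\BK)$ is a directional derivative of a first--order quantity. Introduce the perturbed resolvent $\BQ_\theta=(z\BI_M+\theta\BK+\BH\BH^H)^{-1}$, well defined for $|\theta|$ small, so that the derivative at $\theta=0$ of $\frac1M\Tr(\BF_k\BQ_\theta)+\frac1L\Tr(\BZ_k\BZ_k^H\BQ_\theta)$ equals $-\Upsilon_k(\BK)$. Since the perturbation merely replaces $z\BI_M$ by the positive--definite matrix $z\BI_M+\theta\BK$, the Gaussian--tools analysis behind Lemma~\ref{trace_appro} carries over verbatim and yields perturbed parameters $(\delta_\theta,\bm{\mu}_\theta,\bm{\omega}_\theta)$ solving the $\theta$--analogue of~(\ref{basic_eq1}), with $\BPS_R$ replaced by $\BPS_{R,\theta}=(z\BI_M+\theta\BK+\sum_k\frac{\FF_k}{M(1+\mu_{k,\theta})}+\frac{\omega_{k,\theta}\FR}{M\delta_\theta(1+\mu_{k,\theta})})^{-1}$, and with $\frac1M\E\Tr(\BF_k\BQ_\theta)+\frac1L\E\Tr(\BZ_k\BZ_k^H\BQ_\theta)\to\mu_{k,\theta}$ uniformly for $\theta$ near $0$. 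Differentiating the $\theta$--analogue of~(\ref{basic_eq1}) at $\theta=0$ --- the parameters depend smoothly on $\theta$ by the implicit function theorem, the relevant Jacobian being precisely $\bm{\Pi}$, nonsingular under \textbf{A.1}--\textbf{A.3} --- and using that the derivative of $\BPS_{R,\theta}$ is $-\BPS_{R,\theta}(\BK+\cdots)\BPS_{R,\theta}$, one is led, after a direct but lengthy computation, to the linear system $\bm{\Pi}\,\bm{v}=\bm{\chi}(\BK)$, where $\bm{\chi}(\BK)=[\chi(\BF_1,\BK),\dots,\chi(\BF_K,\BK),\chi(\FR,\BK)]^T$, $\chi(\BA,\BB)=\frac1M\Tr(\BA\BPS_R\BB\BPS_R)$, and $v_k$ is the negative of the $\theta$--derivative at $0$ of $\mu_{k,\theta}$ for $k\le K$ (and of $\delta_\theta$ for $k=K+1$). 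Interchange of $\lim_{(K)_\infty}$ with the $\theta$--derivative is licensed by the uniform--in--$\theta$ error bounds. Hence $\Upsilon_k(\BK)=v_k=[\bm{\Pi}^{-1}\bm{\chi}(\BK)]_k$; the same argument with $\delta$ in place of $\mu_k$ identifies $\frac1M\E\Tr(\FR\BQ\BK\BQ)\to[\bm{\Pi}^{-1}\bm{\chi}(\BK)]_{K+1}$, which is needed below.

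For $\Lambda_{k,l}$ the inserted matrix is $\BZ_l\BZ_l^H=\FR^{1/2}\BX\FC_l\BX^H\FR^{1/2}$, which is random and shares the Gaussian matrix $\BX$ with $\BQ$ and with $\BZ_k\BZ_k^H$, so the $\Upsilon$--identity does not apply directly; I would argue in two stages. \emph{Stage 1 (remove the explicit $\BX$ in the middle).} Substituting $\BZ_l\BZ_l^H=\FR^{1/2}\BX\FC_l\BX^H\FR^{1/2}$ into $\frac1L\Tr(\BZ_k\BZ_k^H\BQ\BZ_l\BZ_l^H\BQ)$ and applying Stein's lemma to the entries of the explicit $\BX$--factor produces a ``diagonal'' contraction of that factor together with ``feedback'' terms in which the derivative re--enters $\BQ$ (bringing down a further resolvent and a further $\BX$) or $\BZ_k\BZ_k^H$. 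Iterating this and averaging over the $\By_m$'s dresses all the $\FC$--traces by $\BPS_C$ and closes a fixed--point relation on the $L$--dimensional (RIS) side whose coefficient matrix is exactly $\bm{\Delta}$, $[\bm{\Delta}]_{k,l}=\mathbbm{1}_{l=k}-\frac1L\Xi_{k,l}$ with $\Xi_{k,l}=\frac1L\Tr(\BC_k\BPS_C\BC_l\BPS_C)$; the $\BPS_C^2$--traces $\Xi_{I,k}=\frac1L\Tr(\BC_k\BPS_C^2)$ enter through the dependence of $\BPS_C$ on $\delta$. The net effect is that, inside the quadratic resolvent form, $\BZ_l\BZ_l^H$ may be replaced up to $o(1)$ by a deterministic matrix built from $\bm{\Delta}^{-1}$ acting on $\FR$ and on the matrices $\overline{\BF}_k=\sum_j[\bm{\Delta}^{-1}(\BI_K-\bm{\Delta})]_{k,j}\BF_j$; in particular the pure $\BPS_C$--trace contribution is the first displayed term $[\bm{\Delta}^{-1}\Bxi_l]_k$, with $\Bxi_l=[\Xi_{1,l},\dots,\Xi_{K,l}]^T$ and $\Bxi_I=[\Xi_{I,1},\dots,\Xi_{I,K}]^T$. \emph{Stage 2 (evaluate the residual quadratic forms).} The remaining forms $\frac1L\E\Tr(\BZ_k\BZ_k^H\BQ\widehat\BK\BQ)+\frac1M\E\Tr(\BF_k\BQ\widehat\BK\BQ)$ with $\widehat\BK\in\{\FR,\overline{\BF}_k\}$ deterministic are precisely $\Upsilon_k(\widehat\BK)=[\bm{\Pi}^{-1}\bm{\chi}(\widehat\BK)]_k$ from the previous paragraph, and expanding by linearity of $\bm{\chi}(\cdot)$ reproduces the cross term $\frac{L}{M\delta^2}[\bm{\Delta}^{-1}\Bxi_I]_k[\bm{\Pi}^{-1}\bm{\chi}(\FR)]_l$ and the term $\frac{L}{M}[\bm{\Pi}^{-1}\bm{\chi}(\overline{\BF}_k)]_l$. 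Assembling Stages 1 and 2 yields~(\ref{sec_resol}).

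The main obstacle is Stage 1 of the $\Lambda_{k,l}$ analysis: because $\BX$ appears simultaneously in the explicit insertion $\BZ_l\BZ_l^H$, inside $\BQ$, and inside $\BZ_k\BZ_k^H$, the integration by parts spawns a proliferation of cross terms, and one must show via the Poincar\'e--Nash estimates that all but the ``diagonal'' ones are $o(1)$, then correctly disentangle the two coupled linear systems --- the $\bm{\Delta}$--system on the $L$--dimensional RIS side and the $\bm{\Pi}$--system on the $M$--dimensional BS side --- and verify that they compose in the order displayed in~(\ref{sec_resol}). Keeping track of the $1/M$ versus $1/L$ normalizations and determining which terms survive under \textbf{A.1}--\textbf{A.3} is the delicate part; once the effective deterministic kernels have been extracted, the remainder reduces to the already--established $\Upsilon$--identity.
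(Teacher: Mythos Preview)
Your treatment of $\Upsilon_k(\BK)$ is correct but takes a genuinely different route from the paper. The paper works directly at second order: it writes $\Gamma(\BL,\BK)=\frac1M\E[\Tr(\BL\BQ\BK\BQ)]$, peels off one resolvent via $\BQ-\BPS_R=\BPS_R(\BPS_R^{-1}-\BQ^{-1})\BQ$, and applies Gaussian integration by parts to the resulting $\BH\BH^{H}$ factor (equations~(\ref{hhQKQ})--(\ref{QKQL})) to close the $(K{+}1)\times(K{+}1)$ linear system $\bm{\Pi}\bm{\Upsilon}(\BK)=\bm{\chi}(\BK)$ by hand. Your perturbation argument --- recognize $\Upsilon_k(\BK)=-\partial_\theta\mu_{k,\theta}|_{\theta=0}$ and differentiate the fixed-point system~(\ref{basic_eq1}) --- recycles the first-order result instead of redoing the IBP at second order, and is in fact the very mechanism the paper invokes later in Section~\ref{sec_RZF_opt} to compute phase-shift gradients, where $\bm{\Pi}$ (resp.\ $\bm{\Pi}_{\mathrm{com}}$) appears precisely as the Jacobian of the fixed-point map. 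Your route is more conceptual and avoids the explicit bookkeeping of~(\ref{QKQL_ex})--(\ref{T2exp}); the paper's is more self-contained, since it does not need the uniform-in-$\theta$ error bounds and smooth dependence of the fixed point on $\theta$ that you must invoke to justify interchanging $\lim_{(K)_\infty}$ with $\partial_\theta$.

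For $\Lambda_{k,l}$ the two arguments are closer in spirit: both must do genuine IBP on the random insertion $\BZ_l\BZ_l^{H}$. The paper isolates the intermediate object $\zeta_k(\BA,\BK)=\frac1L\E[\Tr(\BZ_k^{H}\BQ\BK\BQ\BA\BX\LC_k)]$ (equation~(\ref{eva_zeta_k})) and derives a second linear system~(\ref{lambda_eq}) with coefficient matrix $\bm{\Delta}$, whose solution is then fed back into the already-established $\Upsilon$-identity. Your two-stage description matches this structure, though Stage~2 has a slip of index: after the $\bm{\Delta}$-reduction the deterministic-kernel forms that remain are $\Upsilon_l(\FR)$ and $\Upsilon_l(\overline{\BF}_k)$, indexed by $l$ (as the right-hand side of~(\ref{sec_resol}) shows), not $\Upsilon_k$.
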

\begin{proof} The proof of Lemma~\ref{lem_sec_resol} is given in Appendix~\ref{pro_lem_sec_resol}.
\end{proof}
By~(\ref{first_SINR}), rank-one perturbation lemma~\cite[Lemma 14.3]{couillet2011random}, and Lemma~\ref{trace_appro}, we can obtain 
\begin{align*}
&A_k \xrightarrow[]{(K)_{\infty}}\frac{1}{M}\E[\Tr(\FF_k\BQ)]
+ \frac{1}{L}\E[\Tr(\BZ_k\BZ^{H}_k\BQ)] \xrightarrow[]{(K)_{\infty}} \mu_k.\numberthis \label{de_Ak}
\end{align*}
By~(\ref{first_SINR}),~\cite[Lemma 14.3]{couillet2011random}, and Lemma~\ref{lem_sec_resol}, we can obtain 
\begin{align*}
& B_k \xlongrightarrow[(K)_{\infty}]{\mathcal{P}}
 \sum_{l \neq k} \frac{p_l}{L(1+\mu_l)^2}(\Lambda_{k,l}+\frac{L}{M}\Upsilon_l(\BF_k) )
 \\
 &
\xlongrightarrow[(K)_{\infty}]{\mathcal{P}} \sum\limits_{l\neq k}^{K} \frac{p_l\Psi_{k,l}}{L(1+\mu_l)^2},
\\
&\xi_{\mathrm{RZF}} \xlongrightarrow[(K)_{\infty}]{\mathcal{P}} \sum_{l =1}^{K} \frac{p_l \Upsilon_l(\BI_M) }{M(1+\mu_l)^2}
 \\
 &
 \xlongrightarrow[(K)_{\infty}]{\mathcal{P}}
\sum_{l =1}^{K} \frac{p_l [\bm{\Pi}^{-1}\bm{\chi}(\BI_M)]_l }{M(1+\mu_l)^2}=\overline{C}.\numberthis \label{de_Bk}
\end{align*}
By plugging~(\ref{de_Ak}) and~(\ref{de_Bk}) into~(\ref{simple_SINR}), we can conclude~(\ref{SINR_de}). Then, by applying the continuous mapping theorem and
the dominated convergence theorem~\cite{billingsley2017probability}, we have can obtain $\frac{R_{\mathrm{RZF}}}{K}   \xlongrightarrow[(K)_{\infty}]{\mathcal{P}}  \frac{\overline{R}_{\mathrm{RZF}}}{K}$. The convergence $\frac{\E[R_{\mathrm{RZF}}]}{K}   \xlongrightarrow[(K)_{\infty}]{\mathcal{P}}   \frac{\overline{R}_{\mathrm{RZF}}}{K}$ can be obtained similarly by the convergence of $ \E [\gamma_{\mathrm{RZF},k}]\xlongrightarrow[]{(K)_{\infty}} \overline{\gamma}_{\mathrm{RZF},k}$. 
\end{proof}

\section{Proof of Theorem~\ref{the_ZF}}
\label{proof_the_ZF}
\begin{proof} Denote $\widetilde{\delta}=\lim\limits_{z\rightarrow 0}z\delta(z)$, $\widetilde{\omega}_k =\lim\limits_{z\rightarrow 0}z\omega_k(z)$, and $\widetilde{\mu}_k =\lim\limits_{z\rightarrow 0}z\mu_k(z)$. Based on definition of $\delta$ in~(\ref{basic_eq1}), we have
\begin{equation}
\begin{aligned}
&\widetilde{\delta}=\lim\limits_{z\rightarrow 0}z\delta(z)=\lim\limits_{z\rightarrow 0} \frac{1}{M}\Tr\Bigl(\FR
\\
&\times \left( \bold{I}_{M}  + \sum_{k=1}^{K} \frac{\FF_k}{M(z+z\mu_k(z))}+ \frac{z \omega_k(z)\FR}{Mz\delta(z)(z+z\mu_k(z))}  \right)^{-1}\Bigr)
\\
&
=\frac{1}{M}\Tr\Bigl(\FR\left( \bold{I}_{M} + \sum_{k=1}^{K}\frac{\widetilde{\omega}_k\FR}{M\widetilde{\delta}\widetilde{\mu}_k}+ \frac{\BF_k}{M\widetilde{\mu}_k}  \right)^{-1}\Bigr)=\frac{1}{M}\Tr(\bold{R}\widetilde{\BK}_R),
\end{aligned}
\end{equation}
where $\widetilde{\BK}_R=\left( \bold{I}_{M} + \sum_{k=1}^{K}\frac{\widetilde{\omega}_k\FR}{M\widetilde{\delta}\widetilde{\mu}_k}+ \frac{\BF_k}{M\widetilde{\mu}_k}  \right)^{-1}$. Moreover, we have 
\begin{equation}
\begin{aligned}
\widetilde{\omega}_k &=\lim\limits_{z\rightarrow 0}z\omega_k(z)=\frac{1}{L}\Tr( \FC_k\widetilde{\BK}_{C}),
\\
\widetilde{\mu}_k &=\lim\limits_{z\rightarrow 0}z\mu_k(z)=\widetilde{\omega}_k+\frac{1}{M}\Tr(\BF_k\widetilde{\BK}_R),
\end{aligned}
\end{equation}
where $\widetilde{\BK}_C=\left( \frac{1}{\widetilde{\delta}}\bold{I}_{L}+\sum_{k=1}^{K}\frac{\FC_k}{L\widetilde{\mu}_k}   \right)^{-1}$. This indicates that
showing that $\widetilde{\delta}$, $\widetilde{\omega}_k$, and $\widetilde{\mu}_k$ can be obtained by solving~(\ref{basic_zf1}). Thus, we can obtain 
\begin{equation}
\begin{aligned}
\underline{\delta}&=\widetilde{\delta}=\lim\limits_{z\rightarrow 0}z\delta(z),~\underline{\omega}_k= \widetilde{\omega}_k =\lim\limits_{z\rightarrow 0}z\omega_k(z),
\\
\underline{\mu}_k &=\widetilde{\mu}_k =\lim\limits_{z\rightarrow 0}z\mu_k(z),
\end{aligned}
\end{equation}
 Moreover, observing that $\lim\limits_{z\rightarrow 0} z^2 \frac{- \mathrm{d} \mu_k}{\mathrm{d} z} = \underline{\mu}_k$, we can obtain 
\begin{equation}
\lim\limits_{z\rightarrow 0} \overline{C}=\lim\limits_{z\rightarrow 0} \frac{1}{M}\sum_{k=1}^{K}\frac{ p_k z^2 \frac{- \mathrm{d} \mu_k}{\mathrm{d} z}  }{(z+z\mu_k)^2}=\sum_{k=1}^{K}\frac{p_k}{M \underline{\mu}_k}.
\end{equation}
Noticing that the interference term is cancelled by ZF precoding, according to~(\ref{SINR_de}), we have
\begin{equation}
\begin{aligned}
\lim\limits_{z\rightarrow 0}  \overline{\gamma}_{\mathrm{RZF},k} &=\lim\limits_{z\rightarrow 0} \frac{p_k z^2 \mu_k^2}{  \sigma^2(z+z\mu_k)^2\overline{C} }
\\
&=p_k(\sigma^2\sum\limits_{l=1}^{K}\frac{p_l}{M\underline{\mu}_l})^{-1}=\overline{\gamma}_{\mathrm{ZF},k} ,
\end{aligned}
\end{equation}
and
\begin{equation}
\begin{aligned}
\gamma_{\mathrm{ZF},k} =\lim\limits_{z\rightarrow 0} \gamma_{\mathrm{RZF},k}  \xlongrightarrow[K \xlongrightarrow{(c_1,c_2)} \infty]{\mathcal{P}}\lim\limits_{z\rightarrow 0}  \overline{\gamma}_{\mathrm{RZF},k} =\overline{\gamma}_{\mathrm{ZF},k},
\end{aligned}
\end{equation}
which concludes Theorem~\ref{the_ZF}.
\end{proof}

\section{Proof of Proposition~\ref{the_ESR_common}} 
\label{proof_sinr_com}
\begin{proof} Similar to the proof of Theorem~\ref{uncommon_the}, the proof of Proposition~\ref{the_ESR_common} relies on the following second-order resolvent results, which can be proved by the similar approach as Lemma~\ref{lem_sec_resol}.  
\begin{lemma}\label{lem_sec_resol_com} (Second-order resolvent results for common correlation case) Given assumptions~\textbf{A.1}-\textbf{A.3} and $\| \BK\|<\infty$, there holds true
\begin{align*}
& \frac{1}{M}\E[\Tr(\FR\BQ)] \xrightarrow[]{(K)_{\infty}\rightarrow \infty}  \delta,~\frac{1}{L}\E[\Tr(\BZ\BZ^{H}\BQ)]\xrightarrow[]{(K)_{\infty}}  
\omega,\\
&  \frac{1}{M}\E[\Tr(\BF\BQ)] \xrightarrow[]{(K)_{\infty}}\kappa,
\\
&\Lambda=\frac{1}{L}\E[\Tr(\BZ\BZ^{H}\BQ\BZ\BZ^{H} \BQ)]\xrightarrow[]{(K)_{\infty}} 
\\
&\frac{1}{\Delta}(\Xi+ \frac{L}{M}\Xi\eta(\FT,\BU)[\bm{\Pi}_{\mathrm{com}}^{-1}\bm{\chi}(\BF)]_3+\frac{L\Xi_I }{M\delta^2}[\bm{\Pi}_{\mathrm{com}}^{-1}\bm{\chi}(\FR)]_3 ),
\\
&\Gamma(\BK)=\frac{1}{M}\E[\Tr(\BF\BQ\BK \BQ)] \xrightarrow[]{(K)_{\infty}}[\bm{\Pi}_{\mathrm{com}}^{-1}\bm{\chi}(\BK)]_2,
\\
&
\Theta(\BK)=\frac{1}{L}\Tr(\BZ\BZ^{H}\BQ\BK\BQ) \xrightarrow[]{(K)_{\infty}}[\bm{\Pi}_{\mathrm{com}}^{-1}\bm{\chi}(\BK)]_3.
\numberthis \label{sec_resol_com}
\end{align*}
\end{lemma}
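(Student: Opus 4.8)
The plan is to prove Lemma~\ref{lem_sec_resol_com} by transcribing the proof of Lemma~\ref{lem_sec_resol} to the common-correlation channel~(\ref{chcom}), after which Proposition~\ref{the_ESR_common} follows by feeding the resulting deterministic equivalents into~(\ref{simple_SINR})--(\ref{first_SINR}) exactly as in Appendix~\ref{uncommon_the_proof}. First I would establish the first-order resolvent results stated at the head of the lemma: for bounded deterministic $\BA$, $\frac{1}{M}\Tr(\BA\BQ)$ is asymptotically equal to $\frac{1}{M}\Tr(\BA\BPS_R)$, and in particular $\frac{1}{M}\Tr(\FR\BQ)\to\delta$, $\frac{1}{M}\Tr(\BF\BQ)\to\kappa$, $\frac{1}{L}\Tr(\BZ\BZ^{H}\BQ)\to\omega$, together with the companion averages $\frac{1}{L}\Tr(\cdot\,\BPS_C)$ and $\frac{1}{L}\Tr(\cdot\,\BPS_T)$ built from the solution $(\delta,\kappa,\omega,\overline{\kappa},\overline{\omega})$ of~(\ref{basic_eq2}). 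These are obtained by Gaussian integration by parts over the entries of the three independent Gaussian matrices $\BW,\BX,\BY$ inside $\BH$, with all variances controlled by the Poincar\'e--Nash inequality, precisely as in Lemma~\ref{trace_appro}; this step also fixes the notation $\chi(\BA,\BB)$, $\eta(\BA,\BB)$, $\Xi$, $\Xi_I$ and $\Delta=1-\Xi\eta(\FT,\FT)$ appearing in the statement.

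The main work is the second-order results. For $\Gamma(\BK)=\frac{1}{M}\E[\Tr(\BF\BQ\BK\BQ)]$ and $\Theta(\BK)=\frac{1}{L}\E[\Tr(\BZ\BZ^{H}\BQ\BK\BQ)]$ I would apply Gaussian integration by parts to $\E[\Tr(\BA\BQ\BK\BQ)]$ over the entries of $\BW,\BX,\BY$ --- equivalently, differentiate $\Tr(\BA(z\BI_M+\BH\BH^{H}+t\BK)^{-1})$ at $t=0$ and then differentiate the fixed-point system~(\ref{basic_eq2}) for the perturbed resolvent. Each differentiation produces one term proportional to another member of the triple $\{\tfrac{1}{M}\Tr(\FR\BQ\BK\BQ),\ \tfrac{1}{M}\Tr(\BF\BQ\BK\BQ),\ \tfrac{1}{L}\Tr(\BZ\BZ^{H}\BQ\BK\BQ)\}$, plus terms that, after the rank-one perturbation lemma~\cite[Lemma 14.3]{couillet2011random} and the first-order equivalents, collapse onto $\bm{\chi}(\BK)=[\chi(\FR,\BK),\chi(\FF,\BK),0]^{T}$. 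Collecting terms yields a closed $3\times3$ linear system whose coefficient matrix is exactly $\bm{\Pi}_{\mathrm{com}}$ (with $\Upsilon$ and $\Lambda$ as in Table~\ref{var_list}), so inverting it gives the claimed limits $\Gamma(\BK)\to[\bm{\Pi}_{\mathrm{com}}^{-1}\bm{\chi}(\BK)]_2$ and $\Theta(\BK)\to[\bm{\Pi}_{\mathrm{com}}^{-1}\bm{\chi}(\BK)]_3$. The term $\Lambda=\frac{1}{L}\E[\Tr(\BZ\BZ^{H}\BQ\BZ\BZ^{H}\BQ)]$ is the same computation with the random ``kernel'' $\BK=\BZ\BZ^{H}$: replacing one copy of $\BZ\BZ^{H}$ by its deterministic equivalent through the RIS-side resolvent $\BPS_C$ brings in the scalars $\Xi$ and $\Xi_I$, and the self-reference of $\Lambda$ through $\eta(\FT,\FT)$ produces the prefactor $1/\Delta$ and the exact combination displayed in the lemma. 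Finally, the convergences in expectation are upgraded to convergence in probability by the Poincar\'e--Nash inequality (all variances being $O(1/M)$ or $O(1/L)$).

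The step I expect to be the main obstacle is the bookkeeping together with the well-posedness of this linear system: verifying that the integration-by-parts calculation closes (no further independent trace functionals appear), that the coefficient matrix is precisely $\bm{\Pi}_{\mathrm{com}}$, and that $\Delta$ is bounded away from $0$ while $\bm{\Pi}_{\mathrm{com}}$ is invertible with a uniformly bounded inverse. I would handle this exactly as the uniqueness of the fixed-point system~(\ref{basic_eq2}) is handled --- via a Cauchy--Schwarz/positivity estimate showing the associated map is a strict contraction, which forces $\Delta>0$ and makes $\bm{\Pi}_{\mathrm{com}}^{-1}$ well defined. A shorter alternative, consistent with the paper's remark that Proposition~\ref{the_ESR_common} follows ``by setting $\FC_k=t_k\FC$ in Theorem~\ref{uncommon_the}'', is to deduce Lemma~\ref{lem_sec_resol_com} from Lemma~\ref{lem_sec_resol} under the substitution $\BF_k=u_k\BF$, $\FC_k=t_k\FC$ with $\FT$ and $\BU$ diagonal: then the $K$-dimensional matrix $\bm{\Delta}$ and the $(K+1)$-dimensional matrix $\bm{\Pi}$ of Theorem~\ref{uncommon_the} both act only through the directions $\FR$, $\BF$, $\BI_M$ and through $\eta$-type averages, collapsing to $\bm{\Pi}_{\mathrm{com}}$, while the vectors $\bm{\xi}_l$ and $\bm{\xi}_I$ reduce to the scalars $\Xi$ and $\Xi_I$; matching the two sides is then pure linear algebra.
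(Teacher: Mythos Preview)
Your proposal is correct and follows essentially the same approach as the paper: the paper states that Lemma~\ref{lem_sec_resol_com} ``can be proved by the similar approach as Lemma~\ref{lem_sec_resol}'', namely Gaussian integration by parts over the entries of $\BW,\BX,\BY$ together with the Nash--Poincar\'e inequality, yielding a closed linear system whose coefficient matrix is $\bm{\Pi}_{\mathrm{com}}$. Your alternative route via the substitution $\BF_k=u_k\BF$, $\FC_k=t_k\FC$ in Lemma~\ref{lem_sec_resol} is also exactly what the paper's Remark after Proposition~\ref{the_ZF_common} suggests.
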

Based on the decomposition in~(\ref{first_SINR}), we first obtain the following evaluation for $A_k$, $B_k$, and $\xi_{\mathrm{RZF}}$ in~(\ref{simple_SINR}) as 
\begin{align*}
&   A_k  \xlongrightarrow[(K)_{\infty}]{\mathcal{P}} t_k\omega+u_k\kappa,
\\
&
 B_k\xlongrightarrow[(K)_{\infty}]{\mathcal{P}} \sum_{l\neq k}\frac{p_l ( t_k   t_l \Lambda+ \frac{L(t_k u_l + t_l u_k)}{M}\Theta(\BF)+\frac{Lu_k u_l}{M} \Gamma(\BF)  )  ) }{L(1+t_l\omega+u_l\kappa)^2}
 \\
 &
   \xlongrightarrow[(K)_{\infty}]{\mathcal{P}} \sum_{l\neq k}\frac{p_l\Psi_{k,l}}{L(1+t_l\omega+u_l\kappa)^2},
 \\
 &
\xi_{\mathrm{RZF}} \xlongrightarrow[(K)_{\infty}]{\mathcal{P}}\sum_k  \frac{p_k  (\Theta(\BI_M)+ \Gamma(\BI_M)) }{M(1+t_k\omega+u_k\kappa)^2}   \xlongrightarrow[(K)_{\infty}]{\mathcal{P}} \overline{C}_{\mathrm{com}}.
 \numberthis \label{first_SINR_com}
\end{align*}
Thus, we can conclude~(\ref{SINR_com_exp}) and the rest of the proof is the same as Appendix~\ref{uncommon_the_proof}.
\end{proof}

\section{Proof of Proposition~\ref{the_ZF_common}}
\label{proof_the_ZF_common}
\begin{proof} Denote $\widetilde{\delta}=\lim\limits_{z\rightarrow 0}z\delta(z)$, $\widetilde{\omega} =\lim\limits_{z\rightarrow 0}z\omega(z)$, and $\widetilde{\kappa} =\lim\limits_{z\rightarrow 0}z\kappa(z)$. Thus, we have 
\begin{equation}
\begin{aligned}
\widetilde{\overline{\omega}}=\lim\limits_{z\rightarrow 0}z^{-1}\overline{\omega}=\frac{1}{L}\Tr(\FT\widetilde{\BPS}_{T}), 
\\
\widetilde{\overline{\kappa}}=\lim\limits_{z\rightarrow 0}z^{-1}\overline{\kappa}=\frac{1}{L}\Tr(\BU\widetilde{\BPS}_{T}),
\end{aligned}
\end{equation}
where $\widetilde{\BPS}_{T} =\left(\widetilde{\kappa}\BU+\widetilde{\omega}\bold{T}\right)^{-1}$.
Based on definition of $\delta$ in~(\ref{basic_eq2}), we have
\begin{align*}
&\widetilde{\delta}=\lim\limits_{z\rightarrow 0}z\delta(z)=\lim\limits_{z\rightarrow 0} \frac{1}{M}\Tr\Bigl(\FR  \numberthis
\\
&\times \left( \bold{I}_{M}  + \frac{\FF}{M(z+z\mu(z))}+ \frac{z \omega(z)\FR}{Mz\delta(z)(z+z\mu_k(z))}  \right)^{-1}\Bigr)
\\
&
=\frac{1}{M}\Tr\Bigl(\FR\left( \bold{I}_{M}+\frac{L z\omega z^{-1}\overline{\omega}}{M z\delta}\bold{R} + \frac{L z^{-1}\overline{\kappa}}{M}\BF\right)^{-1}\Bigr)
\\
&
=\frac{1}{M}\Tr(\bold{R}\widetilde{\BK}_R),
\end{align*}
where $\widetilde{\BPS}_R=\left( \bold{I}_{M}+ \frac{L\widetilde{\overline{\kappa}}}{M}\BF+\frac{L \widetilde{\omega}\widetilde{\overline{\omega}}}{M\widetilde{\delta}}\bold{R}   \right)^{-1}$. Moreover, we have 
\begin{equation}
\begin{aligned}
\underline{\delta}&=\widetilde{\delta}=\lim\limits_{z\rightarrow 0}z\delta(z),~\underline{\kappa}=\widetilde{\kappa} =\lim\limits_{z\rightarrow 0}z\kappa(z),
\\
\underline{\omega}&=\widetilde{\omega} =\lim\limits_{z\rightarrow 0}z\omega(z)=\frac{1}{L}\Tr(\BC\widetilde{\BPS}_C),
\end{aligned}
\end{equation}
where $ \underline{\BPS}_{C}=\left(\frac{1}{\widetilde{\delta}}\bold{I}_{L}+\widetilde{\overline{\omega}}\BC\right)^{-1}$. This indicates that $\widetilde{\delta}$, $\widetilde{\omega}$, $\widetilde{\kappa}$, $\widetilde{\overline{\omega}}$, and $\widetilde{\overline{\kappa}}$ can be obtained by solving~(\ref{basic_zfeq2}). Thus, we have
\begin{equation}
\begin{aligned}
\underline{\delta}&=\widetilde{\delta}=\lim\limits_{z\rightarrow 0}z\delta(z),~\underline{\omega}= \widetilde{\omega} =\lim\limits_{z\rightarrow 0}z\omega(z),
\\
\underline{\kappa} &=\widetilde{\kappa} =\lim\limits_{z\rightarrow 0}z\kappa(z),
\end{aligned}
\end{equation}
Moreover, observing that $\lim\limits_{z\rightarrow 0} z^2 \frac{- \mathrm{d} (t_k\omega+u_k\kappa)}{\mathrm{d} z} = t_k\underline{\omega}+u_k\underline{\kappa}$, we can obtain 
\begin{equation}
\begin{aligned}
&\lim\limits_{z\rightarrow 0} \overline{C}_{\mathrm{com}}=\lim\limits_{z\rightarrow 0} \frac{1}{M}\sum_{k=1}^{K}\frac{ p_k z^2 \frac{- \mathrm{d} (t_k\omega+u_k\kappa)}{\mathrm{d} z}  }{(z+z( t_k\omega+u_k\kappa))^2}
\\
&
=\sum_{k=1}^{K}\frac{p_k}{M ( t_k\underline{\omega}+u_k\underline{\kappa})}.
\end{aligned}
\end{equation}
Noticing that the interference term is cancelled by ZF precoding, according to~(\ref{SINR_com_exp}), we can obtain
\begin{equation}
\begin{aligned}
\lim\limits_{z\rightarrow 0}  \overline{\gamma}_{\mathrm{RZF},k} &=\lim\limits_{z\rightarrow 0} \frac{p_k z^2 (t_k\omega+u_k\kappa)^2}{  \sigma^2(z+z(t_k\omega+u_k\kappa))^2\overline{C}_{\mathrm{com}} }
\\
&=p_k(\sigma^2\sum\limits_{l=1}^{K}\frac{p_l}{M(t_l\omega+u_l\kappa)})^{-1}=\overline{\gamma}_{\mathrm{ZF},k} ,
\end{aligned}
\end{equation}
and
\begin{equation}
\begin{aligned}
\gamma_{\mathrm{ZF},k} =\lim\limits_{z\rightarrow 0} \gamma_{\mathrm{RZF},k}  \xlongrightarrow[K \xlongrightarrow{(c_1,c_2)} \infty]{\mathcal{P}}\lim\limits_{z\rightarrow 0}  \overline{\gamma}_{\mathrm{RZF},k} =\overline{\gamma}_{\mathrm{ZF},k},
\end{aligned}
\end{equation}
which concludes Proposition~\ref{the_ZF_common}.
\end{proof}

\section{Proof of Proposition~\ref{iid_pro}}
\label{proof_iid_pro}
\begin{proof} By setting $\FR=\FF=\BI_M$, $\FC=\BI_L$, $\BU=u\BI_K$, and $\FT=t\BI_K$ in Proposition~\ref{the_ESR_common}, the system of equation in~(\ref{basic_zfeq2}) degenerates to 
 \begin{equation}
 \label{basic_iid2}
 \begin{aligned}
 \begin{cases}
   &\underline{\delta}-(1+\frac{c_1t\underline{\omega}}{\underline{\mu}\underline{\delta}}+\frac{c_1 u}{\underline{\mu}} )^{-1}=0,
   \\
   & \underline{\omega}-(\frac{1}{\underline{\delta}}+\frac{c_2t}{\underline{\mu}}  )^{-1}=0,
   \\
 &\underline{\mu}-t\underline{\omega}-u\underline{\delta}=0.
\end{cases}
  \end{aligned}
 \end{equation}
 The solution of~(\ref{basic_iid2}) is $\underline{\mu}=(1-c_1)\beta(u,t,c_2)$. By~(\ref{SINR_ZF_com}), we have
 \begin{equation}
\overline{\gamma}_{\mathrm{ZF},k}= (\sigma^2\frac{K}{M\underline{\mu}})^{-1}=\frac{(1-c_1)\beta(u,t,c_2)}{c_1\sigma^2},
\end{equation}
which concludes the proof.
\end{proof}
\section{Proof of Proposition~\ref{pro_opt_z}}
\label{proof_pro_opt_z}
\begin{proof}
Proposition~\ref{pro_opt_z} can be proved by letting $\frac{\mathrm{d} R_{\mathrm{RZF}}}{\mathrm{d}z}=0$. With the homogeneous setting, we have $\Psi_{i,j}=\Psi$, $i,j=1,...,K$. Denote $\mu=t\omega+u\kappa$ and notice that
\begin{equation}
\mu=A-z\mu',
\end{equation}
with $A=\frac{(K-1)\Psi}{L(1+\mu)^2}$ and $\mu'=\frac{\mathrm{d} \mu}{\mathrm{d}z}$. We have $2\mu'=A'-z\mu''-\mu'+\mu'=A'-z\mu''$ with $\mu''=\frac{\mathrm{d}^2 \mu}{\mathrm{d} z^2}$, and
\begin{equation}
\begin{aligned}
\frac{\mathrm{d} \overline{\gamma}_{\mathrm{RZF}}}{\mathrm{d}z}
&=[(A'-z\mu'') (A- \frac{K\sigma^2}{M}\mu'  )
-(A-z\mu')
\\
&
\times (A'-\frac{K\sigma^2}{M}\mu''  )]
\mu(A- \frac{K\sigma^2}{M}\mu' )^{-2}
\\
&=(z-\frac{N\sigma^2}{M})(\mu'  A'-A\mu'').
\end{aligned}
\end{equation}
We can prove that $\mu'  A'-A\mu''\neq 0$ such that the solution of $\frac{\mathrm{d} R_{\mathrm{RZF}}}{\mathrm{d}z}=0$ is $z=\frac{N\sigma^2}{M}$, which concludes the proof.
\end{proof}

\section{Proof of Lemma~\ref{lem_sec_resol}}
\label{pro_lem_sec_resol}
\begin{proof} The proof of Lemma~\ref{lem_sec_resol} mainly relies on Gaussian tools, i.e., the \textit{Integration by Parts Formula}~\cite[Eq. (17)]{hachem2008new} and the~\textit{Nash-Poincar{\'e} Inequality}~\cite[Eq. (18)]{hachem2008new}, which are utilized for computation and error control, respectively. We first evaluate $\Upsilon(\BK)$. By the integration by parts formula, we have 
\begin{align*}
&\E [[\LC_l\By_l\Bh^{H}_l\BQ\BK\BQ\bold{A}\bold{X}\bold{B}]_{p,q}]

\\
&=\sum_{k=1}^{L}\E [ \frac{\alpha_{\overline{\mu},l}}{L} [  \LC_l \BZ_l^{H}\BQ\BK\BQ\BA\bold{X}\bold{B}]_{p,q}  
- [\frac{\alpha_{\overline{\mu},l}}{L}\Tr(\BZ_l\BZ^{H}_l\BQ\BK\BQ)
\\
&
+\frac{\alpha_{\overline{\mu},l}}{M}\Tr(\BF_l\BQ\BK\BQ)] [\LC_l\By_l\Bh_l^{H}\BQ\bold{A}\bold{X}\bold{B}]_{p,q}  ]\numberthis \label{hhQKQ}
\\
&
- \alpha_{\overline{\mu},l}\cov(\frac{1}{L}\Tr(\BZ_l\BZ^{H}_l\BQ)+\frac{1}{M}\Tr(\BF_l\BQ) 
\\
&
[\LC_l\By_l\Bh_l^{H}\BQ\BK\BQ\BA\bold{X}\BB]_{p,q}),
\end{align*}
where $\BA$ and $\BB$ are two deterministic matrices with bounded norm and $\alpha_{\overline{\mu},l}=(1+\E[\frac{1}{L}\Tr(\BZ_l\BZ^{H}_l\BQ)+\frac{1}{M}\Tr(\BF_l\BQ)])^{-1}=(1+\mu_l)^{-1}+o(1)$. 
By letting $p=q$, summing over $p$, and using Lemma~\ref{trace_appro}, we have
\begin{equation}
\label{eva_zeta_k}
\begin{aligned}
& \zeta_{k}(\BA,\BK)=\frac{1}{L}\E[\Tr(  \BZ^{H}_k\BQ\BK\BQ\BA\BX\LC_k)]=
\frac{\omega_k}{\delta}\Gamma(\BK,\BA\LR)
\\
&
+\sum_{l=1}^{K}\frac{\Xi_{k,l} }{L\delta(1+\mu_l)^2}\frac{\Tr(\BA\LR\BPS_R)}{M}\Upsilon_{l}(\BK)
\\
&-  \frac{\Tr(\BC_k\BPS_C\FC_l\BPS_C)}{\delta^2 L^2(1+\omega_l)}\frac{\Tr(\BA\LR\BPS_R)}{M} \Gamma(\BK,\FR)+o(1),
\end{aligned}
\end{equation}
where $\Gamma(\BL,\BK)=\frac{1}{M}\E[\Tr(\BL\BQ\BK\BQ)]$. Next, we will evaluate $\Gamma(\BL,\BK)$ and construct a system of equation with respect to $\Upsilon(\BK)$ and $\Gamma(\FR,\BK)$. By the identity $\BA-\BB=\BB(\BB^{-1}-\BA^{-1})\BA$, we have
\begin{equation}
\label{QKQL_ex}
\begin{aligned}
&\Gamma(\BL,\BK)=\frac{1}{M}\E[\Tr(\BK\BQ\BL(\BQ-\BPS_R))]
\\
&
+\frac{1}{M}\E[\Tr(\BK\BQ\BL\BPS_R)]
=
\frac{1}{M}\E[\Tr(\BK\BQ\BL\BPS_R)]+
\\
&
(\sum_{k=1}^{K}\frac{\omega_k\overline{\mu}_k}{M\delta}\E[\Tr(\BQ\BK\BQ\BL\BPS_R\FR)]+\frac{\overline{\mu}_k}{M}\E[\Tr(\BQ\BK\BQ\BL\BPS_R\BF_k)])
\\
&
-\frac{1}{M}\E[\Tr(\BK\BQ\BL\BPS_{R}\BH\BH^{H}\BQ)]
\\
&
=\frac{\Tr(\FR\BPS_{R}\BK\BPS_{R})}{M}+T_{1}-T_{2}+\BO(\frac{1}{M}),
\end{aligned}
\end{equation}
with $\overline{\mu}_k=(1+\mu_k)^{-1}$. By~(\ref{hhQKQ}) and~(\ref{eva_zeta_k}), we can evaluate $T_2$ by
\begin{align*}
T_{2}&=\frac{1}{M}\sum_{k=1}^{K} \Bh_k^{H}\BQ\BK\BQ\BL\BPS_R\Bh_k
=\sum_{k=1}^{K} [\frac{\omega_k}{M\delta} \Gamma(\BL\BPS_R\FR,\BK)
\\
&
+\frac{\overline{\mu}_k}{M}\Gamma(\BL\BPS_R\BF_k,\BK)] +[-\frac{\overline{\mu}_k^2}{M}  (\frac{\Xi_{I,k}}{\delta^2} \chi(\BL,\FR)+\chi(\BF_k,\BL))
\\
&\times
 \Upsilon_{k}(\BK) -\sum_l\Gamma(\BK,\FR)\frac{\overline{\mu}_l \overline{\mu}_k\Xi_{k,l}}{ML\delta^2}\chi(\BL,\FR)]+o(1)
\\
&
=R_{1}+R_{2}+o(1).\numberthis \label{T2exp}
\end{align*}
By plugging~(\ref{T2exp}) into~(\ref{QKQL_ex}) and noticing $T_{1}=R_{1}$, we can obtain $\Gamma(\BK,\BL)$
\begin{align*}
\label{QKQL}
&\Gamma(\BK,\BL)=\chi(\BK,\BL)+\sum_{k=1}^{K}\frac{  (\frac{\Xi_{I,k}}{\delta^2} \chi(\BL,\FR)+\chi(\BL,\BF_k))}{M(1+\mu_k)^2}
\\
&\times
 \Upsilon_{k}(\BK) + \frac{(\omega_k-\frac{\Xi_{I,k}}{\delta})}{M\delta^2(1+\mu_k)} \chi(\BL,\FR)\Gamma(\BK,\FR).\numberthis
\end{align*}
By letting $\BL=\BF_k$ in~(\ref{QKQL}) and adding to~(\ref{eva_zeta_k}), we can construct the system of equations
\begin{equation}
\label{PI_Eq}
\bm{\Pi}\bm{\Upsilon}(\BK)= \bm{\chi}(\BK)+\bm{\varepsilon},
\end{equation}
with $\|\bm{\varepsilon} \|=o(1)$ and $\bm{\Upsilon}(\BK)=[{\Upsilon}_1(\BK),{\Upsilon}_2(\BK),...,\Gamma(\FR,\BK)]^{T}$. We can obtain $\Upsilon_k(\BL)$ by solving~(\ref{PI_Eq}). By the same approach above, we can also construct the equations for $\Lambda_{k,l}$ as
\begin{equation}
\label{lambda_eq}
\Lambda_{k,l}=\sum_{m=1}^{K} \frac{\Xi_{k,m}\Lambda_{m,l}}{L(1+\mu_m)^2}   +  \frac{L\Bxi_{I}}{M\delta^2}\Upsilon_{l}(\FR)+\frac{L}{M}[\BXi \Upsilon(\BF_l)]_{k}+  o(1),
\end{equation}
such that $\Lambda_{k,l}$ can be obtained by solving~(\ref{lambda_eq}) and using the evaluation of  $\Upsilon_k(\BL)$.
\end{proof}

\appendices

\ifCLASSOPTIONcaptionsoff
  \newpage
\fi

\bibliographystyle{IEEEtran}

\bibliography{IEEEabrv,ref}

\begin{thebibliography}{10}
\providecommand{\url}[1]{#1}
\csname url@samestyle\endcsname
\providecommand{\newblock}{\relax}
\providecommand{\bibinfo}[2]{#2}
\providecommand{\BIBentrySTDinterwordspacing}{\spaceskip=0pt\relax}
\providecommand{\BIBentryALTinterwordstretchfactor}{4}
\providecommand{\BIBentryALTinterwordspacing}{\spaceskip=\fontdimen2\font plus
\BIBentryALTinterwordstretchfactor\fontdimen3\font minus
  \fontdimen4\font\relax}
\providecommand{\BIBforeignlanguage}[2]{{%
\expandafter\ifx\csname l@#1\endcsname\relax
\typeout{** WARNING: IEEEtran.bst: No hyphenation pattern has been}%
\typeout{** loaded for the language `#1'. Using the pattern for}%
\typeout{** the default language instead.}%
\else
\language=\csname l@#1\endcsname
\fi
#2}}
\providecommand{\BIBdecl}{\relax}
\BIBdecl

\bibitem{lu2014overview}
L.~Lu, G.~Y. Li, A.~L. Swindlehurst, A.~Ashikhmin, and R.~Zhang, ``An overview
  of massive {MIMO}: Benefits and challenges,'' \emph{IEEE journal of selected
  topics in signal processing}, vol.~8, no.~5, pp. 742--758, 2014.

\bibitem{liu10845870}
S.~Liu, X.~Yu, Z.~Gao, J.~Xu, D.~W.~K. Ng, and S.~Cui, ``Sensing-enhanced
  channel estimation for near-field xl-mimo systems,'' \emph{{IEEE} J. Sel.
  Areas Commun.}, vol.~43, no.~3, pp. 628--643, Jan. 2025.

\bibitem{wong2020fluid}
K.-K. Wong, A.~Shojaeifard, K.-F. Tong, and Y.~Zhang, ``Fluid antenna
  systems,'' \emph{{IEEE} Trans. Wireless Commun.}, vol.~20, no.~3, pp.
  1950--1962, Nov. 2020.

\bibitem{new2023information}
W.~K. New, K.-K. Wong, H.~Xu, K.-F. Tong, and C.-B. Chae, ``An
  information-theoretic characterization of {MIMO-FAS}: Optimization,
  diversity-multiplexing tradeoff and q-outage capacity,'' \emph{{IEEE} Trans.
  Wireless Commun.}, vol.~23, no.~6, pp. 5541--5556, 2023.

\bibitem{new2023fluid}
------, ``Fluid antenna system: {New} insights on outage probability and
  diversity gain,'' \emph{{IEEE} Trans. Wireless Commun.}, vol.~23, no.~1, pp.
  128--140, May. 2023.

\bibitem{9715064}
Z.~Chai, K.-K. Wong, K.-F. Tong, Y.~Chen, and Y.~Zhang, ``Port selection for
  fluid antenna systems,'' \emph{{IEEE} Commun. Lett.}, vol.~26, no.~5, pp.
  1180--1184, Feb. 2022.

\bibitem{10309171}
C.~Psomas, P.~J. Smith, H.~A. Suraweera, and I.~Krikidis, ``Continuous fluid
  antenna systems: {Modeling} and analysis,'' \emph{{IEEE} Commun. Lett.},
  vol.~27, no.~12, pp. 3370--3374, Nov. 2023.

\bibitem{kk_FAMA}
K.-K. Wong and K.-F. Tong, ``Fluid antenna multiple access,'' \emph{{IEEE}
  Trans. Wireless Commun.}, vol.~21, no.~7, pp. 4801--4815, Jul. 2021.

\bibitem{wee2023fluid}
W.~K. New, K.-K. Wong, H.~Xu, K.-F. Tong, C.-B. Chae, and Y.~Zhang, ``Fluid
  antenna system enhancing orthogonal and non-orthogonal multiple access,''
  \emph{{IEEE} Commun. Lett.}, vol.~28, no.~1, pp. 218--222, Jan. 2023.

\bibitem{wu2023movable}
Y.~Wu, D.~Xu, D.~W.~K. Ng, W.~Gerstacker, and R.~Schober, ``Movable
  antenna-enhanced multiuser communication: {Optimal} discrete antenna
  positioning and beamforming,'' \emph{arXiv preprint arXiv:2308.02304}, 2023.

\bibitem{sqin2024antenna}
H.~Qin, W.~Chen, Z.~Li, Q.~Wu, N.~Cheng, and F.~Chen, ``Antenna positioning and
  beamforming design for fluid antenna-assisted multi-user downlink
  communications,'' \emph{{IEEE} Wireless Commun. Lett.}, vol.~13, no.~4, pp.
  1073--1077, Apr. 2024.

\bibitem{8741198}
C.~Huang, A.~Zappone, G.~C. Alexandropoulos, M.~Debbah, and C.~Yuen,
  ``Reconfigurable intelligent surfaces for energy efficiency in wireless
  communication,'' \emph{{IEEE} Trans. Wireless Commun.}, vol.~18, no.~8, pp.
  4157--4170, Aug. 2019.

\bibitem{yu2020robust}
X.~Yu, D.~Xu, Y.~Sun, D.~W.~K. Ng, and R.~Schober, ``Robust and secure wireless
  communications via intelligent reflecting surfaces,'' \emph{{IEEE} J. Sel.
  Areas Commun.}, vol.~38, no.~11, pp. 2637--2652, Nov. 2020.

\bibitem{yao2025fas}
J.~Yao, J.~Zheng, T.~Wu, M.~Jin, C.~Yuen, K.-K. Wong, and F.~Adachi,
  ``{FAS-RIS} communication: {Model}, analysis, and optimization,''
  \emph{{IEEE} Trans. Veh. Technol.}, 2025.

\bibitem{lai2024fas}
X.~Lai, J.~Yao, K.~Zhi, T.~Wu, D.~Morales-Jimenez, and K.-K. Wong, ``{FAS-RIS}:
  {A} block-correlation model analysis,'' \emph{{IEEE} Trans. Veh. Technol.},
  2024.

\bibitem{zappone2021intelligent}
A.~Zappone, A.~Chaaban \emph{et~al.}, ``Intelligent reflecting surface enabled
  random rotations scheme for the {MISO} broadcast channel,'' \emph{{IEEE}
  Trans. Wireless Commun.}, vol.~20, no.~8, pp. 5226--5242, Aug. 2021.

\bibitem{han2019large}
Y.~Han, W.~Tang, S.~Jin, C.-K. Wen, and X.~Ma, ``Large intelligent
  surface-assisted wireless communication exploiting statistical {CSI},''
  \emph{{IEEE} Trans. Veh. Technol.}, vol.~68, no.~8, pp. 8238--8242, Sep.
  2019.

\bibitem{zhi2022two}
K.~Zhi, C.~Pan, H.~Ren, K.~Wang, M.~Elkashlan, M.~Di~Renzo, R.~Schober, H.~V.
  Poor, J.~Wang, and L.~Hanzo, ``Two-timescale design for reconfigurable
  intelligent surface-aided massive {MIMO} systems with imperfect {CSI},''
  \emph{{IEEE} Trans. Inf. Theory}, vol.~69, no.~5, pp. 3001--3033, May 2022.

\bibitem{he2021reconfigurable}
J.~He, K.~Yu, Y.~Shi, Y.~Zhou, W.~Chen, and K.~B. Letaief, ``Reconfigurable
  intelligent surface assisted massive {MIMO} with antenna selection,''
  \emph{{IEEE} Trans. Wireless Commun.}, vol.~21, no.~7, pp. 4769--4783, 2021.

\bibitem{xu2022antenna}
H.~Xu, C.~Ouyang, and H.~Yang, ``Antenna selection for {IRS}-aided {MU-MIMO},''
  \emph{{IEEE} Commun. Lett.}, vol.~26, no.~11, pp. 2690--2694, Aug. 2022.

\bibitem{abdullah2022low}
Z.~Abdullah, G.~Chen, S.~Lambotharan, and J.~A. Chambers, ``Low-complexity
  antenna selection and discrete phase-shifts design in {IRS}-assisted
  multiuser massive {MIMO} networks,'' \emph{{IEEE} Trans. Veh. Technol.},
  vol.~71, no.~4, pp. 3980--3994, 2022.

\bibitem{kammoun2019asymptoticit}
A.~Kammoun, L.~Sanguinetti, M.~Debbah, and M.-S. Alouini, ``Asymptotic analysis
  of {RZF} in large-scale {MU-MIMO} systems over {Rician} channels,''
  \emph{{IEEE} Trans. Inf. Theory}, vol.~65, no.~11, pp. 7268--7286, Nov. 2019.

\bibitem{kammoun2019asymptotic}
A.~Kammoun, M.~Debbah, M.-S. Alouini \emph{et~al.}, ``Asymptotic analysis of
  {RZF} over double scattering channels with {MMSE} estimation,'' \emph{{IEEE}
  Trans. Wireless Commun.}, vol.~18, no.~5, pp. 2509--2526, May. 2019.

\bibitem{wagner2012large}
S.~Wagner, R.~Couillet, M.~Debbah, and D.~T. Slock, ``Large system analysis of
  linear precoding in correlated {MISO} broadcast channels under limited
  feedback,'' \emph{{IEEE} Trans. Inf. Theory}, vol.~58, no.~7, pp. 4509--4537,
  Jul. 2012.

\bibitem{zhang2023irs}
X.~Zhang, D.~Xu, and S.~Song, ``{IRS-Aided} secure communications with linear
  precoding: {Performance} analysis and system design,'' in \emph{2023 IEEE
  Globecom Workshops (GC Wkshps)}.\hskip 1em plus 0.5em minus 0.4em\relax IEEE,
  2023, pp. 1439--1444.

\bibitem{zhang2022secrecy}
X.~Zhang and S.~Song, ``Secrecy analysis for {IRS}-aided wiretap {MIMO}
  communications: Fundamental limits and system design,'' \emph{{IEEE} Trans.
  Inf. Theory}, vol.~70, no.~6, pp. 4140--4159, Jun. 2024.

\bibitem{ramirez2024new}
P.~Ram{\'\i}rez-Espinosa, D.~Morales-Jimenez, and K.-K. Wong, ``A new spatial
  block-correlation model for fluid antenna systems,'' \emph{{IEEE} Trans.
  Wireless Commun.}, 2024.

\bibitem{wong2022closed}
K.~Wong, K.~Tong, Y.~Chen, and Y.~Zhang, ``Closed-form expressions for spatial
  correlation parameters for performance analysis of fluid antenna systems,''
  \emph{Electronics Letters}, vol.~58, no.~11, pp. 454--457, 2022.

\bibitem{lai2023performance}
X.~Lai, T.~Wu, J.~Yao, C.~Pan, M.~Elkashlan, and K.-K. Wong, ``On performance
  of fluid antenna system using maximum ratio combining,'' \emph{{IEEE} Commun.
  Lett.}, vol.~28, no.~2, pp. 402--406, Dec. 2023.

\bibitem{wong2022fast}
K.-K. Wong, K.-F. Tong, Y.~Chen, and Y.~Zhang, ``Fast fluid antenna multiple
  access enabling massive connectivity,'' \emph{{IEEE} Commun. Lett.}, vol.~27,
  no.~2, pp. 711--715, Feb. 2023.

\bibitem{wong2020performance}
K.~K. Wong, A.~Shojaeifard, K.-F. Tong, and Y.~Zhang, ``Performance limits of
  fluid antenna systems,'' \emph{{IEEE} Commun. Lett.}, vol.~24, no.~11, pp.
  2469--2472, Nov. 2020.

\bibitem{papazafeiropoulos2016impact}
A.~K. Papazafeiropoulos, ``Impact of general channel aging conditions on the
  downlink performance of massive {MIMO},'' \emph{{IEEE} Trans. Veh. Technol.},
  vol.~66, no.~2, pp. 1428--1442, Feb. 2017.

\bibitem{yang2014confidential}
N.~Yang, G.~Geraci, J.~Yuan, and R.~Malaney, ``Confidential broadcasting via
  linear precoding in non-homogeneous {MIMO} multiuser networks,'' \emph{{IEEE}
  Trans. Commun.}, vol.~62, no.~7, pp. 2515--2530, Jul. 2014.

\bibitem{kammoun2020asymptotic}
A.~Kammoun, A.~Chaaban, M.~Debbah, M.-S. Alouini \emph{et~al.}, ``Asymptotic
  max-min {SINR} analysis of reconfigurable intelligent surface assisted {MISO}
  systems,'' \emph{{IEEE} Trans. Wireless Commun.}, vol.~19, no.~12, pp.
  7748--7764, Apr. 2020.

\bibitem{zhang_tifs}
X.~Zhang, D.~Xu, J.~Wang, C.~Jiang, S.~Song, and M.~Di~Renzo, ``{RIS}-aided
  secure communications with regularized zero-forcing precoding,'' \emph{{IEEE}
  Trans. Inf. Forensics Security}, vol.~20, pp. 5843--5858, Jun. 2025.

\bibitem{hachem2008new}
W.~Hachem, O.~Khorunzhiy, P.~Loubaton, J.~Najim, and L.~Pastur, ``A new
  approach for mutual information analysis of large dimensional multi-antenna
  channels,'' \emph{{IEEE} Trans. Inf. Theory}, vol.~54, no.~9, pp. 3987--4004,
  Sep. 2008.

\bibitem{zhang2021bias}
X.~Zhang and S.~Song, ``Bias for the trace of the resolvent and its application
  on non-{Gaussian} and non-centered {MIMO} channels,'' \emph{{IEEE} Trans.
  Inf. Theory}, vol.~68, no.~5, pp. 2857--2876, May. 2022.

\bibitem{kk_rmt}
C.-K. Wen and K.-K. Wong, ``Asymptotic analysis of spatially correlated {MIMO}
  multiple-access channels with arbitrary signaling inputs for joint and
  separate decoding,'' \emph{{IEEE} Trans. Inf. Theory}, vol.~53, no.~1, pp.
  252--268, Jan. 2006.

\bibitem{wen2012deterministic}
C.-K. Wen, G.~Pan, K.-K. Wong, M.~Guo, and J.-C. Chen, ``A deterministic
  equivalent for the analysis of {non-Gaussian} correlated {MIMO} multiple
  access channels,'' \emph{{IEEE} Trans. Inf. Theory}, vol.~59, no.~1, pp.
  329--352, Jan. 2013.

\bibitem{zhang2013capacity}
J.~Zhang, C.-K. Wen, S.~Jin, X.~Gao, and K.-K. Wong, ``On capacity of
  large-scale {MIMO} multiple access channels with distributed sets of
  correlated antennas,'' \emph{{IEEE} J. Sel. Areas Commun.}, vol.~31, no.~2,
  pp. 133--148, Jan. 2013.

\bibitem{zhang2022outage}
X.~Zhang, X.~Yu, and S.~Song, ``Outage probability and finite-{SNR} {DMT}
  analysis for {IRS}-aided {MIMO} systems: How large {IRSs} need to be?''
  \emph{{IEEE} J. Sel. Topics Signal Process.}, vol.~16, no.~5, pp. 1070--1085,
  Aug. 2022.

\bibitem{zhang2023asymptotic}
X.~Zhang and S.~Song, ``Asymptotic mutual information analysis for
  double-scattering {MIMO} channels: {A} new approach by {Gaussian} tools,''
  \emph{{IEEE} Trans. Inf. Theory}, vol.~69, no.~9, pp. 5497--5527, Sep. 2023.

\bibitem{zhang2021large}
J.~Zhang, J.~Liu, S.~Ma, C.-K. Wen, and S.~Jin, ``Large system achievable rate
  analysis of {RIS}-assisted {MIMO} wireless communication with statistical
  {CSIT},'' \emph{{IEEE} Trans. Wireless Commun.}, vol.~20, no.~9, pp.
  5572--5585, Sept. 2021.

\bibitem{sun2011capacity}
L.~Sun, P.~Li, M.~R. McKay, and R.~D. Murch, ``Capacity of {MIMO} systems with
  mutual coupling: {Transmitter} optimization with dual power constraints,''
  \emph{{IEEE} Trans. Signal Process.}, vol.~60, no.~2, pp. 848--861, Feb.
  2011.

\bibitem{hoydis2013massive}
J.~Hoydis, S.~Ten~Brink, and M.~Debbah, ``Massive {MIMO} in the {UL/DL} of
  cellular networks: {H}ow many antennas do we need?'' \emph{{IEEE} J. Sel.
  Areas Commun.}, vol.~31, no.~2, pp. 160--171, Feb. 2013.

\bibitem{dua2006receive}
A.~Dua, K.~Medepalli, and A.~J. Paulraj, ``Receive antenna selection in {MIMO}
  systems using convex optimization,'' \emph{{IEEE} Trans. Wireless Commun.},
  vol.~5, no.~9, pp. 2353--2357, Sep. 2006.

\bibitem{elkhalil2018measurement}
K.~Elkhalil, A.~Kammoun, T.~Y. Al-Naffouri, and M.-S. Alouini, ``Measurement
  selection: {A} random matrix theory approach,'' \emph{{IEEE} Trans. Wireless
  Commun.}, vol.~17, no.~7, pp. 4899--4911, May. 2018.

\bibitem{bertsekas1997nonlinear}
D.~P. Bertsekas, ``Nonlinear programming,'' \emph{Journal of the Operational
  Research Society}, vol.~48, no.~3, pp. 334--334, 1997.

\bibitem{danufane2021path}
F.~H. Danufane, M.~Di~Renzo, J.~De~Rosny, and S.~Tretyakov, ``On the path-loss
  of reconfigurable intelligent surfaces: {An} approach based on {Green’s}
  theorem applied to vector fields,'' \emph{{IEEE} Trans. Commun.}, vol.~69,
  no.~8, pp. 5573--5592, Aug. 2021.

\bibitem{couillet2011random}
R.~Couillet and M.~Debbah, \emph{Random Matrix Methods for Wireless
  Communications}.\hskip 1em plus 0.5em minus 0.4em\relax Cambridge University
  Press, 2011.

\bibitem{billingsley2017probability}
P.~Billingsley, \emph{Probability and Measure}.\hskip 1em plus 0.5em minus
  0.4em\relax John Wiley \& Sons, 2017.

\end{thebibliography}

\end{document}